\newcommand{\psif}{\psi_{f}}
\newcommand{\p}{\varphi}
\newcommand{\s}{\sigma}
\newcommand{\eset}{\emptyset}
\newcommand{\mdl}{\models}
\newcommand{\tr}{^{\dagger}}
\newcommand{\assign}{\ensuremath{\mathfrak{a}}\xspace}
\newcommand{\reduct}{\ensuremath{\Imf_{\mid_{\Sigma}}}\xspace}
\newcommand{\last}{\ensuremath{\textit{last}}\xspace}
\newcommand{\Pre}{\ensuremath{\textit{Pre}}\xspace}
\newcommand{\BadPre}{\ensuremath{\textit{BadPre}}\xspace}
\newcommand{\Ext}{\ensuremath{\textit{Ext}}\xspace}
\newcommand{\Clo}{\ensuremath{\textit{Clo}}\xspace}
\newcommand{\modelsinf}{\ensuremath{\models_{i}}\xspace}
\newcommand{\modelsfin}{\ensuremath{\models_{f}}\xspace}
\newcommand{\Traceinf}{\ensuremath{\textit{Trace}_{i}}\xspace}
\newcommand{\Tracefin}{\ensuremath{\textit{Trace}_{f}}\xspace}
\newcommand{\equivfin}[2]{\ensuremath{#1 \equiv_{f} #2}\xspace}
\newcommand{\equivinf}[2]{\ensuremath{#1  \equiv_{i} #2}\xspace}
\renewcommand{\U}{\ensuremath{\mathbin{\mathcal{U}}}\xspace}
\newcommand{\Until}{\ensuremath{\mathbin{\mathcal{U}}}\xspace}
\newcommand{\UntilP}{\ensuremath{\Until^{+}}\xspace\!}
\newcommand{\Release}{\ensuremath{\mathbin{\mathcal{R}}}\xspace}\newcommand{\ReleaseP}{\ensuremath{\Release^{+}}\xspace\!}
\newcommand{\Next}{{\ensuremath{\raisebox{0.25ex}{\text{\scriptsize{$\bigcirc$}}}}}}
\newcommand{\Wnext}{\CIRCLE}
\newcommand{\D}{\Diamond}
\newcommand{\DiamondP}{\ensuremath{\Diamond^{+}}\xspace}
\newcommand{\B}{\Box}
\newcommand{\monodic}{\ensuremath{\mathop{\ooalign{$\Box$ \cr \kern0.57ex \raisebox{0.2ex}{\scalebox{0.55}{$1$}}}\rule{0pt}{1.5ex} \kern-0.7ex}}\xspace}
\newcommand{\Nextone}{\ensuremath{\mathop{\ooalign{$\Next$ \cr \kern0.57ex \raisebox{0.3ex}{\scalebox{0.55}{$1$}}}\rule{0pt}{1.5ex} \kern-0.7ex}}\xspace}
\newcommand{\Wnextone}{\ensuremath{\mathop{\ooalign{$\Wnext$ \cr \kern0.57ex \raisebox{0.2ex}{\scalebox{0.55}{\textcolor{white}{$1$}}}}\rule{0pt}{1ex} \kern-0.7ex}}\xspace}
\newcommand{\EL}{\ensuremath{\mathcal{EL}}\xspace}
\newcommand{\DLite}{\textsl{DL-Lite}\xspace}
\newcommand{\ALC}{\ensuremath{\smash{\mathcal{ALC}}}\xspace}
\newcommand{\NC}{\ensuremath{{\sf N_C}}\xspace}
\newcommand{\NI}{\ensuremath{{\sf N_I}}\xspace}
\newcommand{\NR}{\ensuremath{{\sf N_R}}\xspace}
\newcommand{\NPr}{\ensuremath{\textsf{N}_{\textsf{P}}\xspace}}
\newcommand{\cl}{\ensuremath{\mathsf{cl}}\xspace}
\newcommand{\clc}{\ensuremath{\cl^{\mathsf{c}}}\xspace}
\newcommand{\clf}{\ensuremath{\cl^{\mathsf{f}}}\xspace}
\newcommand{\cls}{\ensuremath{\cl^\ast}\xspace}
\newcommand{\individuals}[1]{\NI(#1)}
\newcommand{\types}[1]{\mathsf{tp}(#1)}
\newcommand{\real}{\ensuremath{{\sf real}}\xspace}
\newcommand{\LTL}{\ensuremath{\textsl{L\!TL}}\xspace}
\newcommand{\LTLf}{\ensuremath{\LTL}\xspace}
\newcommand{\QTLfr}[3]{\ensuremath{{\textsl{T}_{#1}\mathcal{QL}^{#2}_{#3}}}\xspace}
\newcommand{\TALC}{\ensuremath{\smash{\textsl{T}_{\U}{\mathcal{ALC}}}}\xspace}
\newcommand{\TALCf}{\ensuremath{\smash{\textsl{T}_{\U}{\mathcal{ALC}}}}\xspace}
\newcommand{\TALCfk}{\ensuremath{\smash{\textsl{T}_{\U}{\mathcal{ALC}}}}\xspace}
\newcommand{\finit}[2]{\textnormal{\ensuremath{\smash{\textsf{F}_{#1 #2}}}\xspace}}
\newcommand{\infinit}[2]{\textnormal{\ensuremath{\smash{\textsf{I}_{#1 #2}}}\xspace}}
\newcommand{\NLogSpace}{\textsc{NLogSpace}}
\newcommand{\NP}{\textsc{NP}}
\newcommand{\ExpTime}{\textsc{ExpTime}}
\newcommand{\NExpTime}{\textsc{NExpTime}}
\newcommand{\ExpSpace}{\textsc{ExpSpace}}
\newcommand{\Cmf}{\ensuremath{\mathfrak{C}}\xspace}
\newcommand{\Emf}{\ensuremath{\mathfrak{E}}\xspace}
\newcommand{\Fmf}{\ensuremath{\mathfrak{F}}\xspace}
\newcommand{\Imf}{\ensuremath{\mathfrak{I}}\xspace}
\newcommand{\Mmf}{\ensuremath{\mathfrak{M}}\xspace}
\newcommand{\Rmf}{\ensuremath{\mathfrak{R}}\xspace}
\newcommand{\Tmf}{\ensuremath{\mathfrak{T}}\xspace}
\newcommand{\FEmf}{\ensuremath{\Fmf \cdot_{E} \Emf}\xspace}
\newcommand{\FEmfC}{\ensuremath{\Fmf \cdot_{E} \Emf}\xspace}
\newcommand{\FImf}{\ensuremath{\Fmf \cdot \Imf}\xspace}
\newcommand{\FEImf}{\ensuremath{\Fmf \cdot_{E} \Imf}\xspace}
\newcommand{\Cmc}{\ensuremath{\mathcal{C}}\xspace}
\newcommand{\Dmc}{\ensuremath{\mathcal{D}}\xspace}
\newcommand{\Emc}{\ensuremath{\mathcal{E}}\xspace}
\newcommand{\Fmc}{\ensuremath{\mathcal{F}}\xspace}
\newcommand{\Imc}{\ensuremath{\mathcal{I}}\xspace}
\newcommand{\Lmc}{\ensuremath{\mathcal{L}}\xspace}
\newcommand{\Mmc}{\ensuremath{\mathcal{M}}\xspace}
\newcommand{\Smc}{\ensuremath{\mathcal{S}}\xspace}
\newcommand{\Tmc}{\ensuremath{\mathcal{T}}\xspace}
\newcommand{\Umc}{\ensuremath{\mathcal{U}}\xspace}
\newcommand{\FEmc}{\ensuremath{\Fmc \cdot_{E} \Emc}}
\newcommand{\FImc}{\ensuremath{\Fmc \cdot \Imc}}
\newcommand{\FEImc}{\ensuremath{\Fmc \cdot_{E} \Imc}\xspace}
\newcommand{\Nbl}{\ensuremath{\mathbb{N}}\xspace}
\newcommand{\Nbb}{\ensuremath{\mathbb{N}}\xspace}
\newcommand{\I}{\ensuremath{\mathcal{I}}}
\newcommand{\T}{\ensuremath{\mathcal{T}}}
\begin{document}

\title[First-order Temporal Logic on Finite Traces]{First-order Temporal Logic on Finite Traces}
\subtitle{Semantic Properties, Decidable Fragments, and Applications}

\newcommand{\afilTheUnibz}{
\affiliation{%
  \institution{KRDB Research Centre, Faculty of Computer Science, Free University of Bozen-Bolzano}
  \city{Bolzano}
  \country{Italy}}
}

\newcommand{\afilUnib}{
\affiliation{%
  \institution{Department of Informatics, University of Bergen}
  \city{Bergen}
  \country{Norway}}
}

\author{Alessandro Artale}
\orcid{0000-0002-3852-9351}
\afilTheUnibz
\email{artale@inf.unibz.it}

\author{Andrea Mazzullo}
\orcid{0000-0001-8512-1933}
\afilTheUnibz
\email{mazzullo@inf.unibz.it}

\author{Ana Ozaki}
\orcid{0000-0002-3889-6207}
\afilUnib
\email{ana.ozaki@uib.no}
 
\pdfinfo{
   /Author (Alessandro Artale and Andrea Mazzullo and Ana Ozaki)
   /Title  (First-order Temporal Logic on Finite Traces)
   /Keywords (Temporal Logics;Automated Reasoning)
}

%
%

\begin{CCSXML}
<ccs2012>
<concept>
<concept_id>10003752.10003790.10003793</concept_id>
<concept_desc>Theory of computation~Modal and temporal logics</concept_desc>
<concept_significance>500</concept_significance>
</concept>
<concept>
<concept_id>10010147.10010178.10010187.10010193</concept_id>
<concept_desc>Computing methodologies~Temporal reasoning</concept_desc>
<concept_significance>500</concept_significance>
</concept>
</ccs2012>
\end{CCSXML}

\ccsdesc[500]{Theory of computation~Modal and temporal logics}
\ccsdesc[500]{Computing methodologies~Temporal reasoning}
%
%

\keywords{
    Temporal Logics,
    Finite Traces,
    Automated Reasoning.}

\renewcommand{\shortauthors}{A. Artale, A. Mazzullo, A. Ozaki}

\newtheorem{claim}{Claim} 

\begin{abstract}

Formalisms based on temporal logics interpreted over finite strict linear orders, known in the literature as \emph{finite traces}, have been used for temporal specification in automated planning, process modelling, (runtime) verification and synthesis of programs, as well as in knowledge representation and reasoning.
In this paper, we focus on \emph{first-order temporal logic on finite traces}.
We first investigate preservation of equivalences and satisfiability of formulas between finite and infinite traces, by providing  a set of semantic and syntactic conditions to guarantee when the distinction between reasoning in the two
cases
can be blurred.
Moreover,
we show that
the
satisfiability problem
on finite traces
for
several decidable fragments
of first-order temporal logic
is
$\ExpSpace$-complete, as in the infinite trace case, while it decreases to $\NExpTime$ when
finite traces bounded in the number of instants
are considered.
This leads also to new complexity results for temporal description logics over finite traces. Finally, we
investigate
applications
to planning and verification, in particular by establishing connections with the notions of 
insensitivity to infiniteness and safety from the literature.


\end{abstract}

\maketitle

\section{Introduction}
\label{sec:intro}


The study of formalisms based on propositional or first-order temporal
logics on linear flows of time has found a wide spectrum of
applications, ranging from verification of programs and model checking~\cite{Pnu,ManPnu,BaiKat},
to automated planning~\cite{BacKab98,BacKab,BaiMci}, process
modelling~\cite{AalPes,Maggi.et.al:BPM11}, and knowledge
representation.
In the latter context, several decidable fragments of first-order
temporal logic with the linear time operator \emph{until} $\Until$,
denoted $\QTLfr{\Until}{}{}$, have been
investigated~\cite{HodEtAl2,HodEtAl03,GabEtAl}.
\emph{Temporal description
  logics}~(see~\cite{WolZak,Baader:et:al:03,ArtFra2,AKKRWZ:TIME17,LutEtAl}
and references therein), obtained by suitably combining (linear)
temporal logic operators with \emph{description logics} (\emph{DLs})
constructs, are well-known examples of such fragments.
These logics usually lie within the \emph{two-variable monodic}
fragment of $\QTLfr{\Until}{}{}$, denoted as
$\QTLfr{\Until}{2}{\monodic}$, obtained by restricting the language to
formulas having at most two variables, and so that the temporal
operators are applied only to subformulas with at most one free
variable.
%
For instance, by
using the reflexive temporal operators $\Diamond^{+}$, meaning \emph{sometimes
  in the future}, and $\Box^{+}$, meaning \emph{always in the future},
the following formula
%
\begin{equation}
\label{eq:ex1}
\small
\forall x \big( \mathit{Reviewer}(x) \to \Diamond^{+} \Box^{+} \forall y \big( \mathit{Submission}(y) \land \mathit{reviews}(x, y) \to \Diamond^{+}
\mathit{Evaluated}(y)\big) \big)
\end{equation}
is a $\QTLfr{\Until}{2}{\monodic}$ formula stating that every reviewer
will reach a (present or future) moment after which all the
submissions they review will be eventually evaluated.
%
Other decidable languages considered
in the literature are the \emph{monadic} fragment
$\QTLfr{\Until}{mo}{}$ and the \emph{one-variable} fragment
$\QTLfr{\Until}{1}{}$, with formulas having, respectively, at most
unary predicates and at most one variable.
%
The complexity of the satisfiability problem ranges from \ExpSpace-complete, for $\QTLfr{\Until}{2}{\monodic}$, $\QTLfr{\Until}{mo}{}$ and $\QTLfr{\Until}{1}{}$~\cite{GabEtAl,HodEtAl03},
down to $\NExpTime$- or $\ExpTime$-complete, for temporal extensions of the DL $\ALC$ without temporalised roles and with restrictions on
the application of
temporal operators~\cite{LutEtAl,BaaEtAl2}.
Recent
work on temporal extensions of lightweight DLs in the $\DLite$ and
$\EL$ families~\cite{ArtEtAl3, guti:et:al:ijcai15}, used in conceptual
data modelling and underlying prominent profiles of the OWL standard,
shows that the complexity of reasoning can be even lowered down to
$\NP$ or $\NLogSpace$.

A widely studied semantics for temporal logics is defined on
structures based on the strict linear order of the natural
numbers~\cite{Pnu,Gol87,GabEtAl94}.  However, linear temporal
structures with only a finite number of time points, often called
\emph{finite traces}, have been investigated as
well~\cite{GabEtAl94,CerEtAl}, receiving a renewed interest in the
literature~\cite{DegVar1,FiondaGreco:aaai16,FioGre18}.  The finiteness
of the time dimension represents indeed a fairly natural restriction
for several applications.  In automated planning, or when modelling
(business) processes with a declarative formalism, we consider finite
action plans and terminating services, often within a given temporal
bound~\cite{BauHas,DegEtAl2,DegEtAl,CamEtAl}.  In
runtime verification only the current finite behaviour of the system
is taken into account, while infinite models are considered when checking
whether a given requirement is satisfied in some/all infinite
extensions of the finite trace~\cite{GiaHav,BauEtAl}.
These
needs from critical applications of
temporal logics can be reflected by a semantics based on finite
traces, with formulas having different satisfaction conditions
compared to the infinite case.
For instance, by using the formula $\last$ to refer to the last time
point of a finite trace, we have that Formula~(\ref{eq:ex1}) above is
equivalent
on finite traces to
\begin{equation}
\label{eq:ex2}
\small
	\forall x \big( \mathit{Reviewer}(x) \to \Diamond^{+} \big( \last \land \forall y \big( \mathit{Submission}(y) \land
 \mathit{reviews}(x,y)
	\to
	\mathit{Evaluated}(y)
	\big) \big) \big),
\end{equation}
stating that every reviewer will eventually reach a ``deadline'', represented by the $\last$ formula, when all the submission they review are
evaluated.
This follows from the fact that, on finite traces, formulas $\Diamond^{+}\Box^{+}\p$ and $\Diamond^{+} (\last \land \p)$ are equivalent~\cite{DegVar1}, and $\last \land \forall x ( \p \to \Diamond^{+} \psi)$ is in turn equivalent to $\last \land \forall x ( \p \to \psi)$.
%
%

This work focuses on first-order temporal logic on finite traces,
defined by extending the first-order language with linear time
operators interpreted on finite traces.
Part of the results contained in the current paper, establishing
bridges between finite and infinite traces semantics, have partially
been presented in~\cite{ArtEtAl18,ArtEtAl19,ArtEtAl20}.
We
comment on the main results presented in this paper while illustrating
its
structure.
%
Section~\ref{sec:relwork} is
devoted to a discussion of related work in this area.  Then, after
introducing in Section~\ref{sec:tdlf} the required preliminary notions
about first-order (linear) temporal logic, we provide the following
main contributions.
%
In Section~\ref{sec:dist}, we focus on bridging reasoning on finite
and infinite traces semantics.  In general, indeed, the sets of
formulas equivalent on finite and on infinite traces do not coincide,
as witnessed by the following examples: the formula
$\Diamond^{+} \Box \bot$, forcing the existence of a (present or
future) last instant of time, is equivalent to $\top$ only on finite
traces, whereas $\Box^{+} \Diamond \top$, stating that there is always
a future instant after the current one, is equivalent to $\top$ on
infinite, but not on finite, traces.  We establish here semantic and
syntactic conditions to guarantee that
two formulas equivalent on finite (respectively, infinite) traces are
also equivalent on infinite (respectively, finite) traces.
We also devise a semantic criterion and a syntactically defined class
of formulas to guarantee preservation of satisfiability from finite to
infinite traces.
%
%
In Section~\ref{sec:tdlf-to-tdl}, we
study the complexity of reasoning over finite traces for the
\emph{two-variable monodic} fragment $\QTLfr{\Until}{2}{\monodic}$,
the \emph{monadic} fragment $\QTLfr{\Until}{mo}{}$ and the
\emph{one-variable} fragment $\QTLfr{\Until}{1}{}$, showing that the
complexity
remains $\ExpSpace$-complete on finite traces, but lowers down to
$\NExpTime$ if we restrict to traces with a bound $k$ (given in
binary) on the number of instants.  Moreover, we show that these
fragments enjoy two kinds of \emph{bounded model properties}:
the \emph{bounded trace property}, which limits the finite traces
satisfying a formula to be at most double exponential in the size of
the input formula, and the \emph{bounded domain property}, which
limits the number of domain elements in a $k$-bounded trace (finite
trace with at most $k$ instants) to be at most double exponential both
in the size of the input formula and in the (binary representation) of
$k$.
%
%
In
the context of temporal DLs, we show
that complexity results similar to the infinite trace case, as well as
bounded model properties, also hold for the temporal DL $\TALCf$ when
interpreted on finite traces.  We further show a more challenging
result, i.e., that the complexity further reduces to $\ExpTime$ if
only global TBox $\TALCf$ axioms together with a temporal ABox are
interpreted on $k$-bounded traces.
Finally, in Section~\ref{sec:appl}, we investigate connections with
the planning and verification literature.
Concerning 
the former scenario, we study in our setting the notion of
\emph{insensitivity to infiniteness}~\cite{DegEtAl}, a property
applying to formulas that, once satisfiable on finite traces, remain
satisfiable on infinite traces verifying an \emph{end event} forever
and falsifying all other atomic formulas.  
%
Concerning the verification aspect, we establish connections between
the finite and infinite trace characterisations, as introduced in
Section~\ref{sec:dist}, and the notions of
\emph{safety}~\cite{Sis,BaiKat}, as well as other related notions from
the literature on runtime verification~\cite{BauEtAl}.
Section~\ref{sec:conc} concludes the paper.


\section{Related Work}
\label{sec:relwork}

Finite traces~\cite{GabEtAl94,CerEtAl,Ros18} have
regained momentum
in
formalisms for AI applications.
%
%
Together with (propositional) \emph{linear temporal logic} (\emph{LTL})~\cite{Pnu}, also the more expressive 
\emph{linear dynamic logic}~\cite{HenThi99}, \emph{alternating time logic}~\cite{AluEtAl02}, and \emph{mu-calculus}~\cite{Koz83,BarEtAl86} have been
investigated on semantics based on finite traces~\cite{GioEtAl01,DegEtAl20,BelEtAl18,BelEtAl19,LiuEtAl16}.
To deal with uncertainty in dynamic systems, a \emph{probabilistic} version of LTL over finite traces has been proposed as well~\cite{MagEtAl20}, while a recent paper
addresses problems in declarative process mining by introducing
\emph{metric temporal logic} on finite traces~\cite{DegEtAl21}.
Significant areas of applications for LTL
on finite traces are indeed 
in
the planning domain~\cite{CerMay,DegEtAl,CamEtAl,CerMay98,BaiMci,GerEtAl09,CalEtAl02},
in (declarative) business process modelling, as well as in runtime verification and monitoring~\cite{BauEtAl,Ros18,BarEtAl18,DegEtAl2,DegEtAl21a}.
In addition, LTL
on finite traces
has
found applications in the context of
synthesis~\cite{DegVar2,XiaoEtAl21,DegEtAl20a,DegEtAl21a,CamEtAl18,CamEtAl18a},
multi-agent systems~\cite{GutEtAl17,KonLom18,GutEtAl21},
temporal databases~\cite{SaaLip88},
and
answer-set programming~\cite{CabEtAl18,CabSch19,CabEtAl19}.
%
The problem of establishing connections between finite and infinite traces semantics
is also not new to the literature.
Several approaches have been proposed to show when satisfiability of formulas is preserved from the finite to the infinite case, so to reuse on finite traces algorithms developed for the infinite case~\cite{BauHas,DegEtAl}.
In this work, we determine conditions that preserve satisfiability in the other direction as well, from infinite to finite traces, thus leaning towards research directions
that aim at
the application of efficient finite traces reasoners to the infinite case~\cite{LiEtAl,FioGre18,ShiEtAl20}.

Given their connections with first-order temporal logic and their
relevance to the present article, we separately discuss related work
on temporal DLs.  For a general overview, we refer to the already
mentioned
surveys~\cite{WolZak,Baader:et:al:03,ArtFra2,AKKRWZ:TIME17,LutEtAl}.
%
In the linear time case, a wide body of research has focused on temporal DLs with semantics based 
on the natural numbers~\cite{BaaEtAl2,DBLP:conf/dlog/OzakiKR18,DBLP:conf/birthday/OzakiKR19} or the integers~\cite{ArtEtAl3}, 
possibly by extending the language with metric temporal operators as well~\cite{DBLP:conf/ecai/Gutierrez-Basulto16,DBLP:journals/tocl/BaaderBKOT20,DBLP:conf/frocos/BaaderBKOT17}.
In applications, temporalised DLs have been considered in the context of runtime verification~\cite{BaaEtAl3,BaaLip}
and business process modelling~\cite{AalEtAl,ArtEtAl19c}.
However,
such proposals
are based on the usual infinite trace semantics
or
are
limited in expressivity.
To the best of our knowledge, little has been done in order to combine finite traces and temporal DLs.
Recent work in this direction
  can be found in~\cite{ArtEtAl18,ArtEtAl19}.
The complexity landscape of temporal DLs on finite traces semantics has been further enriched by preliminary results on temporal $\DLite$ logics~\cite{ArtEtAl19b}.
These results, that match the corresponding ones on infinite traces semantics~\cite{ArtEtAl3}, are obtained by considering axioms interpreted either globally or locally, and by syntactically restricting the application of the temporal operators (allowing for $\Until$, or only for $\Box$ and $\Next$), or of the DL constructors (focussing on the so-called \emph{bool}, \emph{horn}, \emph{krom}, and \emph{core} fragments).

\section{First-order Temporal Logics}
\label{sec:tdlf}


The \emph{first-order}
\emph{temporal language} $\QTLfr{\Until}{}{}$~\cite{GabEtAl},
that we present in the following,
is obtained by
extending the usual first-order language with the temporal
operator \emph{until} $\Until$ interpreted over
linear
structures, called \emph{traces}.

\subsection{Syntax}
\label{sec:syntax}
The \emph{alphabet} of $\QTLfr{\Until}{}{}$ consists of countably infinite
and pairwise disjoint sets of \emph{predicates}
$\textsf{N}_{\textsf{P}}$
(with $\textsf{ar}(P) \in \mathbb{N}$ being the \emph{arity} of
$P \in \textsf{N}_{\textsf{P}}$), \emph{constants} (or
\emph{individual names}) $\NI$, and \emph{variables} $\mathsf{Var}$;
the \emph{logical operators} $\lnot$ (\emph{negation}) and $\land$ (\emph{conjunction}); the \emph{existential quantifier} $\exists$, and the \emph{temporal operator} $\Until$ (\emph{until}).  The \emph{formulas} of $\QTLfr{\Until}{}{}$ are of the
form:
\[
\p
::= 
P(\bar{\tau})
\mid
\neg \p \mid
(\p \land \p)
\mid \exists x \p \mid
(\p \U \p),
\]
where $P \in \textsf{N}_{\textsf{P}}$, 
$\bar{\tau} = (\tau_{1}, \ldots, \tau_{\textsf{ar}(P)})$ is a tuple of
\emph{terms}, i.e., constants or variables,
and
$x \in \textsf{Var}$. 
%
Formulas without the until operator are called \emph{non-temporal}.
We write $\p(x_{1}, \ldots, x_{m})$ to indicate that the free
variables of a formula $\p$ are exactly $x_{1}, \ldots, x_{m}$.
We write
$\p\{x/y\}$
for
the result of uniformly substituting the free occurrences of $y$ in 
$\p$ by $x$. 
%
For $p \in \Nbl$, the \emph{$p$-variable fragment} of $\QTLfr{\Until}{}{}$,
denoted by $\QTLfr{\Until}{p}{}$, consists of $\QTLfr{\Until}{}{}$ formulas with
at most $p$ variables.
The (propositional) language
$\LTL$ is
$\QTLfr{\Until}{0}{}$ with formulas constructed without the existential quantifier.
The \emph{monodic fragment} of $\QTLfr{\Until}{}{}$,
denoted by $\QTLfr{\Until}{}{\monodic}$, consists of formulas such that all
subformulas of the form $\p \Until \psi$ have \emph{at most one}
free variable.
The \emph{monadic fragment} $\QTLfr{\Until}{mo}{}$ is the fragment of $\QTLfr{\Until}{}{}$ with formulas
not containing predicates of arity
$\mathop{\geq} 2$.
Finally,
the \emph{constant-free one-variable
monadic
fragment}, 
$\QTLfr{\Until}{1,mo}{\not c}$,
is obtained from the one-variable
monadic
fragment by disallowing constants.


\subsection{Semantics}
\label{sec:semantics}

A \emph{first-order temporal interpretation} (or \emph{trace}) is a
pair
$\Mmf=(\Delta^{\Mmf}, (\Mmc_{n})_{n \in \Tmf})$, where $\Tmf$ is a
sub-order of $(\mathbb{N}, <)$ of the form $[0,\infty)$ or $[0,l]$,
with $l \in \mathbb{N}$,
and each $\Mmc_{n}$ is a classical first-order interpretation with
a non empty domain $\Delta^\Mmf$ (or simply $\Delta$): we have
$P^{\Mmc_{n}} \subseteq \Delta^{\textsf{ar}(P)}$, for each
$P \in \textsf{N}_{\textsf{P}}$, and
$a^{\Mmc_{i}}=a^{\Mmc_{j}}\in \Delta$ for all $a\in\NI$ and
$i,j \in \Nbb$, i.e., constants are \emph{rigid designators} (with
fixed interpretation, denoted simply by $a^{\Mmc}$).
The stipulation that all time points share the same domain $\Delta$ is
called the \emph{constant domain assumption} (meaning that objects are
not created or destroyed over time), and it is the most general choice
in the sense that increasing, decreasing, and varying domains can all
be reduced to it~\cite{GabEtAl}.
An \emph{assignment in $\Mmf$} 
(or simply an \emph{assignment}, when $\Mmf$ is clear from the
context) is a function $\assign$ from $\mathsf{Var}$ to $\Delta$, and
the \emph{value of a term $\tau$ in $\Mmf$ under $\assign$} is defined
as: $\assign(\tau) = \assign(x)$, if $\tau = x$, and
$\assign(\tau) = a^{\Mmc}$, if $\tau = a \in \NI$. Given a tuple of
$m$ terms $\bar{\tau} = (\tau_{1}, \ldots, \tau_{m})$, we set
$\assign(\bar{\tau}) = (\assign(\tau_{1}), \ldots,
\assign(\tau_{m}))$.  Given a formula~$\p$, the \emph{satisfaction of
  $\p$ in~$\Mmf$ at time point $n \in \Tmf$ under an assignment
  $\assign$}, written $\Mmf, n \models^{\assign} \p$, is inductively
defined as:
\[
\begin{array}{lcl}
		\Mmf, n \models^{\assign} 
		P(\bar{\tau}) 
		& \text{iff} & 
		\assign(\bar{\tau}) \in P^{\Mmc_{n}},\\
		\Mmf, n  \models^{\assign} \neg \psi & \text{iff} & \text{not } \Mmf, n  \models^{\assign} \psi, \\
		\Mmf, n  \models^{\assign} \psi \land \chi & \text{iff} & \Mmf, n  \models^{\assign} \psi \text{ and } \Mmf, n  \models^{\assign} \chi, \\
		\Mmf, n  \models^{\assign} \exists x \psi & \text{iff}
                             & \Mmf, n \models^{\assign'} \psi, \text{
                               for some assignment } \assign' \text{ that can differ from } \assign \text{ only on } x, \\
		\Mmf, n  \models^{\assign} \psi \Until \chi &
		                                                  		  \text{iff} & \text{there is }  m \in \Tmf, m > n  \colon \Mmf, m \models^{\assign}\chi \text{ and, } \text{for all } i \in (n,m), \Mmf, i \models^{\assign} \psi. \\
 \end{array}
\]
%
We say that $\varphi$ is \emph{satisfied in $\Mmf$}
\emph{under $\assign$}, writing $\Mmf \models^{\assign} \varphi$, if
$\Mmf, 0 \models^{\assign} \varphi$, and that
$\varphi$ is \emph{satisfied in $\Mmf$} (or that $\Mmf$ is a
\emph{model} of $\varphi$), denoted by $\Mmf \models \varphi$, if
$\Mmf \models^{\assign} \varphi$, for some $\assign$.
%
Moreover, $\p$ is said to be \emph{satisfiable} if it is satisfied in
some $\Mmf$.
A formula $\p$ \emph{logically implies} a formula $\psi$
if, for every interpretation $\Mmf$ and every assignment $\assign$,
$\Mmf \models^{\assign} \p$ implies $\Mmf \models^{\assign} \psi$,
and we
write $\p \mdl \psi$.
We say that $\p$ and $\psi$ are
\emph{equivalent}, writing $\p \equiv \psi$, if $\p \mdl \psi$ and
$\psi \mdl \p$.
Since the satisfaction of a formula $\p(x_1, \ldots, x_n )$ under an assignment $\assign$ depends only on the values of its free variables under $\assign$, we may write
$\Mmf, n \models \p[d_1, \ldots, d_n]$ in place of
$\Mmf, n \models^{\assign} \varphi( x_1, \ldots, x_n )$, where $\assign(x_1) = d_1, \ldots, \assign(x_n) = d_n$.
Also, given an assignment $\assign$ and an element $d$ in an
interpretation's domain $\Delta$
we denote by $\assign[x \mapsto d]$ the assignment obtained by
modifying $\assign$ so that $x$ maps to $d$.

In the following, we call \emph{finite trace} a trace
with $\Tmf =[0,l]$, often denoted by
$\Fmf = (\Delta^\Fmf, (\Fmc_n)_{n \in [0, l]})$, while \emph{infinite
  traces}, based on $\Tmf =[0,\infty)$, will be denoted by
$\Imf = (\Delta^\Imf, (\Imc_n)_{n \in [0, \infty)})$.
We say that a  $\QTLfr{\Until}{}{}$ formula $\p$ is \emph{satisfiable}
\emph{on infinite}, \emph{finite}, or \emph{$k$-bounded traces}, respectively, if it is satisfied in a trace in
the class of infinite, finite, or finite traces with at most $k \in \mathbb{N}, k > 0$ (given in binary) time points, respectively.
Moreover, given $\QTLfr{\Until}{}{}$ formulas $\p$ and $\psi$, we write $\p \modelsinf \psi$ (respectively, $\p \modelsfin \psi$) iff $\p$ logically implies $\psi$ on infinite (resp., finite) traces.
Similarly, we write
$\equivinf{\p}{\psi}$
if $\p$ and $\psi$ are
equivalent
on infinite traces, and
$\equivfin{\p}{\psi}$
if they are
equivalent on finite traces.
  In addition to the standard
  conventions on parenthesis and Boolean equivalences, we will use the
  following abbreviations for formulas:
  $\bot := \varphi \land \lnot \varphi$ (\emph{bottom});
  $\top := \lnot \bot$ (\emph{top});
  $\p \UntilP \psi := \psi \lor (\p \land \p \Until \psi)$;
$\p \Release \psi := \lnot ( \lnot \p \Until \lnot \psi)$
(\emph{releases});
$\p \ReleaseP \psi := \psi \land (\p \lor \p \Release \psi)$;
$\Diamond \p := \top \U \p$
(\emph{diamond});
$\DiamondP \p := \top \UntilP \p$;
$\B \p := \bot \Release \p$
(\emph{box});
$\B^{+} \p := \bot \ReleaseP \p$;
$\Next \p := \bot \Until \p$
(\emph{strong next});
and
$\Wnext \p := \top \Release \p$
(\emph{weak next}).
To refer to the last time point in a finite trace, we
use the formula 
$\textit{last}
:= \B \bot $, 
which is indeed satisfied at an instant of a trace iff that instant
does not have a successor, 
i.e.,
it holds that $\last \equiv \lnot \Next
\top$.  Observe that $\last \equiv_{i} \Next \bot \equiv_{i}
\bot$, whereas $\last \not \equiv_{f} \bot$.  Moreover,
we have that $\Wnext \p \equiv_{i} \Next \p \equiv_{i} \lnot \Next
\lnot \p$, while
none of the previous equivalences hold on finite traces.
Indeed, we have that $\Wnext \p \equiv \last \lor \Next
\p$, meaning that $\Wnext
\p$ is satisfied at a time point of a trace iff either it is the last instant
of the trace, or at the next time point $\p$ holds.

We now introduce the notation used in the
rest of the paper.
Given a trace $\Mmf= (\Delta^{\Mmf}, (\Mmc_{n})_{n \in \Tmf})$, the \emph{suffix of} $\Mmf$ \emph{starting at} $i \in
\Tmf$ is the trace $\Mmf^{i} = (\Delta^{\Mmf}, (\Mmc^{i}_{n})_{n \in
  \Tmf'})$, where: if $\Tmf = [0,l]$, we set $\Tmf' = [0, l - i]$ and $\Mmc^{i}_{n} = \Mmc_{i + n}$, for every $n \in \Tmf'$;
whereas, if $\Tmf = [0, \infty)$, we set $\Tmf' = [0,
\infty)$ and $\Mmc^{i}_{n} = \Mmc_{i + n}$, for every $n \in \Tmf'$.
The \emph{prefix of } $\Mmf$ \emph{ending at} $i \in
\Tmf$ is the trace $\Mmf_{i} = (\Delta^{\Mmf}, (\Mmc_{n})_{n \in
  \Tmf'})$, where $\Tmf' = [0, i]$.
%
Clearly, since
$\QTLfr{\Until}{}{}$ does not contain past temporal operators, given a
$\QTLfr{\Until}{}{}$ formula $\p$, a trace $\Mmf =(\Delta^{\Mmf},
(\Mmc_{n})_{n \in \Tmf})$, an assignment $\assign$ in
$\Mmf$, and an instant $n \in \Tmf$, we have that $\Mmf, n
\models^{\assign} \p$ iff $\Mmf^{n} \models^{\assign} \p$.
%

Let $\Fmf = (\Delta^\Fmf, (\Fmc_n)_{n \in [0, l]})$ and
$\Mmf = (\Delta^\Mmf, (\Mmc_n)_{n \in \Tmf})$ be, respectively,
a finite trace and a (finite or infinite) trace such that
$\Delta^{\Fmf} = \Delta^{\Mmf}$ (writing $\Delta$) and
$a^{\Fmc} = a^{\Mmc}$, for all $a \in \NI$.  We denote by
$\Fmf\cdot\Mmf = (\Delta^{\Fmf\cdot\Mmf}, (\Fmc\cdot\Mmc_{n})_{n \in \Tmf'})$ the
\emph{concatenation of $\Fmf$ with $\Mmf$}, defined as the trace with:
$\Delta^{\Fmf\cdot\Mmf} = \Delta$; $a^{\Fmc\cdot\Mmc} = a^{\Fmc}$,
for all $a \in \NI$; $\Tmf' = [0, \infty)$, if $\Tmf = [0,\infty)$, and $\Tmf' = [0, (l + l') +1]$, if $\Tmf = [0, l']$; and for $P \in \textsf{N}_{\textsf{P}}$,
$n\in \Tmf'$:
\begin{align*}
&P^{\Fmc\cdot\Mmc_{n}} =
			 \begin{cases}
			  P^{\Fmc_{n}}, & \text{if $n \in [0, l]$} \\
			  P^{\Mmc_{n - (l + 1)}}, & \text{otherwise.} \\
			 \end{cases}
\end{align*}
%
%
We define the set of \emph{extensions} of
a finite trace
$\Fmf$ as the set
of infinite traces
$\Ext(\Fmf) = \{ \Imf \mid \Imf = \Fmf \cdot \Imf',
\text{ for some infinite trace } \Imf' \}$.
Instead, given
a trace
$\Mmf$,
the set of \emph{prefixes} of
$\Mmf$
is the set
$\Pre(\Mmf) =
\{ \Fmf \mid \Mmf = \Fmf \cdot \Mmf', \text{ for some trace } \Mmf'
\}$.
Moreover, we call the \emph{frozen extension} of $\Fmf = (\Delta, (\Fmc_n)_{n \in
  [0, l]})$, denoted by
$\Fmf^{\omega}$, the
concatenation
of \Fmf with the infinite trace
$\Imf = (\Delta, (\Imc_{n})_{n \in [0, \infty)})$ such that
$\Imc_{n} = \Fmc_{l}$,
for every $n \in [0, \infty)$.
That is, $\Fmf^{\omega}$ is the infinite trace obtained from $\Fmf$ by repeating its last time point infinitely
often~\cite{BauHas}.
\section{Finite vs. Infinite Traces}
\label{sec:dist}

In this section, we compare finite and infinite traces semantics.
First, in Section~\ref{sec:redinf}, we lift to the first-order
temporal logic setting a well-known reduction of propositional linear
temporal logic formula satisfiability from finite traces to infinite
ones.  Then, in Section~\ref{sec:fininf}, we establish model-theoretic
conditions under which it is guaranteed that formulas equivalent on
finite (respectively, infinite) traces are also equivalent on infinite
(respectively, finite) traces.
In addition, we syntactically define classes of formulas that are
shown to satisfy such model-theoretic conditions, and for which the
corresponding results on preservation of formula equivalences are thus
inherited.
We finally restrict ourselves to the problem of preserving
satisfiability of a formula, from finite to infinite traces.  For this
case as well, we define a class of formulas for which it holds that
satisfiability on finite traces implies satisfiability on infinite
traces.

\subsection{Reduction to Satisfiability on Infinite Traces}
\label{sec:redinf}

In the following, we show how to reduce the formula satisfiability
problem on finite traces to the same problem on infinite traces. 
Similar to the encoding proposed in~\cite{DegVar1} for (propositional)
$\LTL$, to capture the finiteness of the temporal dimension, we
introduce a fresh unary predicate $E$, standing for the \emph{end of
  time}, with the following properties:
%
            $(i)$ there is at least one instant before the end of time;
            $(ii)$ the end of time comes for all objects;
            $(iii)$ the end of time comes at the same time for every object;
            $(iv)$ the end of time is permanent.
%
We axiomatise these properties as follows: 
%
\begin{align*}
    \psi_{f}^{1}  & = \forall x \lnot E(x) & \text{(Point~($i$)),}  \\
    \psi_{f}^{2}  & = \forall x \lnot E(x)  \Until \forall x E(x) & \text{(Points~$(ii)$, $(iii)$),}  \\
    \psi_{f}^{3}  & = \B \forall x (E(x) \to \Next E(x)) & \text{(Point~$(iv)$)}.
\end{align*}
%

We now characterise models satisfying the \emph{end of time formula} 
$\psi_{f} = \psi_{f}^{1}\land \psi_{f}^{2} \land \psi_{f}^{3}$.
Let
$\Fmf = (\Delta, (\Fmc_n)_{n \in [0, l]})$
and
$\Imf = (\Delta, (\Imc_n)_{n \in [0, \infty)})$
be, respectively, a finite and an infinite trace with the same domain $\Delta$ and such that
$a^{\Fmc} = a^{\Imc}$,
for all $a \in \NI$.
We denote by 
$\FEImf$
the
\emph{end extension of $\Fmf$ with $\Imf$},
defined as the
concatenation
of $\Fmf$ with $\Imf$
such that:
\begin{align*}
&E^{\FEImc_{n}} =
			 \begin{cases}
			  \emptyset, & \text{if $n \in [0, l]$}; \\
			   \Delta, & \text{if $n \in [l + 1, \infty)$}. \\
			 \end{cases}
\end{align*}
%
\noindent
Clearly, end extensions
characterise the satisfiability of
$\psi_{f}$. We formalise this in the next lemma.

\begin{restatable}{lemma}{LemmaModels}\label{lemma:finchar}
  For every infinite trace $\Imf$, $\Imf \mdl \psi_{f}$ iff
  $\Imf = \FEImf'$, for some finite trace $\Fmf$ and some infinite
  trace $\Imf'$.
\end{restatable}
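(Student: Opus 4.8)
The plan is to prove the two directions of the biconditional separately. First I would record a simplifying observation that drives the whole argument: since every occurrence of the fresh predicate $E$ in $\psi_{f}$ lies under a quantifier ranging over the whole domain, the subformulas $\forall x\, E(x)$ and $\forall x\, \lnot E(x)$ are satisfied at an instant $n$ of an infinite trace $\Imf$ precisely when $E^{\Imc_{n}} = \Delta$ and $E^{\Imc_{n}} = \emptyset$, respectively. This lets me treat the statement as a claim about the shape of the sequence $(E^{\Imc_{n}})_{n}$, while all predicates other than $E$ are left unconstrained by $\psi_{f}$ and are accounted for uniformly by the concatenation underlying the end extension.

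For the right-to-left direction, I would assume $\Imf = \Fmf \cdot_{E} \Imf'$ with $\Fmf = (\Delta, (\Fmc_{n})_{n \in [0,l]})$, and read off from the definition of end extension that $E^{\Imc_{n}} = \emptyset$ for $n \in [0,l]$ and $E^{\Imc_{n}} = \Delta$ for $n \geq l+1$. Then $\psi_{f}^{1}$ holds at $0$ because $0 \in [0,l]$; $\psi_{f}^{2}$ holds at $0$ with witness $m = l+1 > 0$, at which $E$ is full, while it is empty at every $i \in (0, l+1)$; and $\psi_{f}^{3}$ holds because the permanence inclusion $E^{\Imc_{m}} \subseteq E^{\Imc_{m+1}}$ is vacuous for $1 \leq m \leq l$ (there $E^{\Imc_{m}} = \emptyset$) and trivial for $m \geq l+1$ (there $E^{\Imc_{m}} = E^{\Imc_{m+1}} = \Delta$).

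For the converse I would assume $\Imf \models \psi_{f}$ and extract the split point from $\psi_{f}^{2}$: unfolding it at $0$ yields an instant $m^{\ast} > 0$ with $E^{\Imc_{m^{\ast}}} = \Delta$ and $E^{\Imc_{i}} = \emptyset$ for all $i \in (0, m^{\ast})$, which together with $\psi_{f}^{1}$ gives $E^{\Imc_{i}} = \emptyset$ throughout $[0, m^{\ast})$. Setting $l = m^{\ast} - 1$, I would use $\psi_{f}^{3}$, which at $0$ asserts $E^{\Imc_{m}} \subseteq E^{\Imc_{m+1}}$ for every $m \geq 1$, to propagate fullness forward: since $l+1 \geq 1$, a straightforward induction from $E^{\Imc_{l+1}} = \Delta$ gives $E^{\Imc_{n}} = \Delta$ for all $n \geq l+1$. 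The sequence $(E^{\Imc_{n}})_{n}$ then has exactly the profile of an end extension split at $l+1$, so taking $\Fmf$ to be the prefix $\Imf_{l}$ and $\Imf'$ the suffix $\Imf^{l+1}$ (both inheriting the domain $\Delta$ and the rigid constants of $\Imf$), I would conclude $\Imf = \Fmf \cdot_{E} \Imf'$ by checking that the two interpretations agree at each instant: for every $P \neq E$ by the definition of concatenation, and for $E$ by the computed profile.

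I expect no genuine obstacle; the one point requiring care is the strict semantics of $\Until$, which makes $\Box$, $\Diamond$ and $\Next$ quantify over strictly future instants. In particular $\psi_{f}^{3} = \Box\, \forall x (E(x) \to \Next E(x))$ imposes nothing at instant $0$, but this is harmless because $E^{\Imc_{0}} = \emptyset$ is already secured by $\psi_{f}^{1}$, and the permanence step does reach the split point since $l+1 \geq 1$. The only real work is keeping the off-by-one boundaries around $l+1$ straight.
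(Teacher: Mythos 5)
Your proof is correct and takes essentially the same route as the paper: the paper's (much terser) proof simply asserts that $\psi_{f}$ holds in $\Imf$ iff there is a split point $k>0$ with $E^{\Imc_{j}}=\emptyset$ for $j\in[0,k)$ and $E^{\Imc_{i}}=\Delta$ for $i\in[k,\infty)$, which is exactly the $E$-profile you derive from $\psi_{f}^{1}$, $\psi_{f}^{2}$, and the inductive propagation via $\psi_{f}^{3}$. Your write-up just supplies the boundary-case and strict-semantics details that the paper leaves implicit.
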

\begin{proof}
  $\psi_{f}$ is satisfied in $\Imf$ iff there is $k>0$ such that, for
  all $d\in\Delta$, it holds that: $d\not\in E^{\Imc_{j}}$, for all
  $j \in[0, k)$; and $d\in E^{\Imc_{i}}$, for all $i \in[k, \infty)$.
  That is, $\Imf = \FEImf'$, for some finite trace $\Fmf$ and some
  infinite trace $\Imf'$.
\end{proof}

We now introduce a translation $\cdot^{\dagger}$
for $\QTLfr{\Until}{}{}$ formulas,
used together with the end of time
formula, $\psi_f$, to capture
satisfiability on finite traces.
More formally, a $\QTLfr{\Until}{}{}$ formula $\p$ is satisfiable
on finite traces
if and only if its translation $\p^{\dagger}$ is
satisfied in an infinite trace
that also satisfies the formula $\psi_f$.
The translation $\cdot^{\dagger}$  is
defined as: 
\begin{align*}
	(P(\bar{\tau}))^{\dagger} & = P(\bar{\tau}), \\
 	(\lnot \psi)^{\dagger} & = \lnot \psi^{\dagger}, \\
 	(\psi  \land \chi)^{\dagger} & = \psi^{\dagger} \land \chi^{\dagger}, \\
 	(\exists x \psi)^{\dagger} & = \exists x \psi^{\dagger}, \\
 	(\psi  \Until \chi)^{\dagger} & = \psi^{\dagger} \Until (\chi^{\dagger} \land \psi_{f}^{1}).
\end{align*}
Before
showing the correctness of the translation, the
following lemma shows the relevance of end extensions when
interpreting translated formulas.
%
\begin{restatable}{lemma}{LemmaTranslation}\label{lemma:finextchar}
  Let $\FEImf$ be an end extension of a finite trace $\Fmf$.  For
  every $\QTLfr{\Until}{}{}$ formula $\p$ and every assignment $\assign$,
  $\Fmf \mdl^{\assign} \p \ \text{iff} \ \ \FEImf \mdl^{\assign} \p\tr$.
\end{restatable}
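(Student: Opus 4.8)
The plan is to prove by induction on the structure of $\p$ a statement slightly stronger than the one in the lemma, namely that the equivalence holds at \emph{every} time point of the finite part of the end extension. Concretely, writing $\FEImf = (\Delta, (\FEImc_{n})_{n \in [0,\infty)})$ with $\Fmf = (\Delta, (\Fmc_{n})_{n \in [0,l]})$, I would show:
\begin{quote}
for every $n \in [0,l]$, every $\QTLfr{\Until}{}{}$ formula $\p$ and every assignment $\assign$, $\ \Fmf, n \models^{\assign} \p$ iff $\FEImf, n \models^{\assign} \p\tr$.
\end{quote}
The lemma is then the instance $n = 0$. The reason for this strengthening is that the only interesting case of the induction, the $\Until$ operator, quantifies over future time points, so the inductive hypothesis must already be available at all positions of $[0,l]$.

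The decisive observation, which I would isolate first, is that in $\FEImf$ the formula $\psi_{f}^{1} = \forall x \lnot E(x)$ is satisfied at a point $n$ if and only if $n \in [0,l]$: by definition of the end extension $E^{\FEImc_{n}} = \emptyset$ for $n \in [0,l]$ and $E^{\FEImc_{n}} = \Delta$ for $n \geq l+1$, and since domains are non-empty, $\forall x \lnot E(x)$ holds exactly on the finite part. Thus the conjunct $\land\, \psi_{f}^{1}$ inserted by the translation into the right-hand argument of every $\Until$ acts as a guard confining the witness of an until to $[0,l]$.

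With this in hand the induction is routine except for the until step. For the base case $\p = P(\bar{\tau})$ I would use that, for $n \in [0,l]$ and $P \neq E$, concatenation gives $P^{\FEImc_{n}} = P^{\Fmc_{n}}$, and that constants are rigid with $a^{\Fmc} = a^{\Imc}$, so $\assign(\bar{\tau}) \in P^{\Fmc_{n}}$ iff $\assign(\bar{\tau}) \in P^{\FEImc_{n}}$. The Boolean and quantifier cases follow immediately from the inductive hypothesis at the same point $n$, using that $\Fmf$ and $\FEImf$ share the domain $\Delta$, so that $\exists$ ranges over the same elements, and that $\cdot\tr$ commutes with $\lnot$, $\land$ and $\exists$.

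The crux is $\p = \psi \Until \chi$, with $\p\tr = \psi\tr \Until (\chi\tr \land \psi_{f}^{1})$. For the forward direction, from a witness $m \in (n,l]$ of $\Fmf, n \models^{\assign} \psi \Until \chi$ I get, by the inductive hypothesis at $m$ and at each $i \in (n,m)$ (all lying in $[0,l]$), that $\chi\tr$ and all the $\psi\tr$ hold in $\FEImf$; since $m \in [0,l]$ the guard $\psi_{f}^{1}$ also holds at $m$, yielding the witness for $\p\tr$. Conversely, any witness $m$ for $\p\tr$ satisfies $\psi_{f}^{1}$, hence $m \in [0,l]$ by the decisive observation; then the intermediate points $i \in (n,m)$ automatically lie in $[0,l]$, the inductive hypothesis transfers everything back to $\Fmf$, and $\Fmf, n \models^{\assign} \psi \Until \chi$ follows. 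The main (and essentially only) obstacle is getting this confinement argument exactly right: noting that only the witness point, and not the intermediate points, needs the explicit guard, since confining the witness to $[0,l]$ already forces the intermediate points into $[0,l]$ — which is precisely why the translation attaches $\psi_{f}^{1}$ only to the right argument of $\Until$.
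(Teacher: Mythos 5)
Your proposal is correct and follows essentially the same route as the paper's proof: the same strengthened inductive statement quantified over all $n \in [0,l]$, with the $\Until$ case resolved by observing that $E^{\FEImc_j} = \emptyset$ on $[0,l]$ and $E^{\FEImc_j} = \Delta$ afterwards, so the guard $\psi_f^1$ confines the until-witness to the finite part. Your write-up merely makes explicit the backward-direction confinement argument that the paper states tersely.
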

\begin{proof}
Let
$\Fmf = (\Delta^\Fmf, (\Fmc_n)_{n \in [0, l]})$
be a finite trace
and let
$\FEImf = (\Delta^{\FImf}, (\FImc_n)_{n \in [0, \infty)})$
be an end extension of $\Fmf$. 
%
We prove by structural induction the following more general statement.
For all $n \in [0, l]$ and all assignments $\assign$:
\[
\Fmf,n \mdl^{\assign} \p \ \text{iff} \ \FEImf,n \mdl^{\assign} \p\tr.
\]

For the base case $\p = P(\bar{\tau})$, the statement follows
from the definitions of $\FEImf$ and $\cdot\tr$, while
the proof of the inductive cases $\p = \lnot \psi$,
$\p = (\psi \land \chi)$, and $\p = \exists x \psi$ is
straightforward.

We show the inductive case $\p = (\psi \Until \chi)$.  We have that
$\Fmf, n \mdl^{\assign} \psi \Until \chi$ iff there is $m \in (n, l]$
such that $\Fmf, m \models^{\assign} \chi$ and for all $i \in (n,m)$,
$\Fmf,i \models^{\assign} \psi$.  By the inductive hypothesis,
this happens iff there is $m \in (n, l]$ such that
$\FEImf, m \models^{\assign} \chi\tr$ and for all $i \in (n,m)$,
$\FEImf,i \models^{\assign} \psi\tr$.  Since $E^{\FEImc_{j}} = \eset$
for all $j \in [0, l]$, this means that
$\FEImf, n \mdl^{\assign} \psi\tr \Until (\chi\tr \land \forall x
\lnot E(x))$.  That is,
$\FEImf, n \mdl^{\assign} (\psi \Until \chi)\tr$.
%
\end{proof}

Using the previous lemmas, we can show the correctness of the
reduction of the $\QTLfr{\Until}{}{}$ satisfiability problem on finite
traces to the same problem for $\QTLfr{\Until}{}{}$ on infinite
traces.

\begin{restatable}{theorem}{TheorFiniteChar}\label{theor:finsatchar}
  A $\QTLfr{\Until}{}{}$ formula $\varphi$ is satisfiable on finite
  traces iff $\varphi\tr \land \psi_{f}$ is satisfiable on infinite
  traces.
\end{restatable}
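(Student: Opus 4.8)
The plan is to derive both directions of the biconditional as direct consequences of Lemma~\ref{lemma:finchar} and Lemma~\ref{lemma:finextchar}, so that the only real content is the bookkeeping that keeps the domain, the rigid constants, and the assignment fixed across the two lemma applications. First I would record the harmless observation that $\psi_{f}$ is a sentence: each of $\psi_{f}^{1}$, $\psi_{f}^{2}$, $\psi_{f}^{3}$ contains no free variable, so the truth of $\psi_{f}$ in a trace is independent of the chosen assignment. This lets me move freely between $\Imf \models \psi_{f}$ and $\Imf \models^{\assign} \psi_{f}$ for any $\assign$, which is exactly what is needed to reconcile the assignment-free Lemma~\ref{lemma:finchar} with the assignment-carrying Lemma~\ref{lemma:finextchar}.

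For the left-to-right direction, suppose $\p$ is satisfiable on finite traces, witnessed by a finite trace $\Fmf$ and an assignment $\assign$ with $\Fmf \models^{\assign} \p$. I would form \emph{some} end extension $\FEImf$ of $\Fmf$, which exists because the definition of $\FEImf$ lets us concatenate $\Fmf$ with any infinite trace over $\Delta^{\Fmf}$ respecting the rigid constants (for instance the one freezing $\Fmc_{l}$). Lemma~\ref{lemma:finextchar} then yields $\FEImf \models^{\assign} \p\tr$, while Lemma~\ref{lemma:finchar} guarantees $\FEImf \models \psi_{f}$ precisely because $\FEImf$ is an end extension. Combining the two, and using that $\psi_{f}$ is a sentence so that $\FEImf \models^{\assign} \psi_{f}$ as well, gives $\FEImf \models^{\assign} \p\tr \land \psi_{f}$, so $\p\tr \land \psi_{f}$ is satisfiable on infinite traces.

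For the converse, suppose $\p\tr \land \psi_{f}$ is satisfied in an infinite trace $\Imf$ under some assignment $\assign$. From $\Imf \models^{\assign} \psi_{f}$ and the fact that $\psi_{f}$ is a sentence we get $\Imf \models \psi_{f}$, so Lemma~\ref{lemma:finchar} tells me that $\Imf$ is an end extension $\FEImf'$ of a finite trace $\Fmf$ (for some infinite $\Imf'$). Since $\Imf \models^{\assign} \p\tr$ and $\Imf$ is an end extension of $\Fmf$, Lemma~\ref{lemma:finextchar} applied in its ``only if'' direction delivers $\Fmf \models^{\assign} \p$, so $\p$ is satisfiable on finite traces, completing the argument.

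I do not expect a genuine obstacle here, since both lemmas were proved precisely to make this reduction immediate. The one point deserving care is ensuring that the \emph{same} assignment $\assign$ is carried through each direction: Lemma~\ref{lemma:finextchar} is stated for a fixed $\assign$, and it is the observation that $\psi_{f}$ is a sentence that prevents any clash with the conjunct $\psi_{f}$. A secondary routine check is the existence of an end extension of $\Fmf$ in the forward direction, which is immediate from the general definition of $\FEImf$ once any suitable infinite trace over $\Delta^{\Fmf}$ is supplied.
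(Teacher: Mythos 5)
Your proposal is correct and follows essentially the same route as the paper's own proof: both directions are obtained by composing Lemma~\ref{lemma:finchar} and Lemma~\ref{lemma:finextchar}, using an end extension of the finite witness in the forward direction and decomposing the infinite model as an end extension in the converse. The extra observations you make (that $\psi_{f}$ is a sentence, hence assignment-independent, and that an end extension always exists, e.g.\ via the frozen extension) are precisely the bookkeeping details the paper leaves implicit.
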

\begin{proof} 
If
$\p$ is satisfied in some finite trace $\Fmf$, then (by
Lemmas~\ref{lemma:finchar} and \ref{lemma:finextchar}) any end
extension $\FEImf$ satisfies $\p\tr \land \psif$. Conversely, suppose
that $\p\tr \land \psif$ is satisfied in some infinite trace $\Imf$
under an assignment $\assign$. By Lemma~\ref{lemma:finchar},
$\Imf = \FEImf'$, for some finite trace $\Fmf$ and some infinite trace $\Imf'$. 
Since $\FEImf' \mdl^{\assign} \p\tr$, by Lemma~\ref{lemma:finextchar}, we have that  $\Fmf \mdl^{\assign} \p$.
\end{proof}

%


\subsection{Blurring the Distinction Between Finite and Infinite Traces}
\label{sec:fininf}

While certain formulas, such as $\B \top$, are satisfiable both on
finite and infinite traces, others, e.g., $\D \last$ and
$\B^{+} \Next \top$, are only satisfiable on finite traces and on
infinite traces, respectively.
It is thus of interest to understand in which cases satisfiability on
finite and infinite traces coincide, so that solving the problem in
one case answers to the other as well.
A similar question can be posed for the problem of equivalences
between formulas.
For example, $\D \B (\p \vee \psi)$ and $\D \B \p \vee \D \B\psi$ are
equivalent on finite traces but not on infinite traces~\cite{BauHas}.
Moreover, $\B^{+} \D^{+} \p$ and $\D^{+} \B^{+} \p$ are not equivalent
on infinite traces, whereas on finite traces they are both equivalent
to
$\D^{+} (\last \land \p)$~\cite{DegVar1}.  Conversely, $\bot$ and
$\last$
are only equivalent on infinite traces.

In this section we address these questions and investigate the
distinction between reasoning on finite and on infinite
traces.
%
We first propose semantic properties under which it is guaranteed that
formula
equivalences are preserved from finite to infinite traces, or vice versa,
thus allowing to blur the distinction between these semantics.  
Then, we syntactically define classes of formulas satisfying some of these semantic properties, so to provide a sufficient criterion for the preservation of equivalences
from finite to infinite traces, or vice versa.
Finally, we focus on preserving satisfiability from the finite to the infinite case, devising a wider class of formulas for which this preservation holds.


\subsubsection{Finite vs. Infinite Traces: Semantic Characterisation}

%
For a $\QTLfr{\Until}{}{}$ formula $\varphi$ and a quantifier
$Q\in\{\exists, \forall \}$,
we say that \emph{$\varphi$ satisfies $\finit{}{Q}$} (or that
\emph{$\varphi$ is $\finit{}{Q}$}) if, for all finite traces $\Fmf$
and all assignments $\assign$, it satisfies the \emph{finite trace
  property}:
\[
\Fmf \models^{\assign} \p \Leftrightarrow Q \Imf \in \textit{Ext}(\Fmf). \Imf \models^{\assign} \p,
\]
and, similarly, that \emph{$\varphi$ satisfies $\infinit{}{Q}$} (or
that \emph{$\varphi$ is $\infinit{}{Q}$}) if, for all infinite traces
$\Imf$ and all assignments $\assign$, it satisfies the \emph{infinite
  trace property}:
\[
\Imf \models^{\assign} \p \Leftrightarrow Q \Fmf \in \textit{Pre}(\Imf).\Fmf \models^{\assign} \p.
\]
%
%
To
show intuitive examples, let us
consider the case where $\varphi$ is a Boolean combination of atomic
formulas.
Examples
of formulas satisfying $\finit{}{\forall}$ and $\infinit{}{\exists}$
are formulas of the form $\D^{+} \varphi$.
Formulas of the form $\D \varphi$
are also $\infinit{}{\exists}$, but in general not
$\finit{}{\forall}$, as witnessed, for instance, by $\D \top$.  On the
other hand, the properties $\finit{}{\exists}$ and
$\infinit{}{\forall}$ capture for example formulas of the form
$\B^{+} \varphi$.
Formulas of the form $\B \varphi$
are also $\infinit{}{\forall}$, but not necessarily
$\finit{}{\exists}$, because of, e.g., $\B \bot$.  These observations,
that can easily be checked, are also immediate consequences of
Lemmas~\ref{thm:diamond} and~\ref{thm:box} below.

We also restrict to
the ``one directional'' version of the above properties.
We denote by $\finit{\circ}{Q}$ and $\infinit{\circ}{Q}$, where
$\circ \in \{ \Rightarrow, \Leftarrow \}$, the corresponding
`$\Rightarrow$' and `$\Leftarrow$' directions of the $\finit{}{Q}$ and
$\infinit{}{Q}$ properties, respectively.
%
Finally, given a property $\mathsf{P}$, we denote by $\QTLfr{\Until}{}{}(\mathsf{P})$ the set of $\QTLfr{\Until}{}{}$ formulas satisfying $\mathsf{P}$.


The semantic properties $\finit{}{Q}$ and $\infinit{}{Q}$ capture different classes 
of  $\QTLfr{\Until}{}{}$ formulas, as
illustrated by the following example.
\begin{example}
\label{tab:inscomp}
The following formulas satisfy exactly one of 
the corresponding finite or infinite trace properties.
\begin{table}[h]
\centering
\begin{tabular}
{| c | c |}
\hline
$\finit{}{\exists}$ & $\D^{+} \last \lor \D P(x)$
\\
\hline
$\finit{}{\forall}$ & $\forall x \D^{+} P (x)$
\\
\hline
$\infinit{}{\exists}$ & $\B \Next \top \lor \last$
\\
\hline
$\infinit{}{\forall}$ & $\B^{+}P(x) \lor \D^{+}(P(x) \land \last)$
\\
\hline
\end{tabular}
\end{table}
%
\end{example}

Indeed, by using the formulas from Example~\ref{tab:inscomp}, we can prove
the following.

\begin{proposition}

The sets
$\QTLfr{\Until}{}{}(\finit{}{\exists})$,
$\QTLfr{\Until}{}{}(\finit{}{\forall})$,
$\QTLfr{\Until}{}{}(\infinit{}{\exists})$,
and
$\QTLfr{\Until}{}{}(\infinit{}{\forall})$
are mutually incomparable with respect to inclusion. 
\end{proposition}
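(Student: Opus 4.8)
The plan is to reduce the claim of mutual incomparability to the four witness formulas listed in Example~\ref{tab:inscomp}. Mutual incomparability of the four sets means precisely that for every ordered pair of distinct properties $\mathsf{P}, \mathsf{P}'$ among $\finit{}{\exists}, \finit{}{\forall}, \infinit{}{\exists}, \infinit{}{\forall}$ we have $\QTLfr{\Until}{}{}(\mathsf{P}) \not\subseteq \QTLfr{\Until}{}{}(\mathsf{P}')$. It therefore suffices to exhibit, for each property $\mathsf{P}$, a single formula $\varphi_{\mathsf{P}}$ that satisfies $\mathsf{P}$ but fails each of the other three properties: such a formula witnesses $\QTLfr{\Until}{}{}(\mathsf{P}) \not\subseteq \QTLfr{\Until}{}{}(\mathsf{P}')$ simultaneously for all $\mathsf{P}' \neq \mathsf{P}$. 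The four rows of Example~\ref{tab:inscomp} are chosen to be exactly these witnesses, so the whole proposition follows once the claim of the example --- that each listed formula satisfies its designated property and none of the remaining three --- is established.

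Accordingly, the real content is the verification of Example~\ref{tab:inscomp}, which I would carry out using a handful of basic semantic facts. First, $\DiamondP \last$ is valid on every finite trace (the final instant always satisfies $\last$) but unsatisfiable on infinite traces, since $\last \equiv_{i} \bot$; dually, $\Next \top$ holds at an instant iff it has a successor, hence everywhere on infinite traces and everywhere except the last instant on finite traces. Second, for any finite trace $\Fmf$ and assignment $\assign$ I can freely choose an extension in $\Ext(\Fmf)$ that either makes a given atom $P(\assign(x))$ true at some later instant or keeps it false forever; symmetrically, the prefixes in $\Pre(\Imf)$ of an infinite trace all agree with $\Imf$ on their common initial segment and include the single-instant prefix ending at $0$. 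With these observations each table entry reduces to comparing the value of the formula at instant $0$ on a trace with the value of the corresponding quantified statement over its extensions or prefixes.

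Concretely, I would check the four formulas as follows. For $\DiamondP \last \lor \D P(x)$ both sides of the $\finit{}{\exists}$ biconditional are uniformly true (the formula is valid on finite traces via $\DiamondP \last$, and some extension realises $\D P(x)$), while choosing an extension that never satisfies $P(x)$ breaks $\finit{}{\forall}$, and an infinite trace on which $P(\assign(x))$ never recurs breaks both $\infinit{}{\exists}$ and $\infinit{}{\forall}$, since every prefix still satisfies the formula. For $\forall x \DiamondP P(x)$ the forward direction of $\finit{}{\forall}$ holds because the witnessing instants for each domain element survive in every extension, and the backward direction by extending a trace so as to keep a missing element without $P$; the other three properties fail on suitable traces --- crucially, to refute $\infinit{}{\exists}$ I would use an infinite domain $\{d_0, d_1, \dots\}$ with $P(d_i)$ true only at instant $i$, so that the infinite trace satisfies the formula yet no finite prefix does. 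For $\B \Next \top \lor \last$ one observes that on finite traces it holds exactly at single-instant traces while it is valid on infinite traces, and for $\BoxP P(x) \lor \DiamondP (P(x) \land \last)$ that on finite traces it holds iff $P(\assign(x))$ is true at the last instant, whereas on infinite traces it collapses to $\BoxP P(x)$; the required (non-)implications between value-at-$0$ and the quantified statements over $\Ext$ or $\Pre$ then follow directly. I expect the only genuinely delicate step to be the $\infinit{}{\exists}$ refutation for $\forall x \DiamondP P(x)$, which forces the use of an infinite domain rather than a single distinguished element, together with the bookkeeping of reflexive ($\DiamondP, \BoxP$) versus strict ($\D, \B, \Next$) operators throughout.
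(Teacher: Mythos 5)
Your proposal is correct and follows essentially the same route as the paper: it reduces the proposition to the four witness formulas of Example~\ref{tab:inscomp}, verifies for each that it satisfies its designated property and fails the other three (including the crucial infinite-domain counterexample for refuting $\infinit{}{\exists}$ on $\forall x \Diamond^{+} P(x)$), and derives mutual incomparability from these witnesses. The counterexample traces you sketch match those used in the paper's proof.
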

\begin{proof}
For every
$X, Y \in \{ \QTLfr{\Until}{}{}(\finit{}{\exists}), \QTLfr{\Until}{}{}(\finit{}{\forall}), \QTLfr{\Until}{}{}(\infinit{}{\exists}), \QTLfr{\Until}{}{}(\infinit{}{\forall}) \}$,
we use the formulas from Example~\ref{tab:inscomp} to show that $X \not \subseteq Y$.
\begin{itemize}
	\item $\QTLfr{\Until}{}{}(\finit{}{\exists}) \not \subseteq Y$, with $Y \in \{ \QTLfr{\Until}{}{}(\finit{}{\forall}), \QTLfr{\Until}{}{}(\infinit{}{\exists}), \QTLfr{\Until}{}{}(\infinit{}{\forall}) \}$.
	It can be seen that the formula $\Diamond^{+} \last \lor \D P(x)$ is $\finit{}{\exists}$. However,
		\begin{itemize}
			\item it is not $\finit{}{\forall}$:
				(under any assignment)
				the formula
				is satisfied in a finite trace $\Fmf = ( \Delta, (\Fmc_{0}))$, with $0$ as its only time point and such that $P^{\Fmc_{0}} = \emptyset$, but an extension $\Imf \in \Ext(\Fmf)$ such that $\Imf = ( \Delta, (\Imc_{n})_{n \in [0, \infty)})$, with $\Fmc_{0} = \Imc_{0}$ and $P^{\Imc_{n}} = \emptyset$, for every $n \in [0, \infty)$, does not satisfy it;
			\item it is not $\infinit{}{\exists}$:
				(under any assignment) an infinite trace $\Imf = ( \Delta, (\Imc_{n})_{n \in [0, \infty)})$ such that $P^{\Imc_{n}} = \emptyset$, for every $n \in [0, \infty)$ does not satisfy the formula, whereas any (and thus some) prefix $\Fmf \in \Pre(\Imf)$ satisfies it;
			\item it is not $\infinit{}{\forall}$: shown as in the previous case.
		\end{itemize}
	\item $\QTLfr{\Until}{}{}(\finit{}{\forall}) \not \subseteq Y$, with $Y \in \{ \QTLfr{\Until}{}{}(\finit{}{\exists}), \QTLfr{\Until}{}{}(\infinit{}{\exists}), \QTLfr{\Until}{}{}(\infinit{}{\forall}) \}$.
		It can be seen that the formula $\forall x \D^{+} P (x)$ is \finit{}{\forall}. However,
		\begin{itemize}
			\item it is not $\finit{}{\exists}$: (under any assignment) the finite trace $\Fmf = ( \Delta, (\Fmc_{0}))$, with $0$ as its only time point and such that $P^{\Fmc_{0}} = \emptyset$, does not satisfy the formula, whereas an extension $\Imf \in \Ext(\Fmf)$ such that $\Imf = ( \Delta, (\Imc_{n})_{n \in [0, \infty)})$, with $P^{\Imc_{1}} = \Delta$, satisfies it;
			\item it is not $\infinit{}{\exists}$: (under any assignment) an infinite trace $\Imf = ( \{ d_{i} \}_{i \in \mathbb{N}}, (\Imc_{n})_{n \in [0, \infty)})$ such that $P^{\Imc_{i}} = \{ d_{i} \}$, for every $i \in \mathbb{N}$, satisfies the formula, but there is no finite prefix $\Fmf \in \Pre(\Imf)$ that satisfies it;
			\item it is not $\infinit{}{\forall}$: (under any assignment) an infinite trace $\Imf = ( \Delta, (\Imc_{n})_{n \in [0, \infty)})$ such that $P^{\Imc_{0}} = \emptyset$ and $P^{\Imc_{1}} = \Delta$ satisfies the formula, but the prefix $\Fmf \in \Pre(\Imf)$ such that $\Fmf = (\Delta, (\Fmc_{0}))$, where $\Fmc_{0} = \Imc_{0}$, does not satisfy it.
		\end{itemize}
	\item $\QTLfr{\Until}{}{}(\infinit{}{\exists}) \not \subseteq Y$, with $Y \in \{ \QTLfr{\Until}{}{}(\finit{}{\exists}), \QTLfr{\Until}{}{}(\finit{}{\forall}), \QTLfr{\Until}{}{}(\infinit{}{\forall}) \}$.
			It can be seen that the formula $\B \Next \top \lor \last$ is $\infinit{}{\exists}$. However,
		\begin{itemize}
			\item it is not $\finit{}{\exists}$: (under any assignment) a finite trace $\Fmf = ( \Delta, (\Fmc_{0}, \Fmc_{1}))$, with $0, 1$ as its only time points, does not satisfy the formula (at time point $0$), whereas any (and thus some) extension $\Imf \in \Ext(\Fmf)$ satisfies it;
			\item it is not $\finit{}{\forall}$: shown as in the previous case;
			\item it is not $\infinit{}{\forall}$: (under any assignment) an infinite trace $\Imf = ( \Delta, (\Imc_{n})_{n \in [0, \infty)})$ satisfies the formula, but a prefix $\Fmf \in \Pre(\Imf)$ such that $\Fmf = (  \Delta, (\Fmc_{0}, \Fmc_{1}))$, with $\Fmc_{i} = \Imc_{i}$, for $i \in \{ 0, 1 \}$, does not.
		\end{itemize}
	\item $\QTLfr{\Until}{}{}(\infinit{}{\forall}) \not \subseteq Y$, with $Y \in \{ \QTLfr{\Until}{}{}(\finit{}{\exists}), \QTLfr{\Until}{}{}(\finit{}{\forall}), \QTLfr{\Until}{}{}(\infinit{}{\exists}) \}$.
			It can be seen that the formula $\B^{+}P(x) \lor \D^{+}(P(x) \land \last)$ is $\infinit{}{\forall}$. However,
		\begin{itemize}
			\item it is not $\finit{}{\exists}$: (under any assignment) the formula is satisfied in a finite trace $\Fmf = ( \Delta, (\Fmc_{0}, \Fmc_{1}))$, with $0, 1$ as its only time points and such that $P^{\Fmc_{0}} = \emptyset$ and $P^{\Fmc_{1}} = \Delta$, whereas no extension $\Imf \in \Ext(\Fmf)$ satisfies it;
			\item it is not $\finit{}{\forall}$: shown as in the previous case;
			\item it is not $\infinit{}{\exists}$: (under any assignment) an infinite trace $\Imf = ( \Delta, (\Imc_{n})_{n \in [0, \infty)})$ such that $P^{\Imc_{0}} = \Delta$ and $P^{\Imc_{i}} = \emptyset$, for $i > 0$, does not satisfy the formula, whereas the prefix $\Fmf \in \Ext(\Imf)$ such that $\Fmf = ( \Delta, (\Fmc_{0}))$, with $\Fmc_{0} = \Imc_{0}$, satisfies it.
			\qedhere
		\end{itemize}
\end{itemize}
\end{proof}

On the relationships between the one directional properties, we have
the following.
\begin{proposition}
  The following statements hold.
\begin{enumerate}
\item Given $Q, Q' \in \{ \exists, \forall \}$, with $Q \neq Q'$,
we have:
  $\p \in \QTLfr{\Until}{}{}(\finit{\Rightarrow}{Q})$ iff
  $\lnot \p \in \QTLfr{\Until}{}{}(\finit{\Leftarrow}{Q'})$, and
  $\p \in \QTLfr{\Until}{}{}(\infinit{\Rightarrow}{Q})$ iff
  $\lnot \p \in \QTLfr{\Until}{}{}(\infinit{\Leftarrow}{Q'})$.
\item
Given $\mathsf{P} \in \{ \mathsf{F}, \mathsf{I} \}$,
we have
  $\QTLfr{\Until}{}{}(\mathsf{P}_{\Rightarrow\forall}) \subseteq
  \QTLfr{\Until}{}{}(\mathsf{P}_{\Rightarrow\exists})$ and
  $\QTLfr{\Until}{}{}(\mathsf{P}_{\Leftarrow\exists}) \subseteq
  \QTLfr{\Until}{}{}(\mathsf{P}_{\Leftarrow\forall})$.
 %
\item $\QTLfr{\Until}{}{}(\finit{\Rightarrow}{\forall})$ and
  $\QTLfr{\Until}{}{}(\infinit{\Rightarrow}{\exists})$, as well as
  $\QTLfr{\Until}{}{}(\infinit{\Rightarrow}{\forall})$ and
  $\QTLfr{\Until}{}{}(\finit{\Rightarrow}{\exists})$, are incomparable
  with respect to inclusion.
\end{enumerate}
\end{proposition}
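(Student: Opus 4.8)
The plan is to handle the three items separately, using only quantifier duality for~(1), the non-emptiness of $\Ext(\Fmf)$ and $\Pre(\Imf)$ for~(2), and four small witness formulas exploiting vacuous satisfaction for~(3). For item~(1), I would argue by contraposition on the inner implication defining the one-directional properties. Fix a finite trace $\Fmf$ and an assignment $\assign$. The condition $\Fmf \models^{\assign} \p \Rightarrow Q\Imf \in \Ext(\Fmf).\,\Imf \models^{\assign} \p$ is logically equivalent to its contrapositive $\lnot(Q\Imf \in \Ext(\Fmf).\,\Imf \models^{\assign} \p) \Rightarrow \Fmf \models^{\assign} \lnot\p$. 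Pushing the negation through the quantifier, and using that $\Imf \models^{\assign} \lnot\p$ iff $\Imf \not\models^{\assign} \p$ (since satisfaction of a negation at instant $0$ is the negation of satisfaction at $0$), the antecedent turns into $Q'\Imf \in \Ext(\Fmf).\,\Imf \models^{\assign} \lnot\p$, where $Q'$ is the dual quantifier of $Q$. Ranging over all $\Fmf,\assign$ yields exactly $\lnot\p \in \QTLfr{\Until}{}{}(\finit{\Leftarrow}{Q'})$; no non-emptiness of $\Ext(\Fmf)$ is needed, as $\lnot\exists = \forall\lnot$ holds on any set. The infinite case is identical, with $\Ext(\Fmf)$ replaced by $\Pre(\Imf)$ and the quantification ranging over infinite traces.

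For item~(2), I would use that $\Ext(\Fmf) \neq \emptyset$ for every finite trace $\Fmf$ (e.g.\ its frozen extension $\Fmf^{\omega}$) and $\Pre(\Imf) \neq \emptyset$ for every infinite trace $\Imf$ (e.g.\ the one-instant prefix $\Imf_{0}$). Over a non-empty set, $\forall\Imf.\,\Imf \models^{\assign} \p$ implies $\exists\Imf.\,\Imf \models^{\assign} \p$. Hence the consequent of the $\Rightarrow\forall$ property implies that of the $\Rightarrow\exists$ property, giving $\QTLfr{\Until}{}{}(\mathsf{P}_{\Rightarrow\forall}) \subseteq \QTLfr{\Until}{}{}(\mathsf{P}_{\Rightarrow\exists})$; and since the antecedent of the $\Leftarrow\forall$ property ($\forall$) implies that of the $\Leftarrow\exists$ property ($\exists$), an implication that is valid from the weaker ($\exists$) antecedent is a fortiori valid from the stronger ($\forall$) one, giving $\QTLfr{\Until}{}{}(\mathsf{P}_{\Leftarrow\exists}) \subseteq \QTLfr{\Until}{}{}(\mathsf{P}_{\Leftarrow\forall})$. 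The argument is uniform in $\mathsf{P} \in \{\mathsf{F},\mathsf{I}\}$.

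For item~(3), I would exhibit explicit separating sentences (so assignments are irrelevant). To separate $\finit{\Rightarrow}{\forall}$ from $\infinit{\Rightarrow}{\exists}$: the sentence $\Box^{+}\Next\top$ is false on every finite trace (its last instant lacks a successor) and true on every infinite trace, so it is vacuously $\finit{\Rightarrow}{\forall}$ yet fails $\infinit{\Rightarrow}{\exists}$, since a satisfying infinite trace has no satisfying finite prefix; conversely $\last$ ($= \Box\bot$) is false at the initial instant of every infinite trace and true on the one-instant finite trace, hence vacuously $\infinit{\Rightarrow}{\exists}$ but not $\finit{\Rightarrow}{\forall}$, as the extensions of that one-instant trace are infinite and falsify $\last$. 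To separate $\infinit{\Rightarrow}{\forall}$ from $\finit{\Rightarrow}{\exists}$: the sentence $\D\last$ is false on every infinite trace and true on every finite trace with at least two instants, so it is vacuously $\infinit{\Rightarrow}{\forall}$ but fails $\finit{\Rightarrow}{\exists}$ (its finite models have only infinite, falsifying extensions); while $\D\top$ holds on finite traces with at least two instants and on all infinite traces but fails at the one-instant prefix $\Imf_{0}$ of any infinite trace, making it $\finit{\Rightarrow}{\exists}$ but not $\infinit{\Rightarrow}{\forall}$.

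I expect items~(1) and~(2) to be routine once the quantifier bookkeeping and the non-emptiness of $\Ext$ and $\Pre$ are recorded. The only real care lies in item~(3): for each witness I must verify both the (typically vacuous) membership and the explicit failure with a concrete counterexample trace, and confirm that the operators behave as intended at the relevant instants — in particular the strict reading of $\D$ and $\Box$ versus the reflexive $\Box^{+}$, and the vacuous truth of $\last = \Box\bot$ at a last instant. This last point is the main potential pitfall, since a misread of strict versus reflexive semantics would invalidate the vacuity claims on which the separations rest.
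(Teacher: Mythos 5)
Your proof is correct and follows essentially the same route as the paper: contraposition with quantifier duality for item~(1), non-emptiness of $\Ext(\Fmf)$ and $\Pre(\Imf)$ for item~(2) (which the paper dismisses as ``straightforward from the definitions''), and vacuous-satisfaction witnesses for item~(3), where the paper uses only $\Box^{+} \Next \top$ and $\D \last$ and derives the remaining two separations via item~(2), while you add $\last$ and $\D \top$ as independent witnesses. In passing, your statement that $\D \last$ holds on finite traces \emph{with at least two instants} is more precise than the paper's claim that it ``always holds on finite traces'' (it fails on one-instant traces), though this does not affect either proof.
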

\begin{proof}
  $(1)$ Let $Q, Q' \in \{ \exists, \forall \}$, with $Q \neq Q'$, and
  suppose that $\p \in \QTLfr{\Until}{}{}(\finit{\Rightarrow}{Q})$.
  This means that, for all $\Fmf$ and all assignments $\assign$, we
  have:
  $\Fmf \models^{\assign} \p \Rightarrow Q \Imf \models^{\assign} \p$.
  By contraposition, the previous step means, for all $\Fmf$ and all
  assignments $\assign$:
  $Q' \Imf \not \models^{\assign} \p \Rightarrow \Fmf \not
  \models^{\assign} \p$.  That is, for all $\Fmf$ and all assignments
  $\assign$:
  $\Fmf \models^{\assign} \lnot \p \Leftarrow Q' \Imf
  \models^{\assign} \lnot \p$.  Thus,
  $\lnot \p \in \QTLfr{\Until}{}{}(\finit{\Leftarrow}{Q'})$.
Similarly, it can be seen that $\p \in \QTLfr{\Until}{}{}(\infinit{\Rightarrow}{Q})$ iff $\lnot \p \in \QTLfr{\Until}{}{}(\infinit{\Leftarrow}{Q'})$, for $Q, Q' \in \{ \exists, \forall \}$, $Q \neq Q'$.

$(2)$ Straightforward from the definitions.

$(3)$ We have, e.g., that $\B^{+} \Next \top$ always holds on infinite traces, but it is unsatisfiable on finite traces.
Hence, $\B^{+} \Next \top \in \QTLfr{\Until}{}{}(\finit{\Rightarrow}{\forall})$, whereas $\B^{+} \Next \top \not \in \QTLfr{\Until}{}{}(\infinit{\Rightarrow}{\exists})$.
By Point~$(2)$, this implies also that $\B^{+} \Next \top \in \QTLfr{\Until}{}{}(\finit{\Rightarrow}{\exists})$ and $\B^{+} \Next \top \not \in \QTLfr{\Until}{}{}(\infinit{\Rightarrow}{\forall})$.
On the other hand, $\D \last$ always holds on finite traces, while it is unsatisfiable on infinite ones. Hence, $\D \last \in \QTLfr{\Until}{}{}(\infinit{\Rightarrow}{\forall})$, but $\D \last \not \in \QTLfr{\Until}{}{}(\finit{\Rightarrow}{\exists})$.
By Point~$(2)$, we obtain also that $\D \last \in \QTLfr{\Until}{}{}(\infinit{\Rightarrow}{\exists})$, and $\D \last \not \in \QTLfr{\Until}{}{}(\finit{\Rightarrow}{\forall})$.
\end{proof}

We now consider the problem of formula equivalence, by showing under
which semantic properties equivalence between formulas can be
blurred.
The following theorem provides sufficient conditions to
preserve formula equivalence from the infinite to the finite
case (cf. the notion of \emph{$\LTL$ compliance} in~\cite{BauEtAl}).

%
\begin{restatable}{theorem}{Theoremusonlyif}\label{thm:equivusif}
Given $X \in \{ \QTLfr{\Until}{}{}(\finit{}{\exists}), \QTLfr{\Until}{}{}(\finit{}{\forall}),
 \QTLfr{\Until}{}{}(\finit{\Rightarrow}{\exists}) \cap \QTLfr{\Until}{}{}(\infinit{\Rightarrow}{\forall}) \}$
 and $\p, \psi \in X$, it holds that
$\equivinf{\p}{\psi}$ implies $\equivfin{\p}{\psi}$.
\end{restatable}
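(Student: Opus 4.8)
The plan is to treat the three possibilities for $X$ separately, in each case combining the hypothesis $\equivinf{\p}{\psi}$ with the semantic property defining $X$. The only fact I would isolate first is the elementary duality between extensions and prefixes that is immediate from the definition of concatenation: for a finite trace $\Fmf$ and an infinite trace $\Imf$, one has $\Imf \in \Ext(\Fmf)$ if and only if $\Fmf \in \Pre(\Imf)$. To establish $\equivfin{\p}{\psi}$ it then suffices, fixing an arbitrary finite trace $\Fmf$ and assignment $\assign$, to show that $\Fmf \models^{\assign} \p$ iff $\Fmf \models^{\assign} \psi$.

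For $X = \QTLfr{\Until}{}{}(\finit{}{\exists})$ and $X = \QTLfr{\Until}{}{}(\finit{}{\forall})$ the argument is a direct unfolding. Writing $Q$ for the relevant quantifier, the finite trace property applied to $\p$ and to $\psi$ gives that $\Fmf \models^{\assign} \p$ iff $Q\,\Imf \in \Ext(\Fmf).\, \Imf \models^{\assign} \p$, together with the analogous statement for $\psi$. Since $\equivinf{\p}{\psi}$ means $\Imf \models^{\assign} \p$ iff $\Imf \models^{\assign} \psi$ for every infinite trace $\Imf$, the two quantified right-hand sides coincide, and therefore so do $\Fmf \models^{\assign} \p$ and $\Fmf \models^{\assign} \psi$. (Non-emptiness of $\Ext(\Fmf)$, which always holds, plays no role in the argument.)

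The case I would flag as the crux is $X = \QTLfr{\Until}{}{}(\finit{\Rightarrow}{\exists}) \cap \QTLfr{\Until}{}{}(\infinit{\Rightarrow}{\forall})$, where only the two one-directional properties are available and must be chained through a common witnessing model. Assuming $\Fmf \models^{\assign} \p$, I would use $\finit{\Rightarrow}{\exists}$ on $\p$ to obtain some $\Imf \in \Ext(\Fmf)$ with $\Imf \models^{\assign} \p$; then $\equivinf{\p}{\psi}$ yields $\Imf \models^{\assign} \psi$; and finally, since $\Fmf \in \Pre(\Imf)$, the property $\infinit{\Rightarrow}{\forall}$ on $\psi$ delivers $\Fmf \models^{\assign} \psi$. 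The reverse implication follows by exchanging the roles of $\p$ and $\psi$, which is legitimate since both lie in $X$ and $\equivinf{\p}{\psi}$ is symmetric. I do not expect a genuine obstacle here; the one point demanding attention is keeping the quantifiers aligned, namely applying the existential direction to the formula we start from and the universal direction to the formula we conclude about, threaded through the \emph{same} infinite trace, since reversing this pairing would break the chain.
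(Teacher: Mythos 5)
Your proposal is correct and takes essentially the same route as the paper's proof: in every case you transfer satisfaction from $\Fmf$ to an infinite extension (via $\finit{}{Q}$ or $\finit{\Rightarrow}{\exists}$), apply $\equivinf{\p}{\psi}$ there, and return to $\Fmf$ via the converse direction (respectively via $\infinit{\Rightarrow}{\forall}$ together with $\Fmf \in \Pre(\Imf)$), exactly as the paper does. The only difference is presentational: you handle the $\finit{}{\exists}$ and $\finit{}{\forall}$ cases uniformly as one chain of biconditionals with a generic quantifier $Q$, whereas the paper proves one implication and obtains the other by swapping $\p$ and $\psi$.
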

%


\begin{proof}

  First, assume
$\p, \psi \in \QTLfr{\Until}{}{}(\finit{}{\exists})$.
 Let $\Fmf$ be a finite trace and $\assign$ an assignment such that $\Fmf \mdl^{\assign} \p$.  By
  $\finit{}{\exists}$, there is an infinite trace
  $\Imf \in \Ext(\Fmf)$ such that $\Imf \mdl^{\assign} \p$. Since
  $\equivinf{\p}{\psi}$, we have that $\Imf \mdl^{\assign} \psi$.  By
  $\finit{}{\exists}$, $\Fmf \models^{\assign} \psi$.  
   The converse
   direction can be obtained similarly, by swapping $\p$ and $\psi$.
   The proof for
   $\p, \psi \in \QTLfr{\Until}{}{}(\finit{}{\forall})$
  is analogous.

   We now show the statement for
     $\p, \psi \in \QTLfr{\Until}{}{}(\finit{\Rightarrow}{\exists}) \cap \QTLfr{\Until}{}{}(\infinit{\Rightarrow}{\forall})$.  
   Given a finite trace $\Fmf$ and an assignment
   $\assign$, suppose that $\Fmf \mdl^{\assign} \p$.  Since $\p$
   is $\finit{\Rightarrow}{\exists}$, we have that for some
   $\Imf \in \Ext(\Fmf)$, $\Imf \mdl^{\assign} \p$.  By assumption,
   $\equivinf{\p}{\psi}$, so $\Imf \mdl^{\assign} \psi$.  As $\psi$ is
   $\infinit{\Rightarrow}{\forall}$, $\Imf \mdl^{\assign} \psi$
   implies that, for all $\Fmf' \in \Pre(\Imf)$,
   $\Fmf' \mdl^{\assign} \psi$.  Thus, in particular,
   $\Fmf \mdl^{\assign} \psi$.  The other direction can be obtained
     by swapping $\p$ and $\psi$.
\end{proof}

%
Theorem~\ref{thm:equivusif} does not hold for formulas that satisfy   
only 
$\infinit{}{\exists}$ or $\infinit{}{\forall}$.
Consider the formulas $\B \Next \top \lor \last$, from Example~\ref{tab:inscomp}, and
$\B \Next \top \lor \Next \last$, which are both $\infinit{}{\exists}$.
These formulas are equivalent only on infinite traces.
Also, $\B^{+}P(x) \lor \D^{+}(P(x) \land \last)$, from Example~\ref{tab:inscomp},
and
$\B^{+}P(x) \lor \D^{+}(P(x) \land \Next \last)$ are
$\infinit{}{\forall}$, and equivalent on infinite but not on finite
traces.  The last
example also shows that the
condition $\infinit{\Rightarrow}{\forall}$ 
alone is not sufficient for Theorem~\ref{thm:equivusif}.  Moreover,
$\finit{\Rightarrow}{\exists}$ alone is also not sufficient.
To see this, consider, e.g., $\B^+\D\top\vee (P(x)\wedge\D \last)$ 
and $\B^+\D\top \vee \D \last$ , which are $\finit{\Rightarrow}{\exists}$  
but
equivalent only on infinite traces. 
%

We now present sufficient conditions   to preserve
equivalences from the finite to the infinite case.

\begin{restatable}{theorem}{Theoremusonlyif}\label{thm:equivconverse}
Given $X \in \{ \QTLfr{\Until}{}{}(\infinit{}{\exists}), \QTLfr{\Until}{}{}(\infinit{}{\forall}),
 \QTLfr{\Until}{}{}(\finit{\Rightarrow}{\forall}) \cap \QTLfr{\Until}{}{}(\infinit{\Rightarrow}{\exists}) \}$
 and $\p, \psi \in X$, it holds that
$\equivfin{\p}{\psi}$ implies $\equivinf{\p}{\psi}$.
\end{restatable}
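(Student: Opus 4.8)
The plan is to mirror the proof of Theorem~\ref{thm:equivusif}, dualising each appeal to $\Ext$ into an appeal to $\Pre$ and vice versa, since here equivalence is being preserved in the opposite direction. I would handle the three choices of $X$ separately. In every case I would fix an infinite trace $\Imf$ and an assignment $\assign$ with $\Imf \models^{\assign} \p$, assume $\equivfin{\p}{\psi}$, and aim to derive $\Imf \models^{\assign} \psi$; the converse implication then follows by swapping $\p$ and $\psi$, which is legitimate because both formulas belong to the same class $X$ and $\equivfin{\p}{\psi}$ is symmetric.

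For $\p, \psi \in \QTLfr{\Until}{}{}(\infinit{}{\exists})$ I would argue as follows: from $\Imf \models^{\assign} \p$, the $\infinit{\Rightarrow}{\exists}$ direction produces a prefix $\Fmf \in \Pre(\Imf)$ with $\Fmf \models^{\assign} \p$; finite equivalence gives $\Fmf \models^{\assign} \psi$; and the $\infinit{\Leftarrow}{\exists}$ direction lifts this back to $\Imf \models^{\assign} \psi$. The case $\p, \psi \in \QTLfr{\Until}{}{}(\infinit{}{\forall})$ is analogous but with universal quantification: $\infinit{\Rightarrow}{\forall}$ gives $\Fmf \models^{\assign} \p$ for every $\Fmf \in \Pre(\Imf)$, finite equivalence upgrades this to $\Fmf \models^{\assign} \psi$ for every such $\Fmf$, and $\infinit{\Leftarrow}{\forall}$ then yields $\Imf \models^{\assign} \psi$.

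The third case, $\p, \psi \in \QTLfr{\Until}{}{}(\finit{\Rightarrow}{\forall}) \cap \QTLfr{\Until}{}{}(\infinit{\Rightarrow}{\exists})$, is the one I expect to be the main obstacle. Starting from $\Imf \models^{\assign} \p$, the $\infinit{\Rightarrow}{\exists}$ property of $\p$ furnishes some $\Fmf \in \Pre(\Imf)$ with $\Fmf \models^{\assign} \p$, and $\equivfin{\p}{\psi}$ gives $\Fmf \models^{\assign} \psi$. The $\finit{\Rightarrow}{\forall}$ property of $\psi$ then guarantees $\Imf' \models^{\assign} \psi$ for \emph{every} $\Imf' \in \Ext(\Fmf)$, so the remaining task is to verify that $\Imf$ itself lies in $\Ext(\Fmf)$. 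This is exactly where the interaction between the definitions of $\Pre$ and $\Ext$ matters: from $\Fmf \in \Pre(\Imf)$ we have $\Imf = \Fmf \cdot \Mmf'$ for some trace $\Mmf'$, and since $\Imf$ is infinite while $\Fmf$ is finite, $\Mmf'$ must be an infinite trace, whence $\Imf \in \Ext(\Fmf)$ by definition. Once this bridge is in place, $\Imf \models^{\assign} \psi$ follows, and swapping $\p$ and $\psi$ settles the converse direction, completing the proof.
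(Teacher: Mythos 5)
Your proposal is correct and follows essentially the same argument as the paper: for each choice of $X$, pass from $\Imf$ to a prefix via the `$\Rightarrow$' direction of the infinite trace property, transfer by $\equivfin{\p}{\psi}$, lift back via the `$\Leftarrow$' direction (or via $\finit{\Rightarrow}{\forall}$ in the third case), and close by swapping $\p$ and $\psi$. Your explicit check in the third case that $\Fmf \in \Pre(\Imf)$ implies $\Imf \in \Ext(\Fmf)$ is a detail the paper leaves implicit, but it is the same proof.
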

%


%

\begin{proof}


  %
  First, suppose that $\p, \psi \in \QTLfr{\Until}{}{}(\infinit{}{\exists})$.
  Let \Imf be an infinite trace
  and $\assign$ an assignment 
  such that
  $\Imf \models^{\assign} \p$.  By $\infinit{}{\exists}$,
  there is $\Fmf \in \Pre(\Imf)$ such that $\Fmf \mdl^{\assign} \p$.
  As $\equivfin{\p}{\psi}$, this means that
  $\Fmf \mdl^{\assign} \psi$.  Since $\psi$ is $\infinit{}{\exists}$,
  we have $\Imf \mdl^{\assign} \psi$.  The converse direction is
  obtained similarly, by swapping $\p$ and $\psi$. 
   The proof
  for
  $\p, \psi \in \QTLfr{\Until}{}{}(\infinit{}{\forall})$
  is analogous.

  We now show
  the statement for $\p, \psi \in  \QTLfr{\Until}{}{}(\finit{\Rightarrow}{\forall}) \cap \QTLfr{\Until}{}{}(\infinit{\Rightarrow}{\exists})$.
  Let $\Imf$ be an infinite trace and
  $\assign$ be an assignment such that $\Imf \mdl^{\assign} \p$.  As
  $\p$ is $\infinit{\Rightarrow}{\exists}$, there is
  $\Fmf \in \Pre(\Imf)$ such that $\Fmf \mdl^{\assign} \p$.  Given
  that $\equivfin{\p}{\psi}$, $\Fmf \mdl^{\assign} \psi$.  By
  $\finit{\Rightarrow}{\forall}$, for all $\Imf' \in \Ext(\Fmf)$:
  $\Imf' \mdl^{\assign} \psi$.  Therefore, we have also
  $\Imf \mdl^{\assign} \psi$.  The converse direction is obtained in a
  similar way by swapping $\p$ and $\psi$.
\end{proof}

%
The properties
$\finit{}{\exists}$ or $\finit{}{\forall}$
alone
are not
sufficient to ensure that formula equivalence on finite traces
implies formula equivalence on infinite traces.  To illustrate this,
consider for example the formulas
$\D^{+} \last \lor \D P(x)$,
 shown in Example~\ref{tab:inscomp},
and
$\D^{+} \last \lor \D P(x) \lor \vartheta_{i}$,
where Formula~$\vartheta_{i}$
is from
Section~\ref{sec:boundprop}.
These formulas are $\finit{}{\exists}$, however, they are only
equivalent on finite traces.
Moreover, if we take
$\forall x \D^{+} P (x)$, 
from
Example~\ref{tab:inscomp},
and
$\forall x \D^{+} P (x)\lor \vartheta_{i}$,
we have that they are both
$\finit{}{\forall}$, though equivalent only on finite traces.  The
last example also shows that the condition
$\finit{\Rightarrow}{\forall}$ 
alone is not sufficient for Theorem~\ref{thm:equivconverse}.  We now
argue that $\infinit{\Rightarrow}{\exists}$ alone is also not
sufficient.
  To see this,
consider, e.g., $(P(x)\wedge\B^+\D\top)\vee \D \textit{last}$ and
$\B^+\D\top\vee \D \textit{last}$, which are
$\infinit{\Rightarrow}{\exists}$ but are equivalent only on finite
traces.


From Theorems~\ref{thm:equivusif} and~\ref{thm:equivconverse} 
we have that if
$\p, \psi \in \QTLfr{\Until}{}{}(\finit{}{\exists})$ or
$\p, \psi \in \QTLfr{\Until}{}{}(\finit{}{\forall})$,
and
$\p, \psi \in \QTLfr{\Until}{}{}(\infinit{}{\exists})$ or
$\p, \psi \in \QTLfr{\Until}{}{}(\infinit{}{\forall})$,
then
$\equivfin{\p}{\psi}$ if and only if $\equivinf{\p}{\psi}$.
In particular, the above examples show that if, from a given pair of
conditions $\finit{}{Q}$ and $\infinit{}{Q'}$, we remove any of the
two properties, then formula equivalences on finite and infinite
traces may not coincide.

\subsubsection{Preserving Formula Equivalences: Syntactic Characterisation}

We now analyse syntactic features of the properties introduced so far,
providing classes of formulas that satisfy them. This will in turn
allow us to show results on preservation of equivalences, for such
formulas, between finite and infinite traces.

First, we make the following observation concerning non-temporal
$\QTLfr{\Until}{}{}$ formulas.
\begin{restatable}{proposition}{TheoremBoolean}\label{thm:boolean}
  For every non-temporal $\QTLfr{\Until}{}{}$ formula $\p$, every
  finite trace $\Fmf = (\Delta^{\Fmf}, (\Fmc_{n})_{n \in [0, l]})$,
  every infinite trace
  $\Imf = (\Delta^{\Imf}, (\Imc_{n})_{n \in [0, \infty)})$, every
  $n \geq 0$, and every assignment $\assign$ in $\Fmf$ or $\Imf$,
  respectively, the following hold, where $Q \in \{ \exists, \forall \}$:
\begin{itemize}
	\item $\Fmf, n \models^{\assign} \p \Leftrightarrow Q \Imf \in \textit{Ext}(\Fmf). \Imf, n \models^{\assign} \p;$
	\item $\Imf, n \models^{\assign} \p \Leftrightarrow Q \Fmf \in \textit{Pre}(\Imf).\Fmf, n \models^{\assign} \p.$
\end{itemize}
In particular, $\p \in \QTLfr{\Until}{}{}(\mathsf{P})$, for every
$\mathsf{P} \in \{ \finit{}{\exists}, \finit{}{\forall},
\infinit{}{\exists}, \infinit{}{\forall} \}$.
\end{restatable}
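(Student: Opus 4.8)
The plan is to reduce everything to the observation that the truth of a non-temporal formula at a time point is a purely \emph{local} matter, depending only on the first-order interpretation sitting at that point, and then to note that passing to an extension or to a prefix leaves this local data unchanged at the relevant time points.

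Concretely, I would first prove, by structural induction on the non-temporal formula $\p$, the following locality claim: for any two traces $\Mmf$ and $\Nmf$ sharing the same domain $\Delta$ and the same constant interpretations, any time point $n$ of $\Mmf$ and any time point $n'$ of $\Nmf$ with $P^{\Mmc_{n}} = P^{\Nmc_{n'}}$ for every predicate $P$, and any assignment $\assign$, it holds that $\Mmf, n \models^{\assign} \p$ iff $\Nmf, n' \models^{\assign} \p$. The base case $\p = P(\bar{\tau})$ is immediate, since $\Mmf, n \models^{\assign} P(\bar{\tau})$ amounts to $\assign(\bar{\tau}) \in P^{\Mmc_{n}}$, which by hypothesis equals $\assign(\bar{\tau}) \in P^{\Nmc_{n'}}$. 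The Boolean cases $\lnot \psi$ and $\psi \land \chi$ follow directly from the inductive hypothesis. For $\exists x \psi$ the key point is that the range of admissible variants $\assign[x \mapsto d]$ is governed solely by the shared domain $\Delta$, so applying the inductive hypothesis to each such variant transfers the existential verbatim. Crucially, there is no $\Until$ case to treat, precisely because $\p$ is non-temporal.

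I would then connect this claim to extensions and prefixes. If $\Imf \in \Ext(\Fmf)$, then $\Imf = \Fmf \cdot \Imf'$, and the definition of concatenation gives $\Delta^{\Imf} = \Delta^{\Fmf}$, equal constant interpretations, and $P^{\Imc_{n}} = P^{\Fmc_{n}}$ for every predicate $P$ and every $n \in [0, l]$. Hence the locality claim yields $\Fmf, n \models^{\assign} \p$ iff $\Imf, n \models^{\assign} \p$ for each such $n$ and each extension. Since $\Ext(\Fmf)$ is non-empty, both the $\forall$- and the $\exists$-form of the first biconditional follow at once: the $\forall$-direction because every extension agrees with $\Fmf$ at $n$, and the $\exists$-direction because a single extension already suffices. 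The prefix case is symmetric: for any $\Fmf \in \Pre(\Imf)$ whose domain contains $n$ we again have $P^{\Fmc_{n}} = P^{\Imc_{n}}$, and the finite prefix $\Imf_{n}$ ending at $n$ always lies in $\Pre(\Imf)$ and contains $n$, which furnishes both the required witness and the non-emptiness needed for the two forms of the second biconditional. Finally, instantiating both biconditionals at $n = 0$ reproduces verbatim the defining conditions of $\finit{}{\exists}$, $\finit{}{\forall}$, $\infinit{}{\exists}$, and $\infinit{}{\forall}$, which settles the ``in particular'' clause.

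The argument carries no serious obstacle; the only points demanding care are the quantifier step of the induction, where one must observe that the set of $x$-variants of $\assign$ is fixed by the common domain and is insensitive to the temporal index, and, in the prefix direction, the bookkeeping ensuring that one quantifies only over prefixes long enough to contain the time point $n$, for which the canonical witness $\Imf_{n}$ simultaneously settles well-definedness and non-emptiness.
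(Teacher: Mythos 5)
Your proposal is correct and takes essentially the same approach as the paper: the paper's proof is a one-sentence appeal to exactly the locality observation you formalise (truth of a non-temporal formula at a time point depends only on the first-order interpretation at that point, which passing to an extension or prefix leaves untouched), and your structural induction merely spells out what the paper dismisses with ``clearly''. Your added care about restricting to prefixes long enough to contain the time point $n$, with $\Imf_{n}$ as canonical witness, tidies up a detail the paper's statement and proof gloss over, but it is elaboration of the same argument rather than a different route.
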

%
\begin{proof}
Clearly, since $\p$ has no temporal operators, for any
finite or infinite trace 
\Mmf,  $\Mmf$ satisfies $\p$ at $n$ under $\assign$ iff  
any extension or prefix of \Mmf satisfies $\p$ at $n$ under $\assign$, respectively. 
\end{proof}

We now introduce the relevant fragments of $\QTLfr{\Until}{}{}$ that will be analysed in the rest of this section.
%
%
First, \emph{$\UntilP$-formulas} $\varphi,\psi$ are built according to the grammar (with
$P\in\textsf{N}_{\textsf{P}}$):
\[
P(\bar{\tau})\mid \neg
P(\bar{\tau}) \mid
\varphi\wedge \psi \mid
\varphi\vee \psi \mid
\exists x \varphi \mid
\p \UntilP \psi.
\]
Moreover, we call \emph{$\Until$-formulas} the set of formulas generated by
allowing $\p \Until \psi$ in the grammar rule for
$\UntilP$-formulas\footnote{Recall that $\UntilP$ is syntactic sugar in the fragment of {$\Until$}-formulas,
since $\p \UntilP \psi := \psi \lor (\p \land \p \Until \psi)$.},
and we call
\emph{$\UntilP\forall$-formulas} the result of allowing $\forall
x\varphi$ in the grammar rule for $\UntilP$-formulas.


Next,
\emph{$\ReleaseP$-formulas}
$\varphi,\psi$ are built according to the grammar (with
$P\in\textsf{N}_{\textsf{P}}$):
\[
P(\bar{\tau})\mid \neg
P(\bar{\tau}) \mid
\varphi\wedge \psi \mid
\varphi\vee \psi \mid
\forall x \varphi \mid
\p \ReleaseP \psi.
\]
We call
\emph{$\Release$-formulas} the set of formulas generated by
allowing $\p \Release \psi$ in the grammar rule for $\ReleaseP$-formulas\footnote{Recall that $\ReleaseP$ is syntactic sugar in the fragment of {$\Release$}-formulas, since $\p \ReleaseP \psi := \psi \land (\p \lor \p \Release \psi)$.}, and
we call \emph{$\ReleaseP\exists$-formulas} the result of allowing
$\exists x\varphi$ in the grammar rule for $\ReleaseP$-formulas.



\begin{table}[h]
\caption{
Syntactic fragments with corresponding semantic properties and preservation of equivalences.
}
\centering
\renewcommand{\arraystretch}{1.1}
\begin{tabular}{| >{\centering}m{1cm} | >{\centering}m{2.05cm}  | >{\centering}m{2.7cm} | }
\cline{1-3}
& Properties & Equivalences
\tabularnewline
\hline
	\multirow{ 2}{*}{$\UntilP\forall$}
	&
	$\finit{}{\forall}$	
	&
	$i$ $\Rightarrow$ $f$
\tabularnewline
  & (Lemma~\ref{thm:diamond}) & (Theorem~\ref{thm:equivusif})
\tabularnewline                               
\hline
	\multirow{ 2}{*}{$\ReleaseP\exists$}
	&
	$\finit{}{\exists}$
	&
	$i$ $\Rightarrow$ $f$
\tabularnewline
     & (Lemma~\ref{thm:box}) & (Theorem~\ref{thm:equivusif})
\tabularnewline
\hline
	\multirow{ 2}{*}{$\Until$}
	&
	$\infinit{}{\exists}$
	&
	$f$ $\Rightarrow$ $i$
\tabularnewline
  & (Lemma~\ref{thm:diamond}) & (Theorem~\ref{thm:equivconverse})
\tabularnewline                               
\hline
	\multirow{ 2}{*}{$\Release$}
	&
	$\infinit{}{\forall}$
	&
	$f$ $\Rightarrow$ $i$
\tabularnewline
     & (Lemma~\ref{thm:box}) & (Theorem~\ref{thm:equivconverse})
\tabularnewline
\hline
	\multirow{ 2}{*}{$\UntilP$}
	&
	$\finit{}{\forall}$, $\infinit{}{\exists}$
	&
	$f$ $\Leftrightarrow$ $i$
\tabularnewline
  & (Lemma~\ref{thm:diamond}) & (Theorems~\ref{thm:equivusif}-\ref{thm:equivconverse})
\tabularnewline                               
\hline
	\multirow{ 2}{*}{$\ReleaseP$}
	&
	$\finit{}{\exists}$, $\infinit{}{\forall}$
	&
	$f$ $\Leftrightarrow$ $i$
\tabularnewline
     & (Lemma~\ref{thm:box}) & (Theorems~\ref{thm:equivusif}-\ref{thm:equivconverse})
\tabularnewline
\hline
\end{tabular}
\label{tab:mainres}
\end{table}

Having introduced such fragments, the rest of this section will be
devoted to the proof of the following theorem, which is a consequence
of Theorems~\ref{thm:equivusif}-\ref{thm:equivconverse} above and
Lemmas~\ref{thm:diamond}-\ref{thm:box} below, as outlined in
Table~\ref{tab:mainres}.

\begin{theorem}
\label{thm:mainequiv}
%
The following hold:
\begin{enumerate}
	\item for all $\UntilP$- or $\ReleaseP$-formulas $\p$ and $\psi$, $\equivfin{\p}{\psi}$ if and only if $\equivinf{\p}{\psi}$;
	\item for all $\UntilP\forall$- or $\ReleaseP\exists$-formulas $\p$ and $\psi$, $\equivinf{\p}{\psi}$ implies $\equivfin{\p}{\psi}$;
	\item for all $\Until$- or $\Release$-formulas $\p$ and $\psi$, $\equivfin{\p}{\psi}$ implies $\equivinf{\p}{\psi}$.
\end{enumerate}
\end{theorem}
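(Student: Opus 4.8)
The plan is to read Theorem~\ref{thm:mainequiv} off Table~\ref{tab:mainres}: for each syntactic fragment I first record, via Lemmas~\ref{thm:diamond} and~\ref{thm:box}, which of the semantic trace properties it enjoys, and then I invoke the matching transfer theorem. First I would do Part~(1). By Lemma~\ref{thm:diamond} every $\UntilP$-formula belongs to $\QTLfr{\Until}{}{}(\finit{}{\forall}) \cap \QTLfr{\Until}{}{}(\infinit{}{\exists})$; the first membership lets Theorem~\ref{thm:equivusif} give $\equivinf{\p}{\psi} \Rightarrow \equivfin{\p}{\psi}$, the second lets Theorem~\ref{thm:equivconverse} give the reverse implication, and together they yield the biconditional. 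The $\ReleaseP$ case is identical after replacing Lemma~\ref{thm:diamond} by Lemma~\ref{thm:box}, which places every $\ReleaseP$-formula in $\QTLfr{\Until}{}{}(\finit{}{\exists}) \cap \QTLfr{\Until}{}{}(\infinit{}{\forall})$. Parts~(2) and~(3) use only one half of this: $\UntilP\forall$- and $\ReleaseP\exists$-formulas satisfy $\finit{}{\forall}$ and $\finit{}{\exists}$ respectively (Lemmas~\ref{thm:diamond},~\ref{thm:box}), so Theorem~\ref{thm:equivusif} gives $\equivinf{\p}{\psi} \Rightarrow \equivfin{\p}{\psi}$; dually $\Until$- and $\Release$-formulas satisfy $\infinit{}{\exists}$ and $\infinit{}{\forall}$, so Theorem~\ref{thm:equivconverse} gives $\equivfin{\p}{\psi} \Rightarrow \equivinf{\p}{\psi}$.

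Since the two transfer theorems are already available, the real work is Lemmas~\ref{thm:diamond} and~\ref{thm:box}, i.e.\ verifying that the fragments satisfy the tabulated properties. I would prove each by structural induction on the formula, strengthening the statement to hold at every instant $n \in [0,l]$ of the trace (as in Lemma~\ref{lemma:finextchar}) so that the $\UntilP$/$\Until$ case has a usable hypothesis at the witness and intermediate instants. For the \emph{finite}-trace properties I would not reason with the universal quantifier over $\Ext(\Fmf)$ directly, but route through the single frozen extension $\Fmf^{\omega}$: because the suffix of $\Fmf^{\omega}$ at every instant $m \geq l$ is one and the same constant trace (the one repeating $\Fmc_{l}$), any positive eventuality witnessed in the frozen tail can be pulled back to the witness $m = l$, hence into $\Fmf$. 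Together with an easy \emph{persistence} direction (truth on $\Fmf$ propagates to all extensions, giving $\finit{\Rightarrow}{\forall}$ outright), this turns $\finit{}{\forall}$ for $\UntilP\forall$-formulas into the single-model equivalence $\Fmf, n \models \p \Leftrightarrow \Fmf^{\omega}, n \models \p$; the $\infinit{}{\exists}$ property for $\Until$-formulas is its mirror image, obtained by passing to finite prefixes and using that a positive eventuality holding on $\Imf$ already holds on a long enough prefix.

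The hard part will be the Boolean cases in which the trace quantifier fights the connective: disjunction under $\finit{}{\forall}$, and dually conjunction under $\infinit{}{\exists}$. A naive induction stalls here, since $\forall \Imf.(A(\Imf) \vee B(\Imf))$ does not split into $\forall \Imf.A(\Imf)$ or $\forall \Imf.B(\Imf)$. This is precisely what the frozen-extension reformulation is designed to fix: once $\finit{}{\forall}$ has been recast as the equivalence $\Fmf, n \models \cdot \Leftrightarrow \Fmf^{\omega}, n \models \cdot$ over a \emph{single} model, disjunction commutes through it without loss, and dually the conjunction case for $\infinit{}{\exists}$ closes by taking the longer of two finite prefixes, using that $\Pre(\Imf)$ is linearly ordered and that satisfaction persists to longer prefixes. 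Finally, I would obtain Lemma~\ref{thm:box} from Lemma~\ref{thm:diamond} (and the $\ReleaseP$/$\Release$ halves of the theorem from the $\UntilP$/$\Until$ halves) by the De~Morgan duality that sends an $\UntilP\forall$-formula to a $\ReleaseP\exists$-formula under negation and swaps $\finit{}{\forall}$ with $\finit{}{\exists}$ and $\infinit{}{\exists}$ with $\infinit{}{\forall}$, as already reflected in the negation-duality point of the preceding proposition; this avoids a second, symmetric induction.
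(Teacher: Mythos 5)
Your proposal is correct, and at the top level it is the paper's own proof: Theorem~\ref{thm:mainequiv} is assembled exactly as in Table~\ref{tab:mainres}, by feeding the memberships established in Lemmas~\ref{thm:diamond} and~\ref{thm:box} into the transfer results, Theorems~\ref{thm:equivusif} and~\ref{thm:equivconverse}. Your plan for Lemma~\ref{thm:diamond} also coincides with the paper's: the forward finite direction is the direct persistence induction (Claim~\ref{cl:diamond1}); the backward direction is routed through the single frozen extension $\Fmf^{\omega}$, which is precisely the stronger statement proved in Claim~\ref{cl:diamond2} and is exactly what defuses the disjunction case; and the conjunction case of $\infinit{\Rightarrow}{\exists}$ is closed by taking the longer of two prefixes plus prefix-persistence, which is the paper's Claim~\ref{lem:aux3} used inside Claim~\ref{cl:diamond3}. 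The one point where you genuinely depart from the paper is Lemma~\ref{thm:box}: the paper proves it by a second, fully symmetric structural induction (Claims~\ref{cl:box1}--\ref{cl:box4}, with Claim~\ref{lem:aux1} in the role of Claim~\ref{lem:aux3}), whereas you derive it from Lemma~\ref{thm:diamond} by De~Morgan duality, invoking the negation point of the proposition on one-directional properties, which swaps $\finit{}{\forall}$ with $\finit{}{\exists}$ and $\infinit{}{\exists}$ with $\infinit{}{\forall}$. This shortcut is sound, but to make it airtight you should record three small facts the paper never needs: (i) negation normal form maps the $\ReleaseP\exists$-, $\Release$-, and $\ReleaseP$-grammars onto the $\UntilP\forall$-, $\Until$-, and $\UntilP$-grammars, respectively; (ii) the NNF of $\lnot\chi$ is equivalent to $\lnot\chi$ on \emph{both} finite and infinite traces, which holds because $\lnot(\p \UntilP \psi)$ unfolds definitionally to $\lnot\p \ReleaseP \lnot\psi$ (and dually for $\Until$/$\Release$); and (iii) the properties $\finit{}{Q}$ and $\infinit{}{Q}$ are invariant under replacing a formula by one equivalent to it on both semantics. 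Granting these, your route halves the inductive work; the paper's duplicated induction buys self-containment at the cost of repetition.
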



%

We first show that every
$\UntilP\forall$-formula
is $\finit{}{\forall}$, and
every
$\Until$-formula
is
$\infinit{}{\exists}$.  As an immediate consequence, we obtain that
every
$\UntilP$-formula
is both
$\finit{}{\forall}$ and $\infinit{}{\exists}$.
\begin{restatable}{lemma}{TheoremDiamond}\label{thm:diamond}
  $\UntilP\forall$-formulas are $\finit{}{\forall}$
      and
      $\Until$-formulas are $\infinit{}{\exists}$.
      Thus,
      $\UntilP$-formulas are both $\finit{}{\forall}$ and
      $\infinit{}{\exists}$.
\end{restatable}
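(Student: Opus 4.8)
The plan is to prove the two assertions---that every $\UntilP\forall$-formula is $\finit{}{\forall}$ and that every $\Until$-formula is $\infinit{}{\exists}$---separately by structural induction, and then read off the statement about $\UntilP$-formulas as an immediate corollary, since every $\UntilP$-formula is both an $\UntilP\forall$-formula and an $\Until$-formula. In each case I would strengthen the inductive statement to quantify over all time points $n$, using the absence of past operators so that $\Mmf, n \mdl^{\assign} \p$ depends only on the suffix $\Mmf^{n}$, and using that a finite trace and any extension in $\Ext(\Fmf)$ (respectively an infinite trace and any prefix in $\Pre(\Imf)$) agree on all predicates at every shared instant, which disposes of the atomic and negated-atomic base cases locally.

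For the $\infinit{}{\exists}$ claim I expect a direct induction to succeed, because the existential quantifier over prefixes commutes with both disjunction and $\exists x$: ``$\exists \Fmf\in\Pre(\Imf)$ satisfying $\chi\vee\theta$'' splits cleanly as ``$\exists \Fmf$ satisfying $\chi$, or $\exists \Fmf$ satisfying $\theta$'', and similarly for $\exists x$. The only delicate point is combining witnesses in the $\wedge$ and $\Until$ cases, where I would take the longest of the finitely many prefixes supplied by the induction hypothesis. This needs a small auxiliary monotonicity fact---that truth of an $\Until$-formula at $n$ is preserved when a finite prefix is lengthened (to a longer finite trace, or to an infinite one). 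That monotonicity is itself an easy induction in which every temporal witness already lies inside the shorter prefix, and it simultaneously gives the ``$\Leftarrow$'' direction of $\infinit{}{\exists}$, since any $\Imf\in\Ext(\Fmf)$ then inherits satisfaction from $\Fmf$.

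The $\finit{}{\forall}$ claim is the main obstacle, and a naive induction breaks down exactly at disjunction and at $\exists x$: from $\forall\Imf\in\Ext(\Fmf).\ \Imf\mdl^{\assign}\chi\vee\theta$ one cannot infer $(\forall\Imf.\Imf\mdl\chi)$ or $(\forall\Imf.\Imf\mdl\theta)$, as different extensions may falsify different disjuncts. I would therefore split $\finit{}{\forall}$ into its two directions. The ``$\Rightarrow$'' direction ($\finit{\Rightarrow}{\forall}$) goes through by a routine induction, since a temporal witness found inside $\Fmf$ survives in every extension, and in the forward direction $\vee$/$\exists x$ and $\wedge$/$\forall x$ all behave monotonically. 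For the ``$\Leftarrow$'' direction I would route through the single canonical extension $\Fmf^{\omega}$: I prove by induction the clean biconditional $\Fmf, n \mdl^{\assign}\p \Leftrightarrow \Fmf^{\omega}, n \mdl^{\assign}\p$, which now propagates through $\vee$, $\wedge$, $\exists x$ and $\forall x$ as a plain iff, with no quantifier-commutation problem. The real work is the $\UntilP$ case of this equivalence: a witness lying in the frozen region at some $m>l$ must be pulled back to the instant $l$, using that all suffixes $(\Fmf^{\omega})^{m}$ with $m\geq l$ coincide, so that $\Fmf^{\omega}, l \mdl \theta$ while $\chi$ holds throughout $(n,l)$, and the reflexive shape of $\UntilP$ is what makes this pull-back land correctly at $l$; the induction hypothesis (applied at instants $\leq l$) then transfers these facts to $\Fmf$. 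Combining the two directions, $\forall\Imf\in\Ext(\Fmf).\ \Imf\mdl\p$ instantiates at $\Fmf^{\omega}\in\Ext(\Fmf)$, and the frozen equivalence returns $\Fmf\mdl\p$, yielding $\finit{\Leftarrow}{\forall}$ and completing the argument.
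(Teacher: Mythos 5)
Your proposal is correct and follows essentially the same route as the paper's proof: the same decomposition into the four one-directional properties, with $\finit{\Rightarrow}{\forall}$ handled by a routine induction, $\finit{\Leftarrow}{\forall}$ obtained by routing through the frozen extension $\Fmf^{\omega}$ with exactly the same pull-back of $\UntilP$-witnesses from the frozen region to the last instant (where, as you note, reflexivity of $\UntilP$ is what makes this work, in contrast to strict $\Until$), and $\infinit{}{\exists}$ handled via a prefix-monotonicity lemma used to combine the finitely many witness prefixes in the $\wedge$ and $\Until$ cases. The only (harmless) deviations are that you prove the frozen-trace statement as a biconditional where the paper proves just the one needed direction, and that your single monotonicity lemma also covers lengthening to infinite traces, thereby subsuming what the paper establishes as a separate induction for $\infinit{\Leftarrow}{\exists}$.
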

\begin{proof}

We first show that all
$\UntilP\forall$-formulas
are $\finit{}{\forall}$.
  In Claim~\ref{cl:diamond1}, we show that all
  $\UntilP\forall$-formulas
  are 
  $\finit{\Rightarrow}{\forall}$
  (in fact, for the $\finit{\Rightarrow}{\forall}$ case,
  we can also allow
    $\p \Until \psi$
  in the grammar).
  Then, in Claim~\ref{cl:diamond2}, 
  we show that all 
    $\UntilP\forall$-formulas
  are 
  $\finit{\Leftarrow}{\forall}$.
\begin{claim}\label{cl:diamond1}
%
  $\UntilP\forall$-formulas
  are
  $\finit{\Rightarrow}{\forall}$.
\end{claim}
\begin{proof}[Proof of Claim~\ref{cl:diamond1}]
Given a finite trace $\Fmf = (\Delta, (\Fmc_{n})_{n \in [0,l]})$ and an assignment $\assign$ in $\Fmf$,
we show that
$\Fmf \models^{\assign} \p$ implies
that, for every $\Imf \in \Ext(\Fmf)$,
$\Imf \models^{\assign} \p$.
The proof is by structural induction on $\p$.
By Proposition~\ref{thm:boolean}, the statement holds for the base cases of $\p = P(\bar{\tau})$ and $\p = \lnot P(\bar{\tau})$.
We now proceed with the inductive steps.
\begin{itemize}
\item
$\p = \psi \UntilP \chi$.  Suppose that
      $\Fmf\models^\assign \psi \UntilP \chi$.  This means that
      there exists $n\in [0,l]$ such that
      $\Fmf,n\models^\assign \chi$, i.e.,
      $\Fmf^n\models^\assign \chi$, and, for every $i \in [0, n)$,
      $\Fmf, i \models^\assign \psi$, i.e.,
      $\Fmf^{i} \models^\assign \psi$.
      By the inductive hypothesis,
      we have that there exists $n\in [0,l]$ such that
      $\Imf \models^\assign \chi$, for all
      $\Imf\in\textit{Ext}(\Fmf^n)$, and, for every $i \in [0, n)$,
      $\Fmf^{i} \models^\assign \psi$.
      Since, for every
      $\Imf \in \Ext(\Fmf)$ and $m \in (0, l]$, we have that $\Imf = \Fmf_{m - 1} \cdot \Imf'$, for some $\Imf' \in \Ext(\Fmf^{m})$, the previous
      step implies that, for all $\Imf\in \Ext(\Fmf)$, there exists
      $n\in [0,l]$ such that $\Imf \models^\assign \chi$, and, for
      every $i \in [0, n)$, $\Imf \models^\assign \psi$.
      That is,
      $\Imf \models^\assign \psi \UntilP \chi$.
\item $\varphi =\forall x\psi$. Suppose that
  $\Fmf\models^\assign \forall x\psi$.
This means that, for all $d\in\Delta$,
 $\Fmf\models^{\assign[x \mapsto d]} \psi$.
 By the inductive hypothesis, we have that, for all $d\in\Delta$ and
 all $\Imf\in\textit{Ext}(\Fmf)$,
 $\Imf\models^{\assign[x \mapsto d]} \psi$.
 Thus,
 for all $\Imf\in\textit{Ext}(\Fmf)$,
 $\Imf\models^\assign \forall x\psi$.
%
\item $\varphi =\exists x\psi$.  Suppose that
  $\Fmf\models^\assign \exists x\psi$.
  This means that there is $d\in\Delta$ such that
  $\Fmf\models^{\assign[x \mapsto d]} \psi$.  By the inductive
  hypothesis, we have that, for all $\Imf\in\textit{Ext}(\Fmf)$,
  $\Imf\models^{\assign[x \mapsto d]} \psi$.  Thus,
  for all $\Imf\in\textit{Ext}(\Fmf)$,
  $\Imf\models^\assign \exists x\psi$.
\item The other cases  can 
be proved in a straightforward way   using the 
inductive hypothesis. 
\qedhere
\end{itemize}
\end{proof}
%

\begin{claim}\label{cl:diamond2}
%
$\UntilP\forall$-formulas are $\finit{\Leftarrow}{\forall}$.
\end{claim}
\begin{proof}[Proof of Claim~\ref{cl:diamond2}]
We show the following (stronger) claim: for every
finite trace
$\Fmf = \linebreak (\Delta, (\Fmc_{n})_{n \in [0, l]})$, and every $\assign$ in
$\Fmf$,
$\Fmf^\omega \models^\assign \varphi$ implies
$\Fmf \models^\assign \varphi$.  The proof is by structural induction
on $\p$.
The proof
for the base cases
$\p = P(\bar{\tau})$ and $\p = \lnot P(\bar{\tau})$
is straightforward.
We now proceed with the inductive steps.
\begin{itemize}
  \item $\p = \psi \UntilP \chi$. If
    $\Fmf^{\omega} \models^\assign \psi \UntilP \chi$, then there is
    $n \geq 0$ such that $\Fmf^{\omega}, {n} \models^\assign \chi$
    and, for every $i \in [0, n)$,
    $\Fmf^{\omega}, {i} \models^\assign \psi$.  This means that there
    is $n \geq 0$ such that $(\Fmf^{\omega})^{n} \models^\assign \chi$
    and, for every $i \in [0, n)$,
    $(\Fmf^{\omega})^{i} \models^\assign \psi$.
    If
    $n > l$, then $(\Fmf^{\omega})^{n} = (\Fmf^{l})^{\omega}$.
    Hence, without loss of generality, we can assume that $n \leq l$,
    for which it holds by definition that
    $(\Fmf^{\omega})^{n} = (\Fmf^{n})^{\omega}$.  Thus, by the
    inductive hypothesis, we obtain $\Fmf^{n} \models^\assign \chi$,
    and, for every $i \in [0, n)$, $\Fmf^{i} \models^\assign \psi$,
    meaning that $\Fmf \models^{\assign} \psi \UntilP \chi$.
        Hence,
    $(\Fmf^{l})^{\omega} \models^\assign \chi$ and, for every
    $i \in [0, l)$, $(\Fmf^{i})^{\omega} \models^\assign \psi$.  By
    the inductive hypothesis, we obtain that
    $\Fmf^{l} \models^\assign \chi$ and, for every $i \in [0, l)$,
    $\Fmf^{i} \models^\assign \psi$, again implying that
    $\Fmf \models^{\assign} \psi \UntilP \chi$.
\item $\p = \forall x \psi$. If $\Fmf^\omega \models^\assign \forall x\psi$, then for all
  $d\in\Delta$, $\Fmf^\omega \models^{\assign[x \mapsto d]} \psi$.
  By the inductive hypothesis, for all $d\in\Delta$,
  $\Fmf \models^{\assign[x \mapsto d]} \psi$.  So,
  $\Fmf \models^\assign \forall x\psi$.
\item $\p = \exists x\psi$. If $\Fmf^\omega \models^\assign \exists x\psi$, then there
  is $d\in\Delta$ such that
  $\Fmf^\omega \models^{\assign[x \mapsto d]} \psi$.  By the inductive hypothesis, for
  some $d \in \Delta$, $\Fmf \models^{\assign[x \mapsto d]} \psi$.
  Thus, $\Fmf \models^\assign \exists x\psi$.
\item The remaining cases follow by a straightforward application 
of the inductive hypothesis.
  \qedhere
\end{itemize}
\end{proof}

We
now show the second part of
Lemma~\ref{thm:diamond}, i.e., that
    $\Until$-formulas
    are $\infinit{}{\exists}$.  In
Claim~\ref{cl:diamond3}, we show that
    $\Until$-formulas
    are $\infinit{\Rightarrow}{\exists}$.  Then,
in Claim~\ref{cl:diamond4}, we show that
$\Until$-formulas
are
$\infinit{\Leftarrow}{\exists}$. Before proving
Claim~\ref{cl:diamond3}, we show the following
statement.
%
\begin{claim}\label{lem:aux3}
Let $\varphi$ be an
  $\Until$-formula.
  For every finite
  trace $\Fmf = (\Delta, (\Fmc_{n})_{n \in [0, l]})$, every prefix
  $\Fmf' = (\Delta, (\Fmc'_{n})_{n \in [0, l']}) \in \Pre(\Fmf)$, and every
  assignment $\assign$, $\Fmf'\models^\assign \varphi$ implies
  $\Fmf\models^\assign \varphi$.
\end{claim}
\begin{proof}[Proof of Claim~\ref{lem:aux3}]
  The proof is by structural induction on $\p$.  Clearly, the
  statement holds for the base cases of $\p = P(\bar{\tau})$ and
  $\p = \lnot P(\bar{\tau})$.
We now proceed with the inductive cases.
\begin{itemize}
\item
$\p = \psi \Until \chi$.  If
      $\Fmf'\models^\assign \psi \Until \chi$, then there is
      $n \in (0, l']$ such that $\Fmf',n\models^\assign \varphi$ and,
      for every $i \in (0, n)$, $\Fmf', i \models^{\assign} \psi$.
      That is, for some $n \in (0, l']$,
      $\Fmf'^{n} \models^{\assign} \chi$ and, for every $i \in (0,n)$,
      $\Fmf'^{i} \models^{\assign} \psi$.  As $\Fmf'^{m}$ is a prefix
      of $\Fmf^{m}$, for every $m \in [0, l']$, and since $l' \leq l$,
      we have by the induction hypothesis that there exists
      $n \in (0, l]$ such that $\Fmf^{n} \models^\assign \chi$ and,
      for every $i \in (0,n)$, $\Fmf^{i} \models^\assign \psi$.
      Equivalently, for some $n \in (0, l]$,
      $\Fmf, n \models^\assign \chi$ and, for every $i \in (0,n)$,
      $\Fmf, i \models^\assign \psi$.  Thus,
      $\Fmf \models^\assign \psi \Until \chi$.
\item The other cases can be proved by straightforward applications of
  the inductive hypothesis.\qedhere
\end{itemize}
\end{proof}


We can now proceed with the following claim.
\begin{claim}\label{cl:diamond3}
%
  $\Until$-formulas are
      $\infinit{\Rightarrow}{\exists}$.
\end{claim}
\begin{proof}[Proof of Claim~\ref{cl:diamond3}]
Given
an infinite trace $\Imf = (\Delta, (\Imc_{n})_{n \in [0, \infty)})$
and an assignment $\assign$ in $\Imf$, we show that
$\Imf \models^{a} \p$ implies that there exists
$\Fmf\in \textit{Pre}(\Imf)$ such that $\Fmf\models^\assign \varphi$.
The proof is by structural induction on $\p$. 
By Proposition~\ref{thm:boolean}, the statement holds for the base
cases of $\p = P(\bar{\tau})$ and $\p = \lnot P(\bar{\tau})$.
We now proceed with the inductive steps.
%
\begin{itemize}
\item
$\p = \psi \Until \chi$.  Suppose that
      $\Imf\models^\assign \psi \Until \chi$, meaning that there
      exists $n > 0$ such that $\Imf, n \models^\assign \chi$ and, for
      every $i \in (0, n)$, $\Imf, i \models^\assign \chi$.  In other
      words, there exists $n > 0$ such that
      $\Imf^{n} \models^\assign \chi$ and, for every $i \in (0,n)$,
      $\Imf^{i} \models^\assign \psi$.  By the inductive hypothesis,
      the previous step implies that there exists $n > 0$ such that
      $\Imf^{n}_{n_{j}} \models^\assign \chi$, for some
      $n_{j} \geq n$,
      and, for every $i \in (0,n)$,
      $\Imf^{i}_{i_{k}} \models^\assign \psi$, for some
      $i_{k} \geq i$.
      For such an $n > 0$, let
      $\overline{n_{j}} = \min\{ n_{j} \mid \Imf^{n}_{n_{j}}
      \models^\assign \chi \}$ and, for every $i \in (0, n)$, let
      $\overline{i_{k}} = \min\{ i_{k} \mid \Imf^{i}_{i_{k}}
      \models^\assign \psi \}$.  In addition, let $m$ be the maximum
      among $\overline{n_{j}}$ and $\overline{i_{k}}$, for
      $i \in (0, n)$.
      We have that
      $\Imf^{n}_{\overline{n_{j}}} \in \Pre(\Imf^{n}_{m})$,
      and $\Imf^{i}_{\overline{i_{k}}} \in \Pre(\Imf^{i}_{m})$, for
      every $i \in (0, n)$. Since
      $\Imf^{n}_{\overline{n_{j}}} \models^{\assign} \chi$ and
      $\Imf^{i}_{\overline{i_{k}}} \models^{\assign} \psi$, by
      Claim~\ref{lem:aux3} we obtain that, for some $n > 0$,
      $\Imf^{n}_{m} \models^{\assign} \chi$ and
      $\Imf^{i}_{m} \models^{\assign} \psi$, for every $i \in (0,n)$.
      In conclusion, there exists $\Fmf = \Imf_{m} \in \Pre(\Imf)$
      such that $\Fmf \models^{\assign} \psi \Until \chi$.
  %
\item $\p = \exists x\psi$.
Suppose that
$\Imf\models^\assign \exists x\psi$.  This means that there is
$d\in\Delta$ such that $\Imf\models^{\assign[x \mapsto d]} \psi$.  By
the inductive hypothesis,
there is $\Fmf\in \textit{Pre}(\Imf)$ such that
$\Fmf\models^{\assign[x \mapsto d]} \psi$, for some $d \in \Delta$. 
%
So, $\Fmf\models^\assign \exists x\psi$.
\item
$\p = \psi \land \chi$.
Suppose that $\Imf\models^\assign \psi \land \chi$.  This means that
$\Imf\models^\assign \psi$ and $\Imf\models^\assign \chi$.  By the
inductive hypothesis, there are $\Fmf,\Fmf'\in \textit{Pre}(\Imf)$
such that $\Fmf\models^\assign \psi$ and $\Fmf'\models^\assign \chi$.
By definition of $\Fmf$ and $\Fmf'$, either $\Fmf'$ is a prefix of
$\Fmf$ or vice versa.  Assume without loss of generality that $\Fmf'$
is a prefix of $\Fmf$.  By Claim~\ref{lem:aux3}, if
$\Fmf'\models^\assign \chi$, then $\Fmf\models^\assign \chi$.  Then,
$\Fmf\models^\assign \psi$ and $\Fmf\models^\assign \chi$, i.e.,
$\Fmf\models^\assign \psi \wedge \chi$.
\item The remaining cases follow by a straightforward application 
of the inductive hypothesis.\qedhere
\end{itemize}
\end{proof}

We now conclude the proof of Lemma~\ref{thm:diamond} by showing the
following claim.
\begin{claim}\label{cl:diamond4}
%
  $\Until$-formulas are
      $\infinit{\Leftarrow}{\exists}$.
\end{claim}
\begin{proof}[Proof of Claim~\ref{cl:diamond4}]
Given
an infinite trace $\Imf = (\Delta, (\Imc_{n})_{n \in [0, \infty)})$
and an assignment $\assign$, we show that $\Fmf \models^{\assign} \p$,
for some $\Fmf \in \Pre(\Imf)$, implies $\Imf \models^{\assign} \p$.
The proof is by structural induction on $\p$.
By Proposition~\ref{thm:boolean}, the statement holds for the base
cases of $\p = P(\bar{\tau})$ and $\p = \lnot P(\bar{\tau})$.  We now
proceed with the inductive steps.
\begin{itemize}
\item
$\p = \psi \Until \chi$.  Suppose that there is
      $\Fmf = (\Delta, (\Fmc_{n})_{n \in [0, l]}) \in
      \textit{Pre}(\Imf)$ such that
      $\Fmf \models^\assign \psi \Until \chi$.  This means that there
      exists $n \in (0, l]$ such that $\Fmf,n \models^\assign \chi$
      and, for every $i \in (0, n)$, $\Fmf, i \models^\assign \psi$.
      In other words, there exists $n \in (0, l]$ such that
      $\Fmf^n \models^\assign \chi$ and, for every $i \in (0, n)$,
      $\Fmf^{i} \models^\assign \psi$.  By the inductive hypothesis,
      the previous step implies that, for some $n \in (0, l]$,
      $\Imf^{n} \models^\assign \chi$ and, for every $i \in (0, n)$,
      $\Imf^{i} \models^\assign \psi$.  Thus, there exists $n > 0$
      such that $\Imf,n \models^\assign \chi$ and, for every
      $i \in (0, n)$, $\Imf, i \models^{\assign} \psi$, meaning that
      $\Imf \models^\assign \psi \Until \chi$.
\item $\p = \exists x \psi$.
  Suppose that there is $\Fmf\in \textit{Pre}(\Imf)$ such that
  $\Fmf\models^\assign \exists x\psi$.  This means that there is
  $d\in\Delta$ such that $\Fmf\models^{\assign[x \mapsto d]} \psi$.
  By the inductive hypothesis,
  we obtain $\Imf \models^{\assign[x \mapsto d]} \psi$,
  for some $d \in \Delta$.  Hence,
  $\Imf\models^\assign \exists x\psi$.
\item The remaining cases follow by a straightforward application of
  the inductive hypothesis.  \qedhere
\end{itemize}
\end{proof}
\end{proof}
%
The results of Lemma~\ref{thm:diamond} are tight in the sense that we
cannot extend the grammar rule for $\Until$-formulas (and not even for
$\UntilP$-formulas) with $\forall x\varphi$ while still satisfying
$\infinit{}{\exists}$,
and we cannot extend the grammar rule for $\UntilP\forall$-formulas
with $\varphi \Until \psi$ and satisfy $\finit{}{\forall}$.  Simple
counterexamples are $\forall x \D^{+} P(x)$
and $\D\top$, which are not $\infinit{}{\exists}$ and
$\finit{}{\forall}$, respectively.  To see that
$\forall x \D^{+} P(x)$ is not $\infinit{\Rightarrow}{\exists}$,
and thus not $\infinit{}{\exists}$, consider the model given by an
infinite trace
$\Imf$
with a (countably) infinite domain
$\Delta = \{a_1, a_2,\ldots, a_n, \ldots\}$, where
the $n$-th
domain element
is in the extension of $P$ exactly at time point $n\in\mathbb{N}$,
i.e., $\Imf,n \models P(a_n)$ and $\Imf,i \not\models P(a_n)$, for any
$i\neq n$.
It can be seen that there is no finite prefix of this infinite trace
where $\forall x \D^{+} P(x)$ holds.
%
On the other hand, $\D \top$ holds in any infinite trace, but not on a
finite trace with only one time point. Thus,
$\D \top$
is not $\finit{\Leftarrow}{\forall}$, and hence not
$\finit{}{\forall}$.


We now move to the case of
$\ReleaseP\exists$-,
$\Release$-,
and
$\ReleaseP$-formulas,
by proving a result similar to
Lemma~\ref{thm:diamond}.

\begin{restatable}{lemma}{TheoremBox}\label{thm:box}
%
   $\ReleaseP\exists$-formulas are
  $\finit{}{\exists}$ and
  $\Release$-formulas are $\infinit{}{\forall}$.
  Thus,
  $\ReleaseP$-formulas are
  both
  $\finit{}{\exists}$ and
  $\infinit{}{\forall}$.
\end{restatable}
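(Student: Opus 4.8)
The plan is to obtain Lemma~\ref{thm:box} from Lemma~\ref{thm:diamond} by duality, rather than by repeating a structural induction. The guiding idea is that negation exchanges the release- and until-fragments. First I would record the two identities
\[
\lnot(\p \Release \psi) \equiv \lnot\p \Until \lnot\psi
\quad\text{and}\quad
\lnot(\p \ReleaseP \psi) \equiv \lnot\p \UntilP \lnot\psi ,
\]
each holding at every instant of every finite or infinite trace. The first is immediate from the definition $\p \Release \psi := \lnot(\lnot\p \Until \lnot\psi)$. The second follows by unfolding $\ReleaseP$ and $\UntilP$ and pushing the negation inward: $\lnot\bigl(\psi \land (\p \lor \p\Release\psi)\bigr) \equiv \lnot\psi \lor \bigl(\lnot\p \land (\lnot\p \Until \lnot\psi)\bigr)$, which is exactly $\lnot\p \UntilP \lnot\psi$.

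Next I would argue that, together with De~Morgan and the quantifier dualities $\lnot\forall x\,\chi \equiv \exists x\,\lnot\chi$ and $\lnot\exists x\,\chi \equiv \forall x\,\lnot\chi$, these identities transform the negation normal form of a $\Release$-formula into an $\Until$-formula, and that of a $\ReleaseP\exists$-formula into a $\UntilP\forall$-formula. This is a syntax-directed check, clause by clause on the grammars: atoms and negated atoms swap, $\wedge$ and $\vee$ swap, $\forall$ and $\exists$ swap, and the release clauses land on the until clauses by the identities above. Since these rewritings preserve truth on all traces and the finite/infinite trace properties are defined purely semantically, the negated formula and its normal form satisfy exactly the same properties.

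I would then combine this with the negation correspondence on properties established earlier (relating the one-directional $\finit{\Rightarrow}{Q}$ and $\finit{\Leftarrow}{Q'}$ properties, and their infinite analogues, under negation): these yield that $\p$ is $\finit{}{\exists}$ iff $\lnot\p$ is $\finit{}{\forall}$, and that $\p$ is $\infinit{}{\forall}$ iff $\lnot\p$ is $\infinit{}{\exists}$. Hence, for a $\ReleaseP\exists$-formula $\p$, its negation is (equivalent to) a $\UntilP\forall$-formula, which is $\finit{}{\forall}$ by Lemma~\ref{thm:diamond}, so $\p$ is $\finit{}{\exists}$; and for a $\Release$-formula $\p$, its negation is (equivalent to) an $\Until$-formula, which is $\infinit{}{\exists}$ by Lemma~\ref{thm:diamond}, so $\p$ is $\infinit{}{\forall}$. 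For the final assertion, every $\ReleaseP$-formula is already a $\ReleaseP\exists$-formula and, unfolding $\ReleaseP$, is equivalent to a $\Release$-formula, so it inherits both $\finit{}{\exists}$ and $\infinit{}{\forall}$. The only delicate point is the bookkeeping of the negation normal form translation and confirming that the release clauses land exactly inside the until-grammars; all the semantic work is carried by Lemma~\ref{thm:diamond} and the earlier negation correspondence.
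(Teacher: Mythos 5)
Your proof is correct, but it takes a genuinely different route from the paper. The paper proves Lemma~\ref{thm:box} directly, by a second round of structural inductions mirroring those of Lemma~\ref{thm:diamond}: it shows the $\finit{\Rightarrow}{\exists}$ direction via a stronger frozen-extension claim ($\Fmf \models^{\assign} \p$ implies $\Fmf^{\omega} \models^{\assign} \p$), proves a prefix-monotonicity auxiliary claim for $\Release$-formulas ($\Fmf' \not\models^{\assign} \p$ for a prefix $\Fmf'$ of $\Fmf$ implies $\Fmf \not\models^{\assign} \p$), and then handles the four one-directional properties case by case. You instead observe that negation-plus-NNF is an exact syntactic duality carrying the $\ReleaseP\exists$- and $\Release$-grammars onto the $\UntilP\forall$- and $\Until$-grammars (via $\lnot(\p \Release \psi) \equiv \lnot\p \Until \lnot\psi$ and $\lnot(\p \ReleaseP \psi) \equiv \lnot\p \UntilP \lnot\psi$, both of which you verify correctly), and you then transfer Lemma~\ref{thm:diamond} through the negation correspondence of the earlier proposition on one-directional properties; combining its two instances with double-negation invariance indeed yields that $\p$ is $\finit{}{\exists}$ iff $\lnot\p$ is $\finit{}{\forall}$, and $\p$ is $\infinit{}{\forall}$ iff $\lnot\p$ is $\infinit{}{\exists}$. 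Two things are worth making explicit if you write this up: first, the semantic properties are invariant under equivalences that hold at every instant of every (finite or infinite) trace, which is what licenses replacing $\lnot\p$ by its NNF; second, your duality also recovers the paper's finer parenthetical refinements (e.g., dualizing ``$\Until$ may be added for $\finit{\Rightarrow}{\forall}$'' gives exactly ``$\Release$ may be added for $\finit{\Leftarrow}{\exists}$''). What your approach buys is brevity and an explicit symmetry, with no duplicated inductions; what the paper's self-contained induction buys is reusable by-products, namely the frozen-extension claim of Claim~\ref{cl:box1} and the prefix claim of Claim~\ref{lem:aux1}, which are cited again later (e.g., in the proof of Lemma~\ref{thm:boxinf}), so the paper's apparent redundancy is not wasted work.
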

\begin{proof}
  We first show that all
  $\ReleaseP\exists$-formulas
  are $\finit{}{\exists}$.
  In Claim~\ref{cl:box1} we show that all
  $\ReleaseP\exists$-formulas
  are
  $\finit{\Rightarrow}{\exists}$.  Then, in Claim~\ref{cl:box2} we
  show that all
  $\ReleaseP\exists$-formulas
  are
  $\finit{\Leftarrow}{\exists}$ (in fact, for the
  $\finit{\Leftarrow}{\exists}$ case,
  we can also
  allow
  $\p \Release \psi$
  in the grammar).

\begin{claim}\label{cl:box1}
%
  $\ReleaseP\exists$-formulas are
      $\finit{\Rightarrow}{\exists}$.
\end{claim}
\begin{proof}[Proof of Claim~\ref{cl:box1}]
We show the stronger claim that, for every finite trace
$\Fmf = (\Delta, (\Fmc_{n})_{n \in [0, l]})$ and every assignment
$\assign$, $\Fmf\models^\assign \varphi$ implies
$\Fmf^{\omega}\models^\assign \varphi$.  The proof is by structural
induction on $\p$.
By Proposition~\ref{thm:boolean}, the statement holds for the base
cases of $\p = P(\bar{\tau})$ and $\p = \lnot P(\bar{\tau})$.
We now proceed with the inductive cases.
\begin{itemize}
\item
$\p = \psi \ReleaseP \chi$.  Suppose that
      $\Fmf \models^{\assign} \psi \ReleaseP \chi$.  This means that,
      for every $n \in [0, l]$, we have
      $\Fmf, n \models^{\assign} \chi$, or there exists $i \in [0, n)$
      such that $\Fmf, i \models^{\assign} \psi$.  That is, for every
      $n \in [0, l]$, it holds that $\Fmf^{n} \models^{\assign} \chi$,
      or there exists $i \in [0, n)$ such that
      $\Fmf^{i} \models^{\assign} \psi$.  By the inductive hypothesis,
      this implies that, for every $n \in [0,l]$, we have
      $(\Fmf^{n})^{\omega} \models^{\assign} \chi$, or there exists
      $i \in [0, n)$ such that
      $(\Fmf^{i})^{\omega} \models^{\assign} \psi$.  For every
      $m \in [0,l]$, it holds that
      $(\Fmf^{m})^{\omega} = (\Fmf^{\omega})^{m}$, thus we obtain, for
      every $n \in [0,l]$, that
      $(\Fmf^{\omega})^{n} \models^{\assign} \chi$, or there exists
      $i \in [0, n)$ such that
      $(\Fmf^{\omega})^{i} \models^{\assign} \psi$.  Moreover, since
      $(\Fmf^{\omega})^{l} = (\Fmf^{\omega})^{m}$, for every $m > l$,
      we have that $(\Fmf^{\omega})^{m} \models^{\assign} \chi$, or
      there exists $i \in [0, m)$ such that
      $(\Fmf^{\omega})^{i} \models^{\assign} \psi$.  In conclusion,
      for every $n \geq 0$, $\Fmf^{\omega}, n \models^{\assign} \chi$,
      or there exists $i \in [0, n)$ such that
      $\Fmf^{\omega}, i \models^{\assign} \psi$.  Hence,
      $\Fmf^{\omega} \models^{\assign} \psi \ReleaseP \chi$.
\item 
$\varphi =\forall x\psi$.
Suppose that $\Fmf\models^\assign \forall x\psi$. 
This means that, for all $d\in\Delta$,
$\Fmf\models^{\assign[x \mapsto d]}  \psi$. 
By the inductive hypothesis, for all $d\in\Delta$,
$\Fmf^\omega \models^{\assign[x \mapsto d]}  \psi$.
Thus, 
$\Fmf^{\omega}\models^\assign \forall x\psi$.
\item
$\varphi =\exists x\psi$.
Suppose that $\Fmf\models^\assign \exists x\psi$. 
This means that there is $d\in\Delta$ such that 
$\Fmf\models^{\assign[x \mapsto d]}  \psi$.
By the inductive hypothesis,
we have
$\Fmf^\omega\models^{\assign[x \mapsto d]}  \psi$,
for some $d\in\Delta$.
That is,
$\Fmf^{\omega}\models^\assign \exists x\psi$.
\item The other cases  can 
be proved in a straightforward way   using the 
inductive hypothesis. 
\qedhere
\end{itemize} 
\end{proof}



\begin{claim}\label{cl:box2}
  $\ReleaseP\exists$-formulas are
      $\finit{\Leftarrow}{\exists}$. 
\end{claim}
\begin{proof}[Proof of Claim~\ref{cl:box2}]
Given a finite trace $\Fmf = (\Delta, (\Fmc_{n})_{n \in [0, l]})$ and
an assignment $\assign$, we show that $\Imf \models^{\assign} \p$, for
some $\Imf \in \Ext(\Fmf)$, implies $\Fmf \models^{\assign} \p$.  The
proof is by structural induction on $\p$.
By Proposition~\ref{thm:boolean}, the statement holds for the base
cases of $\p = P(\bar{\tau})$ and $\p = \lnot P(\bar{\tau})$.
We now proceed with the inductive steps.
\begin{itemize}
\item
$\p = \psi \ReleaseP \chi$.  Suppose that
      $\Imf\models^\assign \psi \ReleaseP \chi$, for some
      $\Imf \in \Ext(\Fmf)$.  This implies that, for every $n \geq 0$,
      $\Imf,n\models^\assign \chi$, or there exists $i \in [0,n)$ such
      that $\Imf, i \models^{\assign} \psi$.  That is, for every
      $n \geq 0$, $\Imf^n \models^\assign \chi$, or there exists
      $i \in [0,n)$ such that $\Imf^{i} \models^{\assign} \psi$.  By
      the inductive hypothesis, the previous step implies, for every
      $n \in [0, l]$, that $\Fmf^{n} \models^{\assign} \chi$, or there
      exists $i \in [0, n)$ such that
      $\Fmf^{i} \models^{\assign} \psi$, Equivalently, for every
      $n \in [0, l]$, $\Fmf, n \models^{\assign} \chi$, or there
      exists $i \in [0, n)$ such that
      $\Fmf, i \models^{\assign} \psi$.  Thus,
      $\Fmf \models^\assign \psi \ReleaseP \chi$.
  %
\item $\p = \forall x\psi$.  Suppose that there is
  $\Imf\in\textit{Ext}(\Fmf)$ such that
  $\Imf\models^\assign \forall x\psi$.  This means that, for all
  $d\in\Delta$, $\Imf\models^{\assign[x \mapsto d]} \psi$.
  By the
  inductive hypothesis, we have
  $\Fmf\models^{\assign[x \mapsto d]} \psi$, for all $d\in\Delta$.
 Hence,  $\Fmf\models^\assign \forall x\psi$.  
\item $\p = \exists x\psi$.  Suppose that there is
  $\Imf\in\textit{Ext}(\Fmf)$ such that
  $\Imf\models^\assign \exists x\psi$.  This means that there is
  $d\in\Delta$ such that $\Imf\models^{\assign[x \mapsto d]} \psi$.
  By the
  inductive hypothesis, we have
  $\Fmf\models^{\assign[x \mapsto d]} \psi$, for some $d\in\Delta$,
  i.e., $\Fmf\models^\assign \exists x\psi$.
\item The remaining cases are a straightforward application 
of the inductive hypothesis.
\qedhere
\end{itemize}
\end{proof}


We
now show the second part of
Lemma~\ref{thm:box}, i.e., that $\Release$-formulas are
$\infinit{}{\forall}$.  In Claim~\ref{cl:box3}, we show that all
$\Release$-formulas
are
$\infinit{\Rightarrow}{\forall}$.  Then, in Claim~\ref{cl:box4}, we
show that all
$\Release$-formulas
are
$\infinit{\Leftarrow}{\forall}$. Before proving Claim~\ref{cl:box3} we
show the following claim.

%
\begin{claim}\label{lem:aux1}
  Let $\varphi$ be an
  $\Release$-formula.
  For every finite
  trace $\Fmf = (\Delta, (\Fmc_{n})_{n \in [0, l]})$, every prefix
  $\Fmf' = (\Delta, (\Fmc_{n})_{n \in [0, l']}) \in \Pre(\Fmf)$, and
  every assignment $\assign$, $\Fmf'\not\models^\assign \varphi$
  implies $\Fmf \not\models^\assign \varphi$.
\end{claim}
\begin{proof}[Proof of Claim~\ref{lem:aux1}]
  The proof is by structural induction on $\p$.
  Clearly, the
  statement holds for the base cases of $\p = P(\bar{\tau})$ and
  $\p = \lnot P(\bar{\tau})$.
We now proceed with the inductive cases.
%
\begin{itemize}
\item
$\p = \psi \Release \chi$.  Suppose that
      $\Fmf'\not\models^\assign \psi \Release \chi$.  This means that
      there exists $n \in (0, l']$ such that
      $\Fmf',n \not\models^\assign \chi$, and, for every
      $i \in (0, n)$, $\Fmf',n \not\models^\assign \psi$, i.e., for
      some $n > 0$, $\Fmf'^{n} \not\models^\assign \chi$, and, for
      every $i \in (0, n)$, $\Fmf'^{i} \not\models^\assign \psi$.  As
      $\Fmf'^{m}$ is a prefix of $\Fmf^{m}$, for every $m \in [0,l']$,
      and since $l' \leq l$, by the inductive hypothesis we obtain
      that, for some $n \in (0, l]$,
      $\Fmf^{n} \not\models^\assign \chi$, and, for every
      $i \in (0, n)$, $\Fmf^{i} \not\models^\assign \psi$,
      Equivalently, there exists $n \in (0, l]$ such that
      $\Fmf, n \not\models^\assign \chi$, and, for every
      $i \in (0, n)$, $\Fmf, i \not\models^\assign \chi$.  Thus,
      $\Fmf \not \models^\assign \psi \Release \chi$.
  %
  %
\item The other cases can be proved by straightforward applications of
  the inductive hypothesis.
  \qedhere
\end{itemize}
\end{proof}
%
%
We can now proceed with the following claim.
\begin{claim}\label{cl:box3}
$\Release$-formulas
are $\infinit{\Rightarrow}{\forall}$. 
\end{claim}
\begin{proof}[Proof of Claim~\ref{cl:box3}]
Given an infinite trace
$\Imf  = (\Delta, (\Imc_{n})_{n \in [0, \infty)})$
and an assignment $\assign$,
we show that $\Imf \models^{\assign} \p$ implies $\Fmf \models^{\assign} \p$, for every $\Fmf \in \Pre(\Imf)$.
The proof is by structural induction on $\p$.
By Proposition~\ref{thm:boolean}, we have that the claim holds for the base cases of $\p = P(\bar{\tau})$ and $\p = \neg P(\bar{\tau})$.
We now proceed with the inductive cases.
\begin{itemize}
\item
$\p = \psi \Release \chi$.
By contraposition,
suppose that
$\Fmf \not \models^\assign  \psi \Release \chi$,
for some $\Fmf  = (\Delta, \Fmc_{n \in [0, l]}) \in \Pre(\Imf)$,
meaning that
there exists $n \in (0, l]$ such that
$\Fmf^{n} \not \models^{\assign} \chi$
and, for every $i \in (0, n)$,
$\Fmf^{i} \not \models^{\assign} \psi$.
By the (contrapositive of the) inductive hypothesis, the previous step
implies that 
there exists $n \in (0, l]$ such that
$\Imf^{n} \not \models^{\assign} \chi$
and, for every $i \in (0, n)$,
$\Imf^{i} \not \models^{\assign} \psi$.
Thus, 
there exists $n > 0$ such that
$\Imf, {n} \not \models^{\assign} \chi$
and, for every $i \in (0, n)$,
$\Imf, {i} \not \models^{\assign} \psi$.
This means that 
$\Imf \not \models^{\assign} \psi \Release \chi$.
\item
$\p = \forall x\psi$.
 Suppose that  $\Imf\models^\assign \forall x\psi$.
This means that,
 for every $d\in\Delta$, $\Imf\models^{\assign[x \mapsto d]} \psi$.
 By applying the inductive hypothesis,
 we have that, for every $d\in\Delta$ and every $\Fmf \in \Pre(\Imf)$, $\Fmf\models^{\assign[x \mapsto d]} \psi$.
 Thus, 
 $\Fmf\models^\assign \forall x\psi$, for every $\Fmf \in \Pre(\Imf)$.
\item The remaining cases
follow
by a straightforward application 
of the inductive hypothesis.
\qedhere
\end{itemize} 
\end{proof}

We now conclude the proof of Lemma~\ref{thm:box} by showing the
following claim.
\begin{claim}\label{cl:box4}
  $\Release$-formulas
  are
  $\infinit{\Leftarrow}{\forall}$.
\end{claim}
\begin{proof}[Proof of Claim~\ref{cl:box4}]
Given an infinite trace
$\Imf = (\Delta, (\Imc_{n})_{n \in [0, \infty)})$ and an assignment
$\assign$, we show that $\Fmf \models^{\assign} \p$, for every
$\Fmf \in \Pre(\Imf)$, implies $\Imf \models^{\assign} \p$.  The proof
is by structural induction on $\p$.  By Proposition~\ref{thm:boolean},
we have that the claim holds for the base cases of
$\p = P(\bar{\tau})$ and $\p = \neg P(\bar{\tau})$. We now proceed
with the inductive steps.
\begin{itemize}
\item
$\p = \psi \Release \chi$.  By contraposition,
      suppose that $\Imf \not \models^{\assign} \psi \Release \chi$.
      This means that there exists $n > 0$ such that
      $\Imf, {n} \not \models^{\assign} \chi$ and, for every
      $i \in (0, n)$, $\Imf, {i} \not \models^{\assign} \psi$, i.e.,
      there exists $n > 0$ such that
      $\Imf^{n} \not \models^{\assign} \chi$ and, for every
      $i \in (0, n)$, $\Imf^{i} \not \models^{\assign} \psi$.  By the
      (contrapositive of the) inductive hypothesis, the previous step
      implies that there exists $n > 0$ such that
      $\Fmf^{n}_{j} \not \models^{\assign} \chi$, for some $j \geq n$,
      and, for every $i \in (0, n)$,
      $\Fmf^{i}_{i_{k}} \not \models^{\assign} \psi$, for some
      $i_{k} \geq i$.  Since $\mathbb{N}$ is well-founded, we can
      assume without loss of generality that such $n_{j}$ and $i_{k}$,
      for every $i \in (0, n)$, are the minimum numbers for which the
      previous step holds.  By taking $m$ as the maximum among such
      $n_{j}$ and $i_{k}$, for every $i \in (0, n)$, since
        $\Fmf^{n}_{n_{j}}\in \Pre(\Fmf^{n}_{m})$ and every
      $\Fmf^{i}_{i_{k}} \in \Pre(\Fmf^{i}_{m})$,
      by Claim~\ref{lem:aux1} we obtain 
      that $\Fmf^{n}_{m} \not \models^{\assign} \chi$ and, for every
      $i \in (0, n)$, $\Fmf^{i}_{m} \not \models^{\assign} \psi$.
      Since $\Fmf = \Imf_{m} \in \Pre(\Imf)$, we have that there
      exists
      $\Fmf = (\Delta, (\Fmc_{n})_{n \in [0, m]}) \in \Pre(\Imf)$ such
      that, for some $n \in [0, m]$,
      $\Fmf, n \not \models^{\assign} \chi$ and, for every
      $i \in (0, n)$, $\Fmf, {i} \not \models^{\assign} \psi$, meaning
      that $\Fmf \not \models^{\assign} \psi \Release \chi$.
\item $\p = \forall x \psi$.
%
  Suppose that $\Fmf\models^\assign \forall x\psi$, for all
  $\Fmf\in \textit{Pre}(\Imf)$.  This means that, for all
  $\Fmf\in \textit{Pre}(\Imf)$ and all $d\in\Delta$,
  $\Fmf\models^{\assign[x \mapsto d]} \psi$.  By applying the
  inductive hypothesis, we obtain
  $\Imf\models^{\assign[x \mapsto d]} \psi$, for all $d\in\Delta$.
 Hence, $\Imf\models^\assign \forall x\psi$. 
\item $\p = \psi\vee\chi$.  Assume that, for all
  $\Fmf\in \textit{Pre}(\Imf)$, $\Fmf\models^\assign \psi\vee\chi$,
  and suppose towards a contradiction that
  $\Imf\not\models^\assign \psi\vee\chi$, i.e.,
  $\Imf \not \models^\assign \psi$ and
  $\Imf \not \models^\assign \chi$.  By applying the (contrapositive
  of the) inductive hypothesis, we obtain that there are
  $\Fmf',\Fmf''\in \textit{Pre}(\Imf)$ such that
  $\Fmf'\not\models^\assign \psi$ and
  $\Fmf''\not\models^\assign \chi$.  By definition, either $\Fmf''$ is
  a prefix of $\Fmf'$, or vice versa.  Assume without loss of
  generality that $\Fmf''$ is a prefix of $\Fmf'$.  By
  Claim~\ref{lem:aux1}, we have that $\Fmf'\not\models^\assign \chi$.
  Hence, $\Fmf'\not\models^\assign \psi\vee\chi$, contradicting the
  assumption that, for all $\Fmf\in \textit{Pre}(\Imf)$,
  $\Fmf\models^\assign \psi\vee\chi$.  Therefore,
  $\Imf\models^\assign \psi\vee\chi$.
%
\item The remaining cases follow by a straightforward application of
  the inductive hypothesis.\qedhere
\end{itemize}
\end{proof}
\end{proof}

The
results of
Lemma~\ref{thm:box} are also tight in the sense that we cannot extend
the grammar rule for
$\Release$-formulas
(and not even for $\ReleaseP$-formulas)
with $\exists x\varphi$,
while still satisfying
$\infinit{}{\forall}$,
and we
cannot extend the grammar rule for
$\ReleaseP\exists$-formulas
with
$\p \Release\psi$,
while satisfying
$\finit{}{\exists}$.
Simple counterexamples are
$\exists x \B^{+} \neg P(x)$ and $\last := \B\bot$, which are not
$\infinit{}{\forall}$ and $\finit{}{\exists}$, respectively.  To see
that $\exists x \B^{+} \neg P(x)$ is not $\infinit{\Leftarrow}{\forall}$,
and thus not $\infinit{}{\forall}$,
consider again the model described above with an infinite (and
countable) domain, where each element is in the extension of $P$ at a
specific time point $n\in\mathbb{N}$.  The formula
$\exists x \B^{+} \neg P(x)$ holds in every finite prefix but it does
not hold in this infinite trace. Thus, it is not
$\infinit{}{\forall}$.
%
On the other hand, clearly, $\last$ holds in a finite trace \Fmf with
only one time point but it does not on any extension of $\Fmf$.
Therefore, it is not $\finit{\Rightarrow}{\exists}$, and thus not $\finit{}{\exists}$.

Finally, we comment on the results of Theorem~\ref{thm:mainequiv}.
We observe that $\Diamond \top$ and $\top$ are examples of
$\Until$-formulas that are equivalent on infinite, but not on finite,
traces. Similarly, $\Box \bot$ and $\bot$ are $\Release$-formulas
equivalent on infinite traces only. Thus, the converse of Point~(3) of
Theorem~\ref{thm:mainequiv} does not hold for such sets of formulas.
However, we leave as an open problem to determine whether the converse
of Point~(2) in Theorem~\ref{thm:mainequiv} holds for
$\UntilP\forall$- and $\ReleaseP\exists$-formulas. We conjecture that,
for these fragments, which are in negation normal form and allow for
only one kind of reflexive temporal operator (i.e., either $\UntilP$
or $\ReleaseP$), the set of equivalent formulas on finite and infinite
traces coincide.
Finally, as
stated in Point~(1) of Theorem~\ref{thm:mainequiv},
we remark that there is no
distinction between reasoning on finite and infinite traces whenever a
formula is either an $\UntilP$- or a $\ReleaseP$-formula. As already
pointed out, however, $\D^+\B^+ P(x)$ and $\B^+\D^+ P(x)$ are only
equivalent on finite traces, and so, when considering formula
equivalences, the distinction between finite and infinite traces
cannot be blurred for the class of formulas that allow both
$\UntilP$ and $\ReleaseP$.

\subsubsection{Preserving Formula Satisfiability: From Finite to Infinite Traces}
%
%
In this section, we consider the problem of preserving satisfiability
of a $\QTLfr{\Until}{}{}$ formula $\varphi$ from finite to infinite
traces, i.e.,
under which conditions,
knowing that $\p$ is finitely satisfiable, we can conclude that $\p$ is
also satisfiable on infinite traces.
Identifying classes of formulas for which this question can be
positively answered is of interest also to develop
more
efficient automated reasoners.
Indeed, under certain conditions which guarantee that
satisfiability of a formula on finite traces implies its satisfiability on infinite ones,
solvers can simply stop trying to build the lasso of an infinite trace,
once a finite trace satisfying the formula is found.

In order to connect this problem with the results obtained in the
previous sections, we make the following observations.
%
%
First, in Theorem~\ref{theor:finsatchar}, we have seen that
$\QTLfr{\Until}{}{}$ formulas interpreted on finite traces can be
translated into equisatisfiable
formulas on infinite traces.  However, such translation is not always
needed, since for some classes of formulas
satisfiability is already preserved.
%
For instance, given
$\p \in \QTLfr{\Until}{}{}(\finit{\Rightarrow}{\exists})$,
we clearly have that, if $\p$ is
satisfiable on finite traces, then it is
satisfiable on infinite traces.
%
Moreover, the problem of preserving satisfiability from finite to
infinite traces can be seen as a special case of the problem of
preserving formula equivalences from infinite to finite ones, where we
are only interested in determining if a $\QTLfr{\Until}{}{}$ formula
$\varphi$ that is equivalent to $\bot$ on infinite traces (i.e.,
unsatisfiable on infinite traces) is also unsatisfiable on finite
traces. This is not the case in general.
  For instance, $\last$, which is equivalent to $\bot$ on
infinite traces but satisfiable on finite traces, is
a formula
for which
satisfiability is not preserved from finite to infinite traces.
Instead, from Theorem~\ref{thm:mainequiv}, we obtain in particular that,
for every
$\UntilP\forall$-
or
$\ReleaseP\exists$-formulas
satisfiability is preserved from finite to infinite traces.


However, the results of the previous section do not allow us to determine classes of formulas that involve both operators
$\UntilP$
and
$\ReleaseP$,
and for which satisfiability from finite to infinite traces is preserved.
Formulas like
$\D^+\B^+ P(x)$ and 
$\B^+ \D^+ P(x)$,
for instance,
are such that their
satisfiability is preserved from finite to infinite traces, but they do not fall in any of the fragments identified above.
%
Our aim in the rest of this section is to show  that indeed
satisfiability from finite to infinite traces 
is preserved for a larger class of formulas, introduced in the following.

\emph{$\UntilP\ReleaseP$-formulas} $\varphi,\psi$ are built according to the grammar (with
$P\in\textsf{N}_{\textsf{P}}$):
\[
P(\bar{\tau})\mid \neg
P(\bar{\tau}) \mid
\varphi \wedge \psi \mid
\varphi \vee \psi \mid
\exists x \varphi\mid \forall x \varphi \mid
\varphi \UntilP \psi
\mid \varphi \ReleaseP \psi.
\]
It can be seen that the set of
$\UntilP\ReleaseP$-formulas is just a syntactic variant (in negation
normal form) of the fragment
$\QTLfr{\UntilP\ReleaseP}{}{}$ of
$\QTLfr{\Until}{}{}$, i.e., the fragment allowing only for
$\UntilP$ and
$\ReleaseP$ as temporal operators.  A typical example of an
$\UntilP\ReleaseP$-formula, used to express properties in the context
of specification and verification of reactive systems, is $\Box^{+}
\forall x (P(x) \to \Diamond^{+} Q(x) )$~\cite{GabEtAl}.
%

We show in the following that the language generated by the grammar
rule for
$\UntilP\ReleaseP$-formulas contains only formulas whose
satisfiability on finite traces implies satisfiability on infinite
traces.  This result, formalised by the following theorem, is an
immediate consequence of Lemma~\ref{thm:boxinf} below.
\begin{theorem}
\label{thm:dbformsat}
All $\UntilP\ReleaseP$-formulas satisfiable on finite traces are
satisfiable on infinite traces.
\end{theorem}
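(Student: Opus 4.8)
The plan is to reduce the theorem to a single preservation statement about the frozen extension $\Fmf^{\omega}$. Concretely, I would establish Lemma~\ref{thm:boxinf} in the form: for every $\UntilP\ReleaseP$-formula $\p$, every finite trace $\Fmf = (\Delta, (\Fmc_{n})_{n \in [0,l]})$, and every assignment $\assign$, $\Fmf \models^{\assign} \p$ implies $\Fmf^{\omega} \models^{\assign} \p$. Granting this, the theorem is immediate: if $\p$ is satisfiable on finite traces, fix $\Fmf$ and $\assign$ with $\Fmf \models^{\assign} \p$; then $\Fmf^{\omega}$ is an infinite trace and, by the lemma, $\Fmf^{\omega} \models^{\assign} \p$, so $\p$ is satisfiable on infinite traces. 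Note that a direct appeal to Lemmas~\ref{thm:diamond} or~\ref{thm:box} is not available here, since $\UntilP\ReleaseP$-formulas mix $\UntilP$ and $\ReleaseP$ together with both quantifiers and hence lie in neither the $\UntilP\forall$ nor the $\ReleaseP\exists$ fragment; what survives this mixing is only the weaker one-directional property of satisfiability preservation, which is exactly what the frozen-extension lemma records.

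I would prove Lemma~\ref{thm:boxinf} by structural induction on $\p$, stating the inductive hypothesis uniformly over \emph{all} finite traces and assignments so that it may be reapplied to the suffixes $\Fmf^{n}$. For the base cases $\p = P(\bar{\tau})$ and $\p = \lnot P(\bar{\tau})$ it suffices to observe that $\Fmf^{\omega} \in \Ext(\Fmf)$ and invoke Proposition~\ref{thm:boolean}. The Boolean cases $\p = \psi \land \chi$, $\p = \psi \lor \chi$ and the quantifier cases $\p = \exists x \psi$, $\p = \forall x \psi$ reduce directly to the inductive hypothesis on the immediate subformulas. For $\p = \psi \UntilP \chi$, from $\Fmf \models^{\assign} \psi \UntilP \chi$ I extract a witness $n \in [0,l]$ with $\Fmf^{n} \models^{\assign} \chi$ and $\Fmf^{i} \models^{\assign} \psi$ for all $i \in [0,n)$; applying the inductive hypothesis to each suffix and using the identity $(\Fmf^{m})^{\omega} = (\Fmf^{\omega})^{m}$ for $m \leq l$ transfers this witness into $\Fmf^{\omega}$, giving $\Fmf^{\omega} \models^{\assign} \psi \UntilP \chi$. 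Since the witness already lies within $[0,l]$, the frozen tail plays no role in the $\UntilP$ case.

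The case $\p = \psi \ReleaseP \chi$ is the crux, and I would import the argument essentially verbatim from the $\ReleaseP$ step of Claim~\ref{cl:box1}. Here the conclusion must hold at \emph{every} instant $n \geq 0$ of $\Fmf^{\omega}$, including the infinitely many points beyond $l$ created by freezing the last state $\Fmc_{l}$. After transferring the finite-trace condition (``$\chi$ holds at $n$, or $\psi$ has held at some earlier instant'') to each $n \in [0,l]$ via the inductive hypothesis and the identity $(\Fmf^{m})^{\omega} = (\Fmf^{\omega})^{m}$, the key step is to observe that $(\Fmf^{\omega})^{m} = (\Fmf^{\omega})^{l}$ for every $m > l$, so that the disjunction continues to hold throughout the frozen tail. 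This is precisely where the choice of the frozen extension, rather than an arbitrary member of $\Ext(\Fmf)$, is indispensable: an arbitrary extension could falsify $\chi$ at some point past $l$ while $\psi$ never becomes true, breaking the $\ReleaseP$; the constancy of the frozen suffix is what rules this out. All remaining cases are routine applications of the inductive hypothesis.
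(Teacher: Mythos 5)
Your proposal is correct and takes essentially the same route as the paper: the paper likewise derives Theorem~\ref{thm:dbformsat} from the frozen-extension property of Lemma~\ref{thm:boxinf}, whose forward direction (Claim~\ref{cl:omega1}) is proved by exactly the structural induction you describe, including the identity $(\Fmf^{m})^{\omega} = (\Fmf^{\omega})^{m}$ for $m \leq l$ and the constancy of the frozen tail $(\Fmf^{\omega})^{m} = (\Fmf^{\omega})^{l}$ for $m > l$ in the $\ReleaseP$ case. The only difference is that you prove just the `$\Rightarrow$' half of the property, which indeed suffices for the theorem, whereas the paper's lemma also establishes the converse direction (Claim~\ref{cl:omega2}); that extra half is not needed for this statement.
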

The converse of Theorem~\ref{thm:dbformsat}, however, does not hold,
as illustrated by the next example.  Consider the
$\UntilP\ReleaseP$-formula
\begin{align}
\notag
	& \Box^+  \forall x \big( (P(x) \land \lnot Q(x) ) \lor  (Q(x) \land \lnot P(x) \big)
	\ \land \\
	& \Box^+ \forall x  \big( ( P(x) \to \Diamond^+ Q(x) ) \land
					( Q(x) \to \Diamond^+ P(x) ) \big).
\label{eq:infnotfinform}
\end{align}
We have
that~$(\ref{eq:infnotfinform})$ is satisfiable on infinite traces
only, since it requires $P(x)$ and
$Q(x)$ to alternate infinitely often.
Therefore, for $\UntilP\ReleaseP$-formulas,
satisfiability on infinite traces does not imply satisfiability on finite traces.

In order to prove Theorem~\ref{thm:dbformsat}, we introduce the
following preliminary notion.
A $\QTLfr{\Until}{}{}$ formula $\varphi$
\emph{is $\finit{}{\omega}$} iff, for all finite traces
$\Fmf$ and all assignments $\assign$, it satisfies the \emph{frozen
  trace property}:
\[
\Fmf \models^{\assign} \p \Leftrightarrow \Fmf^\omega \models^{\assign} \p.
\]
We denote by $\QTLfr{\Until}{}{}(\finit{}{\omega})$ the set of
$\QTLfr{\Until}{}{}$ formulas that are $\finit{}{\omega}$.  Clearly,
if $\p \in \QTLfr{\Until}{}{}(\finit{}{\omega})$ is satisfiable on
finite traces, then $\p$ is satisfiable on infinite traces.  Thus,
Theorem~\ref{thm:dbformsat} above is an immediate consequence of the
following lemma.
%
%
\begin{restatable}{lemma}{TheoremBoxinf}\label{thm:boxinf}
  $\UntilP\ReleaseP$-formulas
  are $\finit{}{\omega}$. 
\end{restatable}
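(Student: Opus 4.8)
The plan is to prove, by structural induction on $\varphi$, the following uniform strengthening, from which the lemma follows by setting $n=0$: for every finite trace $\Fmf = (\Delta, (\Fmc_n)_{n \in [0,l]})$, every assignment $\assign$, and every $n \in [0, l]$,
\[
\Fmf, n \models^{\assign} \varphi \quad \Leftrightarrow \quad \Fmf^\omega, n \models^{\assign} \varphi.
\]
The whole argument hinges on the stabilisation identity $(\Fmf^\omega)^m = (\Fmf^l)^\omega = (\Fmf^\omega)^l$ for every $m \geq l$, already exploited in Claim~\ref{cl:box1}: beyond index $l$ the frozen extension never changes its suffix, so all of its time points from $l$ onwards are indistinguishable. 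In particular, once the strengthened claim is known at $n = l$, it yields $\Fmf^\omega, m \models^{\assign} \varphi \Leftrightarrow \Fmf, l \models^{\assign} \varphi$ for every $m \geq l$, which is exactly what will let me collapse the infinite tail of $\Fmf^\omega$ to the single last instant of $\Fmf$.

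First I would dispatch the base cases $\varphi = P(\bar\tau)$ and $\varphi = \neg P(\bar\tau)$ using Proposition~\ref{thm:boolean} (or directly, since $\Fmf$ and $\Fmf^\omega$ share the same interpretation at every time point in $[0,l]$), and the Boolean cases $\varphi = \psi \wedge \chi$, $\varphi = \psi \vee \chi$ together with the quantifier cases $\varphi = \exists x \psi$, $\varphi = \forall x \psi$ by a routine application of the inductive hypothesis; the domain $\Delta$ being shared by $\Fmf$ and $\Fmf^\omega$, the quantifier cases amount to reapplying the hypothesis under $\assign[x \mapsto d]$ for each $d \in \Delta$.

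The substance of the proof lies in the two temporal cases, and here the reflexivity of $\UntilP$ and $\ReleaseP$ is essential. For $\varphi = \psi \UntilP \chi$ the forward direction is immediate: a witness $m \in [n, l]$ for $\psi \UntilP \chi$ in $\Fmf$ transfers verbatim to $\Fmf^\omega$ by the inductive hypothesis applied to $\chi$ at $m$ and to $\psi$ at each $i \in [n, m)$. For the converse, a witness $m \geq n$ in $\Fmf^\omega$ with $m > l$ can be pulled back to $l$: since $(\Fmf^\omega)^m = (\Fmf^\omega)^l$ we obtain $\Fmf^\omega, l \models^{\assign} \chi$, while $\psi$ holds throughout $[n, l) \subseteq [n, m)$, so $l \in [n, l]$ is a legitimate witness, and the inductive hypothesis then delivers the corresponding witness in $\Fmf$. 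The case $\varphi = \psi \ReleaseP \chi$ is dual: the converse direction needs only the obligations at $m \in [n, l]$, handled by the inductive hypothesis, whereas the forward direction must additionally discharge the infinitely many obligations at $m > l$; for these I would again use $(\Fmf^\omega)^m = (\Fmf^\omega)^l$ to reduce each to the obligation at $l$, which is met because $\psi \ReleaseP \chi$ holds at $n$ in $\Fmf$ (so either $\chi$ holds at $l$, or some $\psi$-witness occurs in $[n, l)$).

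The main obstacle is precisely this treatment of the frozen tail in the two directions just described, namely the $\Leftarrow$ direction for $\UntilP$ and the $\Rightarrow$ direction for $\ReleaseP$, and it is exactly where reflexivity is indispensable. Were the grammar to permit the strict operators $\Until$ or $\Release$ in place of $\UntilP$ and $\ReleaseP$, the collapse of the tail to the last instant would fail, since a strict obligation or witness may genuinely require a \emph{further} point: this is witnessed by the $\Until$-formula $\D \top$, which holds in every frozen extension $\Fmf^\omega$ (there is always a later instant) yet fails in a finite trace $\Fmf$ with a single time point, so that $\D\top \notin \QTLfr{\Until}{}{}(\finit{}{\omega})$.
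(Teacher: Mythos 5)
Your proposal is correct and takes essentially the same route as the paper's proof: a structural induction whose temporal cases hinge on the stabilisation identities $(\Fmf^{m})^{\omega} = (\Fmf^{\omega})^{m}$ for $m \leq l$ and $(\Fmf^{\omega})^{m} = (\Fmf^{\omega})^{l}$ for $m \geq l$, used to collapse tail witnesses (for $\UntilP$) and tail obligations (for $\ReleaseP$) onto the last instant $l$. The only difference is presentational: you prove the biconditional pointwise at every $n \in [0,l]$ in a single induction, whereas the paper splits it into the two one-directional Claims~\ref{cl:omega1} and~\ref{cl:omega2}, reusing the non-temporal and $\ReleaseP$/$\UntilP$ cases from Claims~\ref{cl:box1} and~\ref{cl:diamond2}.
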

\begin{proof}
We write $\finit{\Rightarrow}{\omega}$ and $\finit{\Leftarrow}{\omega}$
for the ``one directional'' version of $\finit{}{\omega}$.
In Claim~\ref{cl:omega1} we show that all 
$\UntilP\ReleaseP$-formulas
  are $\finit{\Rightarrow}{\omega}$.
 Then, in Claim~\ref{cl:omega2}, we show that all 
 $\UntilP\ReleaseP$-formulas
 are $\finit{\Leftarrow}{\omega}$. 

\begin{claim}\label{cl:omega1}
$\UntilP\ReleaseP$-formulas
are $\finit{\Rightarrow}{\omega}$.
\end{claim}
%


\begin{proof}[Proof of Claim~\ref{cl:omega1}]
We show, by structural induction on $\p$, that $\Fmf\models^\assign \varphi$
implies $\Fmf^{\omega}\models^\assign \varphi$, for any finite trace
$\Fmf= (\Delta, (\Fmc_{n})_{n \in [0, l]})$ and any assignment $\assign$.
The base cases of $\p = P(\bar{\tau})$ and $\p = \lnot P(\bar{\tau})$, as well as the inductive cases of $\p = \psi \land \chi$, $\p = \psi \lor \chi$, $\p = \exists x \psi$, $\p = \forall x  \psi$, and $\p = \psi \ReleaseP \chi$,
can be shown as in the proof of Claim~\ref{cl:box1}.
We now show the remaining inductive case.
\begin{itemize}
\item $\varphi = \psi\UntilP\chi$.  Suppose that
  $\Fmf\models^\assign\psi\UntilP\chi$, then there exists
  $n \in [0, l]$ such that $\Fmf, n \models^\assign \chi$ and, for
  every $i \in [0, n)$, $\Fmf, i \models^\assign \psi$.  In other
  words, there exists $n \in [0, l]$ such that
  $\Fmf^{n} \models^\assign \chi$ and, for every $i \in [0, n)$,
  $\Fmf^{i} \models^\assign \psi$.  By the inductive hypothesis, the
  previous step implies that there exists $n \in [0, l]$ such that
  $(\Fmf^{n})^\omega \models^\assign \chi$ and, for every
  $i \in [0, n)$, $(\Fmf^{i})^\omega \models^\assign \psi$.  Since,
  for every $m \in [0, l]$, we have that
  $(\Fmf^{m})^{\omega} = (\Fmf^{\omega})^{m}$, the previous step
  implies that there exists $n \geq 0$ such that
  $(\Fmf^{\omega})^{n} \models^\assign \chi$ and, for every
  $i \in [0, n)$, $(\Fmf^{\omega})^{i} \models^\assign \psi$.  In
  other words, there exists $n \geq 0$ such that
  $\Fmf^{\omega}, {n} \models^\assign \chi$ and, for every
  $i \in [0, n)$, $\Fmf^{\omega}, {i} \models^\assign \psi$, i.e.,
  $\Fmf^{\omega} \models^\assign \psi \UntilP \chi$.
\qedhere
\end{itemize}
\end{proof}

\begin{claim}\label{cl:omega2}
$\UntilP\ReleaseP$-formulas
 are $\finit{\Leftarrow}{\omega}$. 
\end{claim}
\begin{proof}[Proof of Claim~\ref{cl:omega2}]
We
  show, by structural induction on $\p$, that
$\Fmf^{\omega}\models^\assign \varphi$ implies
$\Fmf\models^\assign \varphi$, for any finite trace $\Fmf= (\Delta, (\Fmc_{n})_{n \in [0, l]})$ and any assignment $\assign$.
%
The base cases of $\p = P(\bar{\tau})$ and $\p = \lnot P(\bar{\tau})$, as well as the inductive cases of $\p = \psi \land \chi$, $\p = \psi \lor \chi$, $\p = \exists x \psi$, $\p = \forall x  \psi$, and $\p = \psi \UntilP \chi$,
can be shown as in the proof of Claim~\ref{cl:diamond2}.
We now show the remaining inductive case.
\begin{itemize}
\item
$\varphi = \psi \ReleaseP \chi$.
Suppose that
$\Fmf^\omega\models^\assign \psi \ReleaseP \chi$, then, for every $n \geq 0$,
we have $\Fmf^\omega, {n} \models^\assign \chi$ or there exists $i \in [0, n)$ such that $\Fmf^\omega, {i} \models^\assign \psi$.
Thus, in particular, 
for every $n \in [0, l]$,
$(\Fmf^{\omega})^{n} \models^\assign \chi$,
or 
there exists $i \in [0, n)$ such that $(\Fmf^\omega)^{i} \models^\assign \psi$.
Since, for every $m\in [0,l]$, 
we have that $(\Fmf^{\omega})^{m}=(\Fmf^{m})^{\omega}$,
the previous step is equivalent to:
for every $n \in [0, l]$,
$(\Fmf^{n})^{\omega} \models^\assign \chi$,
or 
there exists $i \in [0, n)$ such that $(\Fmf^{i})^{\omega} \models^\assign \psi$.
By the the inductive hypothesis,
we obtain that,
for every $n \in [0,l]$,
$\Fmf^{n} \models^\assign \chi$,
or
there exists $i \in [0, n)$ such that
$\Fmf^{i} \models^\assign \psi$.
In other words,
for every $n\in [0,l]$,
$\Fmf, n \models^\assign \psi$, 
or there exists $i \in [0, n)$ such that
$\Fmf, {i} \models^\assign \psi$,
i.e.,
$\Fmf \models^\assign \psi \ReleaseP \chi$.
\qedhere
\end{itemize}
\end{proof}
\end{proof}



 \section{Complexity of Decidable Fragments on Finite and $k$-Bounded Traces}
\label{sec:tdlf-to-tdl}

In this section, we study the complexity of the satisfiability problem
for formulas taken from well-known decidable fragments of first-order
temporal logic, ranging from the constant-free one-variable monadic, to the monadic monodic, or the
two-variables monodic, fragments (as introduced in
Section~\ref{sec:syntax}). First, we consider satisfiability on
arbitrary finite traces, showing that the complexity does not change
compared to the infinite case, i.e., it remains
$\ExpSpace$-complete. Then, we analyse the case of satisfiability on
$k$-bounded traces, proving that the complexity lowers down to
$\NExpTime$-complete. Finally, we show that these fragments
interpreted on finite traces enjoy both the bounded trace and the
bounded domain properties, that is, they are satisfiable on finite
traces iff they are satisfied on finite traces with a bounded number
of time points, and of elements in the domain, respectively, with a
bound that depends on the size of the formula.
We
conclude the section with an excursus on temporal
DLs,
by investigating the complexity of the
satisfiability problem in the temporal extension of the
DL
$\ALC$.


\subsection{Complexity Results on Finite Traces}


%

We analyse the complexity of decidable fragments of first-order
temporal logic on finite traces.  To start with, we show that
$\ExpSpace$-hardness holds already for the constant-free one-variable
monadic fragment $\QTLfr{\Until}{1,mo}{\not c}$.  This fragment can be
considered as a notational variant of the propositional language of
the two-dimensional product $\mathbf{LTL}^{f} \times \bf{S5}$, defined
similarly to the product
$\mathbf{LTL} \times \mathbf{S5}$~\cite{GabEtAl}, where
$\mathbf{LTL}^{f}$ denotes $\mathbf{LTL}$ interpreted on finite traces.
In particular, the $\bf{S5}$-modality
is replaced by the universal quantifier $\forall x$, and propositional
letters $p$ are substituted by unary predicates $P(x)$, with free
variable $x$.  The lower bound can be proved by applying similar ideas
as those used to show hardness of $\mathbf{LTL} \times \mathbf{S5}$
satisfiability.
%
\begin{proposition}
\label{thm:ltlfxs5}
$\QTLfr{\Until}{1,mo}{\not c}$ formula satisfiability on finite traces is $\ExpSpace$-hard.
\end{proposition}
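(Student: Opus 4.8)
The plan is to reduce from a canonical $\ExpSpace$-complete problem---conveniently, the acceptance problem for a Turing machine $M$ running in space $2^n$ on an input $w$ of length $n$, or equivalently an exponential corridor tiling problem---reproducing, on finite traces, the template used in~\cite{GabEtAl} to establish $\ExpSpace$-hardness of $\mathbf{LTL}\times\mathbf{S5}$. Under the correspondence announced above, the $\mathbf{S5}$ box is realised by the universal quantifier $\forall x$ (all domain elements see one another, exactly as $\mathbf{S5}$ worlds do), and each propositional letter $p$ is replaced by a unary predicate $P(x)$. Since the encoding introduces no constants and uses the single variable $x$ only, the resulting formula lies in $\QTLfr{\Until}{1,mo}{\not c}$, and I would check that its size is polynomial in $n$ and in $|M|$.

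First I would lay out the two dimensions. The finite trace (temporal dimension) is partitioned into consecutive blocks of $2^n$ instants, each block encoding one configuration of $M$; within a block the instants index the $2^n$ tape cells, and successive blocks march through the computation steps. An $n$-bit counter, carried by designated predicates and incremented along the trace with $\Next$ (definable from $\Until$), labels the instants of each block with cell addresses $0,\dots,2^n-1$ and resets at block boundaries. The universal quantifier $\forall x$ plays the role of the $\mathbf{S5}$ box, letting a polynomial-size formula impose constraints uniformly over the (exponentially many) addresses. With these ingredients I would write the standard gadgets forcing (i) an initial block encoding the starting configuration on $w$; (ii) correct counting; and (iii) eventual entry into an accepting state. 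These are routine counter-and-transition formulas, so I would only sketch them.

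The one genuinely delicate point---and the expected main obstacle---is the \emph{horizontal matching}: the content of a cell in one configuration must be consistent, via $M$'s transition relation, with that cell and its two neighbours in the previous configuration, yet these occurrences sit $2^n$ instants apart, too far for a sub-exponential formula to bridge by brute force. This distance-$2^n$ comparison is exactly what makes the product logic $\ExpSpace$-hard. I would handle it as in the $\mathbf{LTL}\times\mathbf{S5}$ construction of~\cite{GabEtAl}, exploiting the interaction of the temporal and quantifier dimensions: using $\exists x$/$\forall x$ to pin a cell together with its $n$-bit address and a $\Until$-formula to locate the instant carrying that same address one block later, where equality of addresses is checked against the running counter. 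Reproducing this with a single variable and only unary predicates is the crux of the argument.

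Finally, I would verify soundness under the finite-trace semantics. The key observation is that an accepting $\ExpSpace$ computation reaches its accepting state in \emph{finitely} many steps, so it is captured naturally by a finite trace whose last instant lies at the end of the accepting block; conversely any finite trace satisfying the conjunction of the gadgets yields an accepting run, whence the formula is satisfiable on finite traces iff $M$ accepts $w$. Care is needed only at the boundary instants, since on finite traces $\Next\varphi$ is false at the last instant and $\last\not\equiv_{f}\bot$: I would guard every ``a successor configuration exists'' clause with $\lnot\last$ and close the encoding with a clause $\D^{+}(\last\land\textit{accept})$ forcing the computation to reach acceptance before the trace ends. With these adjustments the reduction goes through, yielding $\ExpSpace$-hardness of $\QTLfr{\Until}{1,mo}{\not c}$ satisfiability on finite traces.
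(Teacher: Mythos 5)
Your proposal is correct and follows essentially the same route as the paper: both adapt the $\mathbf{LTL}\times\mathbf{S5}$ $\ExpSpace$-hardness construction of~\cite{GabEtAl} to finite traces, reading $\forall x$ as the $\mathbf{S5}$ box, pinning $n$-bit addresses to domain elements via time-constant unary predicates, using an $\Until$-formula to perform the distance-$2^{n}$ matching against the running counter, and exploiting the last instant of the trace (via a clause of the form $\D^{+}(\last \land \cdots)$) to close off the encoding. The only cosmetic difference is that the paper reduces from the $m \times 2^{n}$ corridor tiling problem rather than directly from exponential-space Turing machine acceptance, which you yourself note is an equivalent choice.
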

\begin{proof}
The proof is an adaptation of~\cite[Theorem 5.43]{GabEtAl} to the case of $\QTLfr{\Until}{1,mo}{\not c}$ on finite traces.
A \emph{tile type} is a $4$-tuple $t = (\mathit{up}(t), \mathit{down}(t), \mathit{left}(t), \mathit{right}(t))$ of \emph{colours} (from a set that we assume to include the colour \emph{white}).
Let $\mathbb{T}$ be a finite set of tile types, with $t_{0}, t_{1} \in \mathbb{T}$, and let $n \in \mathbb{N}$, given in binary.
The \emph{$m \times 2^{n}$ corridor tiling problem} 
is the problem of deciding whether there exist $m \in \mathbb{N}$ and
a function, called \emph{tiling}, $\tau \colon m \times 2^{n}
\rightarrow \mathbb{T}$ (where $m \times 2^{n}$ denotes the set of all
pairs $(i, j) \in \mathbb{N} \times \mathbb{N}$ with $0 \leq i < m$
and $0 \leq j < 2^{n}$)
such that:
\begin{itemize}
	\item $\tau(0, 0) = t_{0}$, $\tau(m - 1, 0) = t_{1}$;
	\item $\mathit{up}(\tau(i, j)) = \mathit{down}(\tau(i, j + 1))$, for $0 \leq i < m, 0 \leq j < 2^{n} - 1$, and $\mathit{right}(\tau(i, j)) = \mathit{left}(\tau(i + 1, j))$, for $0 \leq i < m - 1, 0 \leq j < 2^{n}$;
	\item $\mathit{down}(\tau(i, 0)) = \mathit{up}(\tau(i, 2^{n} - 1)) = \mathit{white}$, for $0 \leq i < m$.
\end{itemize}
The $m \times 2^{n}$ corridor tiling problem is known to be
$\ExpSpace$-complete~\cite{Boa97}.
  In the following, we will reduce this problem to
$\QTLfr{\Until}{1,mo}{\not c}$ formula satisfiability on finite
traces.

Given a finite set of tile types $\mathbb{T}$, with $t_{0}, t_{1} \in \mathbb{T}$, and an $n \in \mathbb{N}$, our aim is to construct a $\QTLfr{\Until}{1,mo}{\not c}$ formula $\p_{n, \mathbb{T}}$ such that:
$(i)$ the length of $\p_{n, \mathbb{T}}$ is polynomial in $n$ and $| \mathbb{T} |$;
$(ii)$ $\p_{n, \mathbb{T}}$ is satisfiable on finite traces iff there exist $m \in \mathbb{N}$ and a function $\tau \colon m \times 2^{n} \rightarrow \mathbb{T}$ tiling the $m \times 2^{n}$ corridor (as described by the conditions above).

We start by taking $n$ distinct unary predicates $S_{0}, \ldots, S_{n - 1}$, and let $S_{i}^{0} = \lnot S_{i}$ and $S_{i}^{1} = S_{i}$, for $0 \leq i \leq n - 1$.
Then, we define a binary counter,
called \emph{$\sigma$-counter},
up to $2^{n}$
by setting
\[
 \sigma_{j}(x) = S_{n - 1}^{d_{n - 1}}(x) \land \ldots \land S_{0}^{d_{0}}(x),
\]
where $d_{i}$ is the $i$-th bit in the binary representation of $0 \leq j \leq 2^{n} - 1$.
Moreover, we require
\begin{equation}
	\Box^{+} \bigwedge_{0 \leq i \leq n - 1} (\forall x S_{i}(x) \lor \forall x \lnot S_{i}(x)),
	\label{eq:countval}
\end{equation}
so that, at each time point, the $\sigma$-counter value will be the
same for every element of the domain.  The following formula will be
used to set the value of the $\sigma$-counter to $0$ at the first
instant of a trace, and to increase its value by one at each
subsequent instant (if any). Once the $\sigma$-counter reaches the
value of $2^{n} - 1$, it goes back to $0$ at the following time point
(if any).
\begin{align}
	\notag
	& \sigma_{0}(x) \ \land \\
	\notag
	&\Box^{+}
		\bigwedge_{0 \leq k < n}
		\Big(
			\big(
				\bigwedge_{0 \leq i < k}
				S_{i}(x) \land \lnot S_{k}(x)
			\big)
			\to
			\big(
				\bigwedge_{k < j < n}
				(
								S_{j}(x) \leftrightarrow \Next S_{j} (x) )
			\big)
			\land
			\Next
			\big(
				\bigwedge_{0 \leq i < k}\hspace{-.5em}   \lnot S_{i}(x) \land S_{k}(x) 
			\big)
		\Big)
			\ \land \\
	& \Box^{+}
		\Big(
			\bigwedge_{0 \leq i < n} S_{i}(x)
			\to
			\Wnext
			\big(
				\bigwedge_{0 \leq i < n} \lnot S_{i}(x) 
			\big) 
		\Big).
	\label{eq:counter}
\end{align}
%
Now consider $n$ fresh unary predicates $P_{0}, \ldots, P_{n - 1}$.
By setting
\begin{equation}
\label{eq:pval}
	\Box^{+} \forall x \bigwedge_{0 \leq i < n} ( \lnot \last \to (P_{i}(x) \leftrightarrow \Next P_{i}(x) ) ),
\end{equation}
we force their extension to be fixed along the temporal dimension.
Then, we set
\[
	\pi_{j}(x) = P_{n - 1}^{d_{n - 1}}(x) \land \ldots \land P_{0}^{d_{0}}(x),
\]
where again $d_{n - 1} \ldots d_{0}$ is the binary representation of $0 \leq j \leq 2^{n} - 1$, while $P_{i}^{0} = \lnot P_{i}$ and $P_{i}^{1} = P_{i}$, for $0 \leq i \leq n - 1$.

Moreover, we define the formulas
\begin{align*}
	\mathit{equ}(x) & = \bigwedge_{0 \leq i < n} (P_{i}(x) \leftrightarrow S_{i}(x) ), \\
	\mathit{mark}(x) &  =  \bigvee_{t \in \mathbb{T}} T(x), \\
	\mathit{tile}(x) & =  \mathit{equ}(x) \land \mathit{mark}(x) \land \Box \lnot \mathit{mark}(x),
\end{align*}
where $T$ is a fresh unary predicate for each $t \in \mathbb{T}$.

Then, consider the formulas
\begin{align}
	& \D^{+} (\last \land \sigma_{2^{n} - 1}(x) ),
	\label{eq:untilform} \\
	& \mathit{tile}(x) \land \Box \exists x \, \mathit{tile}(x).
	\label{eq:tileboxmarktotile}
\end{align}
Observe that, for Formulas~$(\ref{eq:counter})$ and~$(\ref{eq:untilform})$ to be satisfied, a trace has
to be finite and so that in its last instant the value of the
$\sigma$-counter is $2^{n} - 1$.
As it will become clear below, this step differs from the proof of~\cite[Theorem 5.43]{GabEtAl}, since we exploit the last instant of a finite trace to indicate that the construction of the corridor is completed. 

Now define
\begin{equation}
\label{eq:corridor}
	\mathit{corridor}(x)
\end{equation}
as the conjunction
of~(\ref{eq:countval})-(\ref{eq:tileboxmarktotile}).
Let $\Fmf = (\Delta^{\Fmf}, (\Fmc_{n})_{n \in [0, l]})$ be a
finite trace satisfying $\mathit{corridor}(x)$.
We have that
$\Fmf, 0 \models \mathit{corridor}[d_{0}]$,
for some
$d_{0} \in \Delta^{\Fmf}$.  It can be
seen 
that this implies the existence of $m \cdot 2^{n}$ distinct elements
$d_{0}, \ldots, d_{m \cdot 2^{n} - 1}$ of $\Delta^{\Fmf}$, for some
$m \in \mathbb{N}$, such that $\Fmf, i \models \mathit{tile}[d_{i}]$,
for $0 \leq i \leq m \cdot 2^{n} - 1$.

As a next step, we set
\begin{align*}
	\mathit{up}(x) & = \Next \mathit{tile}(x),  \\
	\mathit{right}(x) & = \mathit{equ}(x) \land (\lnot \mathit{equ}(x) \Until \mathit{tile}(x)).
\end{align*}
Given a finite trace $\Fmf$ satisfying $\mathit{corridor}(x)$,
it can be seen 
that the following
hold:
\begin{itemize}
	\item for every $0 \leq i < m \cdot 2^{n} - 1$, $\Fmf, i \models \mathit{up}[d_{i + 1}]$ and $\Fmf, i \not \models \mathit{up}[d_{j}]$, for every $j \neq i + 1$;
	\item for every $0 \leq i < (m - 1) \cdot 2^{n} $, $\Fmf, i \models \mathit{right}[d_{i + 2^{n}}]$ and $\Fmf, i \not \models \mathit{right}[d_{j}]$, for every $j \neq i + 2^{n}$.
\end{itemize}

The following formula will ensure that each point of the corridor is covered by at most one tile:
\begin{equation}
\label{eq:tileeach}
	\Box^{+} \forall x \bigwedge_{\substack{t, t' \in \mathbb{T}, \\ t \neq t'}} \lnot ( T(x) \land T'(x) ).
\end{equation}
In addition, we impose that tile $t_{0}$ is put onto the point $(0,0)$ of the corridor and that tile $t_{1}$ covers $(m - 1, 0)$ by using the following formulas:
\begin{align}
\label{eq:t0}
	& T_{0}(x) \\
	& \Box^{+} \forall x \big( \sigma_{0}(x) \land \mathit{mark}(x) \land \Box \lnot \sigma_{0}(x) \to T_{1}(x) \big).
\end{align}
The condition about matching colours on adjacent sides of adjacent tiles is encoded by the formulas:
\begin{align}
\label{eq:tileupdown}
	& \Box^{+} \forall x \Big( \lnot \sigma_{2^{n} - 1}(x) \to \bigwedge_{\substack{t, t' \in \mathbb{T}, \\ \mathit{up}(t) \neq \mathit{down}(t')}} \big( T(x) \to \forall x (\mathit{up}(x) \to \Box \lnot T'(x) ) \big) \Big), \\	
	\label{eq:tileleftright}
	& \Box^{+} \forall x \Big(  \bigwedge_{\substack{t, t' \in \mathbb{T}, \\ \mathit{right}(t) \neq \mathit{left}(t')}} \big( T(x) \to \forall x (\mathit{right}(x) \to \Box \lnot T'(x) ) \big) \Big).
\end{align}
Finally, we represent as follows
that the bottom and the top side of the corridor
have to be $\mathit{white}$:
\begin{align}
 & \Box^{+} \forall x (\sigma_{0}(x) \land \mathit{mark}(x) \to \bigvee_{\substack{t \in \mathbb{T}, \\ \mathit{down}(t) = \mathit{white}}} T(x) ), \\
  & \Box^{+} \forall x (\sigma_{2^{n} - 1}(x) \land \mathit{mark}(x) \to \bigvee_{\substack{t \in \mathbb{T}, \\ \mathit{up}(t) = \mathit{white}}} T(x) ).
  \label{eq:upwhite}
\end{align}
We then define the $\QTLfr{\Until}{1,mo}{\not c}$ formula $\p_{n, \mathbb{T}} (x)$ as the conjunction of~$(\ref{eq:corridor})$-$(\ref{eq:upwhite})$.
Clearly, the length of $\p_{n, \mathbb{T}} (x)$ is polynomial in $n$ and $| \mathbb{T} |$.

Suppose that $\Fmf, 0 \models \p_{n, \mathbb{T}} [d_{0}]$, for some
$\Fmf = (\Delta^{\Fmf}, (\Fmc_{n})_{n \in [0, l]})$ and some
$d_{0} \in \Delta^{\Fmf}$.  It can be
seen 
(see also Figure~\ref{fig:nopbdp}) that there exists
$m \in \mathbb{N}$ such that the function
$\tau \colon m \times 2^{n} \to \mathbb{T}$ defined so that
$\tau(i, j) = t$ iff
$\Fmf, i \cdot 2^{n} + j \models T[d_{i \cdot 2^{n} + j}]$, for
$0 \leq i < m$ and $0 \leq j < 2^{n}$, is a tiling of the
$m \times 2^{n}$ corridor.

Conversely, if there exist $m \in \mathbb{N}$ and a function
$\tau \colon m \times 2^{n} \to \mathbb{T}$ tiling the
$m \times 2^{n}$ corridor, then we can construct
a finite trace
$\Fmf = (\Delta^{\Fmf}, (\Fmc_{n})_{n \in [0, m \cdot 2^{n} - 1]})$,
such that $\Delta^{\Fmf} = \{ d_{0}, \ldots, d_{m \cdot 2^{n} - 1} \}$
and $\Fmf, 0 \models \p_{n, \mathbb{T}}[d_{0}]$ (see also
Figure~\ref{fig:nopbdp}).
\end{proof}

%
%
%
%

The reduction given in Theorem~\ref{theor:finsatchar} allows us to
transfer $\ExpSpace$ upper bounds for the following fragments of
first-order temporal logic on infinite traces to the finite traces
case (see~\cite{HodEtAl03} and~\cite[Theorem 11.31]{GabEtAl}): the
monadic monodic fragment $\QTLfr{\Until}{mo}{\monodic}$, and the
two-variable monodic fragment $\QTLfr{\Until}{2}{\monodic}$.
\begin{proposition}
\label{thm:upper}
$\QTLfr{\Until}{mo}{\monodic}$ and $\QTLfr{\Until}{2}{\monodic}$ formula satisfiability on finite traces is in $\ExpSpace$.
\end{proposition}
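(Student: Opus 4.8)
The plan is to reduce satisfiability on finite traces to satisfiability on infinite traces via the translation $\cdot\tr$ together with the end-of-time formula $\psif$ from Theorem~\ref{theor:finsatchar}, and then invoke the known $\ExpSpace$ upper bounds for the corresponding fragments on infinite traces (see~\cite{HodEtAl03} and~\cite[Theorem 11.31]{GabEtAl}). By Theorem~\ref{theor:finsatchar}, a formula $\varphi$ is satisfiable on finite traces iff $\varphi\tr \land \psif$ is satisfiable on infinite traces, so it suffices to show that (a) this reduction stays within the respective fragment and (b) it is computable in polynomial time.

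For (b), the translation $\cdot\tr$ acts homomorphically on the Boolean connectives and on $\exists$, and merely appends the fixed, constant-size sentence $\psi_{f}^{1} = \forall x \lnot E(x)$ inside the right argument of each $\Until$. Hence $|\varphi\tr| = O(|\varphi|)$, and since $\psif$ has constant size, $|\varphi\tr \land \psif|$ is linear in $|\varphi|$; the reduction is clearly computable in linear time.

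The crux of the argument, and the step I expect to require the most care, is (a): verifying that $\varphi\tr \land \psif$ lies in $\QTLfr{\Until}{mo}{\monodic}$ (resp. $\QTLfr{\Until}{2}{\monodic}$) whenever $\varphi$ does. The only new predicate symbol is the unary $E$, so the monadic restriction (no predicates of arity $\geq 2$) is preserved; for the two-variable fragment, $\psif$ and every appended copy of $\psi_{f}^{1}$ use only the single variable $x$, so, adopting the standard convention that two-variable formulas range over a fixed pair $\{x, y\}$, no third variable is introduced (there is no capture issue, since $\psi_{f}^{1}$ is a sentence over the fresh predicate $E$, and so its truth value is independent of the surrounding bindings). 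The delicate point is monodicity: one must check that every subformula of the form $\psi \Until \chi$ in $\varphi\tr \land \psif$ still has at most one free variable. For the occurrences coming from $\psif$, this holds because, unfolding $\Box$ and $\Next$, each $\Until$ in $\psi_{f}^{2}$ and $\psi_{f}^{3}$ is applied either to sentences (zero free variables) or, as in $\Next E(x)$, to $E(x)$ with the single free variable $x$. For the occurrences coming from $\varphi\tr$, each original $\psi \Until \chi$ becomes $\psi\tr \Until (\chi\tr \land \psi_{f}^{1})$; since $\psi_{f}^{1}$ is a \emph{sentence}, the set of free variables is unchanged, so the free-variable count stays at most one, exactly as in $\varphi$. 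Thus monodicity, and hence membership in the fragment, is preserved.

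With (a) and (b) established, the linear-time reduction maps $\QTLfr{\Until}{mo}{\monodic}$ (resp. $\QTLfr{\Until}{2}{\monodic}$) finite-trace satisfiability into infinite-trace satisfiability of the same fragment, which is in $\ExpSpace$. Since $\ExpSpace$ is closed under polynomial-time reductions, the claimed upper bound follows.
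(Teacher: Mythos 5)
Your proof is correct and takes essentially the same approach as the paper: the paper obtains this proposition precisely by applying the reduction of Theorem~\ref{theor:finsatchar} and transferring the known $\ExpSpace$ upper bounds for $\QTLfr{\Until}{mo}{\monodic}$ and $\QTLfr{\Until}{2}{\monodic}$ on infinite traces. The details you verify explicitly (that the translation is linear-size and, since $\psi_{f}^{1}$ is a unary-predicate sentence, preserves monadicity, the two-variable restriction, and monodicity) are exactly what the paper leaves implicit, and your verification of them is sound.
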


Thanks to the hardness and membership results of, respectively, Propositions~\ref{thm:ltlfxs5}
and~\ref{thm:upper},
since
$\QTLfr{\Until}{1,mo}{\not c}$ is contained both in
$\QTLfr{\Until}{mo}{\monodic}$ and $\QTLfr{\Until}{1}{}$, and since
$\QTLfr{\Until}{1}{}$ is contained in $\QTLfr{\Until}{2}{\monodic}$,
we obtain the following result.

\begin{theorem}
$\QTLfr{\Until}{1,mo}{\not c}$,
$\QTLfr{\Until}{1}{}$, $\QTLfr{\Until}{mo}{\monodic}$ and $\QTLfr{\Until}{2}{\monodic}$ formula satisfiability on finite traces are $\ExpSpace$-complete problems.
\end{theorem}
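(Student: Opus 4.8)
The plan is to prove the four completeness claims by a single sandwich argument, combining the two preceding propositions with the syntactic inclusions among the fragments. The guiding principle is that $\ExpSpace$-hardness propagates \emph{upwards} along a fragment inclusion (every satisfiability instance of a smaller fragment is already an instance of a larger one, so a hardness reduction targeting the smaller fragment equally witnesses hardness of the larger), whereas $\ExpSpace$-membership propagates \emph{downwards} (a decision procedure for the larger fragment also decides satisfiability for the smaller). Hence it suffices to lay out the relevant inclusions and then invoke hardness at the bottom of the chain and membership at the top.

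First I would record the inclusions
\[
\QTLfr{\Until}{1,mo}{\not c} \subseteq \QTLfr{\Until}{mo}{\monodic}, \qquad
\QTLfr{\Until}{1,mo}{\not c} \subseteq \QTLfr{\Until}{1}{} \subseteq \QTLfr{\Until}{2}{\monodic}.
\]
The only point needing a word of justification is that the one-variable fragments are automatically monodic: if a formula uses at most one variable, then every subformula, in particular every $\psi \Until \chi$, has at most one free variable, so the monodicity restriction is met vacuously. The remaining conditions are plain syntactic weakenings: dropping the constant-free and monadic restrictions gives $\QTLfr{\Until}{1,mo}{\not c} \subseteq \QTLfr{\Until}{1}{}$, keeping only unary predicates and one variable gives $\QTLfr{\Until}{1,mo}{\not c} \subseteq \QTLfr{\Until}{mo}{\monodic}$, and a one-variable formula trivially uses at most two variables, yielding $\QTLfr{\Until}{1}{} \subseteq \QTLfr{\Until}{2}{\monodic}$.

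With the inclusions fixed, I would assemble the bounds. By Proposition~\ref{thm:ltlfxs5}, satisfiability on finite traces is $\ExpSpace$-hard already for the smallest fragment $\QTLfr{\Until}{1,mo}{\not c}$; since this fragment is contained in each of the other three, the same lower bound is inherited by $\QTLfr{\Until}{1}{}$, $\QTLfr{\Until}{mo}{\monodic}$, and $\QTLfr{\Until}{2}{\monodic}$. For the matching upper bound, Proposition~\ref{thm:upper} places $\QTLfr{\Until}{mo}{\monodic}$ and $\QTLfr{\Until}{2}{\monodic}$ in $\ExpSpace$; using $\QTLfr{\Until}{1,mo}{\not c} \subseteq \QTLfr{\Until}{mo}{\monodic}$ and $\QTLfr{\Until}{1}{} \subseteq \QTLfr{\Until}{2}{\monodic}$, membership in $\ExpSpace$ transfers downwards to the two remaining fragments. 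Pairing the two bounds yields $\ExpSpace$-completeness in all four cases.

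I do not anticipate any genuine obstacle at this stage: the substantive work has already been discharged in the two propositions, namely the corridor-tiling reduction supplying the lower bound and the translation of Theorem~\ref{theor:finsatchar} feeding into the known infinite-trace $\ExpSpace$ upper bounds. What remains is essentially bookkeeping of inclusions, the one delicate observation being the automatic monodicity of single-variable formulas noted above; everything else follows by the uniform up/down propagation of hardness and membership along those inclusions.
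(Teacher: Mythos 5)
Your proposal is correct and follows essentially the same route as the paper: the paper likewise combines the $\ExpSpace$-hardness of $\QTLfr{\Until}{1,mo}{\not c}$ from Proposition~\ref{thm:ltlfxs5} with the $\ExpSpace$ membership of $\QTLfr{\Until}{mo}{\monodic}$ and $\QTLfr{\Until}{2}{\monodic}$ from Proposition~\ref{thm:upper}, propagated along exactly the inclusions you list. Your explicit remark that one-variable formulas are automatically monodic is a small detail the paper leaves implicit, but it changes nothing in substance.
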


\subsection{Complexity Results on $k$-Bounded Traces}
\label{sec:compkbound}

We now study satisfiability of the decidable fragments considered
above on traces
with at most $k$ time points, where $k$ is given in binary as part of
the input.
We show that in this case the complexity of the satisfiability problem
in the fragments considered in the previous section decreases from
$\ExpSpace$~to $\NExpTime$. We
start by showing the lower
bound for $\QTLfr{\Until}{1,mo}{\not c}$.
%
%
\begin{proposition}
\label{thm:nexptimelowerbound}
$\QTLfr{\Until}{1,mo}{\not c}$ formula satisfiability on $k$-bounded traces is $\NExpTime$-hard.
\end{proposition}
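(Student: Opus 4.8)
The plan is to reduce from the \emph{exponential square tiling problem}: given a finite set $\mathbb{T}$ of tile types with a distinguished $t_{0} \in \mathbb{T}$ and $n \in \mathbb{N}$ in unary, decide whether the $2^{n} \times 2^{n}$ grid admits a tiling $\tau$ with $\tau(0,0) = t_{0}$ and matching colours on horizontally and vertically adjacent cells. This problem is $\NExpTime$-complete~\cite{Boa97}. The whole construction of Proposition~\ref{thm:ltlfxs5} can be reused almost verbatim; the only conceptual change is that the \emph{unbounded} corridor height $m$ must now be pinned to exactly $2^{n}$, and that the resulting total number of cells, $2^{2n}$, is imposed through the trace-length bound $k$. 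The reason the complexity drops to $\NExpTime$ is precisely that bounding the number of instants bounds the grid to exponential size, so a tiling can be guessed and verified rather than explored in exponential space.

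Concretely, I would keep the $n$-bit column counter $\sigma$, the per-element $\pi$-counter used to address columns, the predicates $\mathit{tile}$, $\mathit{up}$, $\mathit{right}$, and all the colour-matching and border constraints~(\ref{eq:tileeach})--(\ref{eq:upwhite}), adapting the border conditions to the square version of the problem. To $\sigma$ I would add a second $n$-bit \emph{row} counter $\rho$, defined by formulas wholly analogous to~(\ref{eq:countval})--(\ref{eq:counter}): its value is forced to be uniform over the domain at every instant (as in~(\ref{eq:countval})), it is initialised to $0$, it is incremented by one precisely at the instants where $\sigma$ wraps from $2^{n}-1$ back to $0$, and it is left unchanged otherwise. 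I would then replace Formula~(\ref{eq:untilform}) by the requirement $\Diamond^{+}(\last \land \sigma_{2^{n}-1}(x) \land \rho_{2^{n}-1}(x))$, which forces any model to be a finite trace with exactly $2^{2n}$ instants, time point $i \cdot 2^{n} + j$ encoding the cell $(i,j)$ of a $2^{n} \times 2^{n}$ grid. Finally I would set $k = 2^{2n}$, whose binary representation has $2n+1$ bits and is thus polynomial in the size of the input.

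With these modifications the resulting $\QTLfr{\Until}{1,mo}{\not c}$ formula $\p_{n,\mathbb{T}}(x)$ still has length polynomial in $n$ and $|\mathbb{T}|$, and the two directions of Proposition~\ref{thm:ltlfxs5} transfer directly: every $k$-bounded model is forced to have exactly $2^{2n}$ instants and hence yields a tiling by setting $\tau(i,j) = t$ whenever $\Fmf, i\cdot 2^{n}+j \models T[d_{i\cdot 2^{n}+j}]$, while conversely any tiling is laid out on a trace of exactly $2^{2n}$ instants, which is $k$-bounded since $k = 2^{2n}$. The main obstacle is to realise the row counter $\rho$ and the synchronisation of its increment with the wrap-around of $\sigma$ \emph{inside} the one-variable, constant-free, monadic fragment: just as for $\sigma$, the carry logic and the ``value uniform across the domain'' device must be expressed without a second quantified variable. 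The delicate point of the correctness argument is then to verify that the joint last-instant condition on $\sigma$ and $\rho$, together with the increment rules, pins the length to \emph{exactly} $2^{2n}$ (so that no shorter trace can satisfy the border and matching constraints by covering a smaller grid); the rest is a routine adaptation of the already-verified corridor encoding.
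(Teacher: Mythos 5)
Your proposal matches the paper's own proof essentially step for step: the paper also reduces from the $\NExpTime$-complete $2^{n}\times 2^{n}$ grid tiling problem by reusing the corridor construction of Proposition~\ref{thm:ltlfxs5}, adding a second domain-uniform $n$-bit $\rho$-counter (incremented exactly when the $\sigma$-counter wraps to $0$), replacing Formula~(\ref{eq:untilform}) with $\Diamond^{+}(\last \land \sigma_{2^{n}-1}(x) \land \rho_{2^{n}-1}(x))$, and setting $k = 2^{2n}$. The ``main obstacle'' you flag — expressing the row counter and its synchronisation with $\sigma$ inside the one-variable constant-free monadic fragment — is resolved in the paper precisely as you sketch, via formulas analogous to~(\ref{eq:countval})--(\ref{eq:counter}) using $\forall x R_i(x) \lor \forall x \lnot R_i(x)$ for domain uniformity and a next-instant trigger for the carry logic.
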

\begin{proof}
  The proof is an adaptation of Proposition~\ref{thm:ltlfxs5} to the case
  of $\QTLfr{\Until}{1,mo}{\not c}$ on $k$-bounded traces.
  As
  above, a \emph{tile type} is a
  $4$-tuple
  $t = (\mathit{up}(t), \mathit{down}(t), \mathit{left}(t),
  \mathit{right}(t))$ of \emph{colours}.  Let $\mathbb{T}$ be a finite
  set of tile types, with $t_{0} \in \mathbb{T}$.
For an $n \in \mathbb{N}$,
the \emph{$2^{n} \times 2^{n}$ grid tiling problem} 
is the problem of deciding whether there exists a tiling $\tau \colon 2^{n} \times 2^{n} \rightarrow \mathbb{T}$ such that:
\begin{itemize}
\item $\tau(0, 0) = t_{0}$;
\item $\mathit{up}(\tau(i, j)) = \mathit{down}(\tau(i, j + 1))$, for
  $0 \leq i < 2^{n}, 0 \leq j < 2^{n} - 1$,
\item $\mathit{right}(\tau(i, j)) = \mathit{left}(\tau(i + 1, j))$,
  for $0 \leq i < 2^{n} - 1, 0 \leq j < 2^{n}$.
\end{itemize}
The $2^{n} \times 2^{n}$ grid tiling problem
is known to be $\NExpTime$-complete~\cite{Boa97}.
  In the following, we will reduce this problem to
$\QTLfr{\Until}{1,mo}{\not c}$ formula satisfiability on $k$-bounded
traces, with
$k = 2^{2n}$.

Let $\mathbb{T}$ be a finite set of tile types, with
$t_{0} \in \mathbb{T}$, and let $n \in \mathbb{N}$.
We modify the proof of Proposition~\ref{thm:ltlfxs5} to construct a
$\QTLfr{\Until}{1,mo}{\not c}$ formula $\p_{n, \mathbb{T}}$ such that:
$(i)$ the length of $\p_{n, \mathbb{T}}$ is polynomial in $n$ and
$| \mathbb{T} |$; $(ii)$ $\p_{n, \mathbb{T}}$ is satisfiable on
$k$-bounded traces, with
$k = 2^{2n}$,
iff there exists a function
$\tau \colon 2^{n} \times 2^{n} \rightarrow \mathbb{T}$ tiling the
$2^{n} \times 2^{n}$ grid (as described by the conditions above).

Recall the definition of the $\sigma$-counter, given in the proof of
Proposition~\ref{thm:ltlfxs5}.  We introduce other $n$ distinct unary
predicates $R_{0}, \ldots, R_{n - 1}$, and let
$R_{i}^{0} = \lnot R_{i}$ and $R_{i}^{1} = R_{i}$, for
$0 \leq i \leq n - 1$.  Then, we define another binary counter, called
$\rho$-counter up to $2^{n}$ by setting
\[
 \rho_{j}(x) = R_{n - 1}^{d_{n - 1}}(x) \land \ldots \land R_{0}^{d_{0}}(x),
\]
where $d_{i}$ is the $i$-th bit in the binary representation of $0 \leq j \leq 2^{n} - 1$.
In addition, we require
\begin{equation}
	\Box^{+} \bigwedge_{0 \leq i \leq n - 1} (\forall x R_{i}(x) \lor \forall x \lnot R_{i}(x)),
	\label{eq:rhocountval}
\end{equation}
so that, at each time point, the counter value will be the same for
every element of the domain.  The following formula will set the value
of this counter to $0$ at the first instant of a trace, and increase
its value by one at each future instant where $\sigma_{0}$ holds (if
any).

\begin{align}
	\notag
	&
	\rho_{0}(x)
	\ \land
	\\
	\notag
	&
	\Box^{+}
	\bigwedge_{0 \leq k < n}
	\Big(
		\Next \lnot \sigma_{0}(x) \to \big( R_{k}(x) \leftrightarrow
		\Next
		R_{k}(x) \big)
	\Big)
	\ \land
	\\
	\notag
	&
	\Box^{+}
	\bigwedge_{0 \leq k < n}
	\Big(
		\Next \sigma_{0}(x) \to
		\Big( 
		\big(
				\bigwedge_{0 \leq i < k}  R_{i}(x) \land \lnot R_{k}(x)
			\big)
			\to
			\\
	&	
	\phantom{
		\Box^{+}
		\bigwedge_{0 \leq k < n}
		\Big(
		\Wnext \sigma_{0}(x) \to
		\Big( 
	}
			\big(
				\bigwedge_{k < j < n} 
				(
				R_{j}(x) \leftrightarrow
				\Next
				R_{j} (x) )
			\big)
			\land
			\Next
			\big(
				\bigwedge_{0 \leq i < k}  \lnot R_{i}(x) \land R_{k}(x) 
			\big)
			\Big)
	\Big).
		\label{eq:rhocounter}
\end{align}
Formula~$(\ref{eq:rhocounter})$ implies that, if $\sigma_{0}(x)$ is satisfied at a given instant, then $\rho_{j}(x)$ is satisfied as well, for some $0 \leq j < 2^{n}$.

We then modify Formula~$(\ref{eq:untilform})$ by imposing instead
\begin{align}
	& \D^{+} (\last \land \sigma_{2^{n} - 1}(x) \land \rho_{2^{n} - 1}(x) ), \label{eq:untilrhoform}
\end{align}
so to force a finite trace to be such that in its last instant both
$\sigma_{2^{n} - 1}(x)$ and $\rho_{2^{n} - 1}(x)$ hold.

Now define $\p_{n,T}$ as the conjunction of
$(\ref{eq:countval})$-$(\ref{eq:pval})$,
$(\ref{eq:tileboxmarktotile})$,
$(\ref{eq:tileeach})$, $(\ref{eq:t0})$,
$(\ref{eq:tileupdown})$, $(\ref{eq:tileleftright})$,
$(\ref{eq:rhocountval})$-$(\ref{eq:untilrhoform})$, whose length is
polynomial in $n$ and $| \mathbb{T} |$. 
%

We now show the correctness of the encoding (see also
Figure~\ref{fig:nopbdp}). Suppose that
$\Fmf, 0 \models \p_{n, \mathbb{T}} [d_{0}]$, for some
$k$-bounded trace $\Fmf = (\Delta^{\Fmf}, (\Fmc_{i})_{i \in [0, l]})$,
with
$k= 2^{2n}$ and $l < k$,
and some $d_{0} \in \Delta^{\Fmf}$.
It can be seen
that the function $\tau \colon 2^{n} \times 2^{n} \to \mathbb{T}$
defined so that $\tau(i, j) = t$ iff
$\Fmf, i \cdot 2^{n} + j \models T[d_{i \cdot 2^{n} + j}]$, for
$0 \leq i < 2^{n}$ and $0 \leq j < 2^{n}$, is a tiling of the
$2^{n} \times 2^{n}$ grid.

Conversely, if there exists a function
$\tau \colon 2^{n} \times 2^{n} \to \mathbb{T}$ tiling the
$2^{n} \times 2^{n}$ grid, then we can construct a
$k$-bounded trace
$\Fmf = (\Delta^{\Fmf}, (\Fmc_{i})_{i \in [0, k-1]})$, with
$k= 2^{2n}$, such that
$\Delta^{\Fmf} = \{ d_{0}, \ldots, d_{2^{2n} - 1} \}$ and
$\Fmf, 0 \models \p_{n, \mathbb{T}}[d_{0}]$.

\end{proof}

\begin{figure}[t]
\centering
\begin{tikzpicture}
\tikzset{
dot/.style = {draw, fill=black, circle, inner sep=0pt, outer sep=0pt, minimum size=2pt},
edot/.style = {draw, fill=white, circle, inner sep=0pt, outer sep=0pt, minimum size=2pt}
}
\node (0) at (0,-1) {};
\draw (0) node[label=center:$0$] {};

\node (1) at (1,-1) {};
\draw (1) node[label=center:$1$] {};

\node (2) at (2,-1) {};
\draw (2) node[label=center:$2$] {};

\node (3) at (3,-1) {};
\draw (3) node[label=center:$3$] {};

\node (d0) at (-1,0) {};
\draw (d0) node[label=center:$d_0$] {};

\node (d1) at (-1,1) {};
\draw (d1) node[label=center:$d_1$] {};

\node (d2) at (-1,2) {};
\draw (d2) node[label=center:$d_2$] {};

\node (d3) at (-1,3) {};
\draw (d3) node[label=center:$d_3$] {};

\node (00) at (0,0) {};
\draw (00) node[dot,  label={[label distance=-0.1cm]north:$\mathit{equ}$}, label={[label distance=-0.1cm]south:$\mathit{tile}$}] {};

\node (01) at (1,0) {};
\draw (01) node[edot] {};

\node (02) at (2,0) {};
\draw (02) node[edot, label={[label distance=-0.1cm]north:$\mathit{equ}$}] {};

\node (03) at (3,0) {};
\draw (03) node[edot] {};

\node (10) at (0,1) {};
\draw (10) node[dot, label={[label distance=-0.1cm]south:$\mathit{up}$}] {};

\node (11) at (1,1) {};
\draw (11) node[dot, label={[label distance=-0.1cm]north:$\mathit{equ}$}, label={[label distance=-0.1cm]south:$\mathit{tile}$}] {};

\node (12) at (2,1) {};
\draw (12) node[edot] {};

\node (13) at (3,1) {};
\draw (13) node[edot,  label={[label distance=-0.1cm]north:$\mathit{equ}$}] {};

\node (20) at (0,2) {};
\draw (20) node[dot,  label={[label distance=-0.1cm]north:$\mathit{equ}$}, label={[label distance=-0.1cm]south:$\mathit{right}$}] {};

\node (21) at (1,2) {};
\draw (21) node[dot,  label={[label distance=-0.1cm]south:$\mathit{up}$}] {};

\node (22) at (2,2) {};
\draw (22) node[dot, label={[label distance=-0.1cm]north:$\mathit{equ}$}, label={[label distance=-0.1cm]south:$\mathit{tile}$}] {};

\node (23) at (3,2) {};
\draw (23) node[edot] {};

\node (30) at (0,3) {};
\draw (30) node[dot] {};

\node (31) at (1,3) {};
\draw (31) node[dot,  label={[label distance=-0.1cm]north:$\mathit{equ}$}, label={[label distance=-0.1cm]south:$\mathit{right}$}] {};

\node (32) at (2,3) {};
\draw (32) node[dot, label={[label distance=-0.1cm]south:$\mathit{up}$}] {};

\node (33) at (3,3) {};
\draw (33) node[dot,  label={[label distance=-0.1cm]north:$\mathit{equ}$}, label={[label distance=-0.1cm]south:$\mathit{tile}$}] {};


\draw[dashed]  (0-0.5,0-0.5) -- (7,0-0.5);
\draw[dashed]  (1+0.5,1+0.5) -- (7,1+0.5);
\draw[dashed]  (2+0.5,3+0.5) -- (7,3+0.5);

\draw[dashed]  (6,0+0.5) -- (7,0+0.5);

\draw[dashed]  (6,0-0.5) -- (6,1+0.5);
\draw[dashed]  (7,0-0.5) -- (7,1+0.5);


\draw[-]  (0-0.5,0-0.5) -- (0+0.5,0-0.5);
\draw[-]  (0-0.5,0+0.5) -- (1+0.5,0+0.5);
\draw[-]  (1-0.5,1+0.5) -- (2+0.5,1+0.5);
\draw[-]  (2-0.5,2+0.5) -- (3+0.5,2+0.5);
\draw[-]  (3-0.5,3+0.5) -- (3+0.5,3+0.5);

\draw[-]  (5,0-0.5) -- (6,0-0.5);
\draw[-]  (5,0+0.5) -- (6,0+0.5);

\draw[-]  (5,1+0.5) -- (7,1+0.5);
\draw[-]  (6,2+0.5) -- (7,2+0.5);
\draw[-]  (6,3+0.5) -- (7,3+0.5);

\draw[-]  (0-0.5,0-0.5) -- (0-0.5,0+0.5);
\draw[-]  (0+0.5,0-0.5) -- (0+0.5,1+0.5);
\draw[-]  (1+0.5,1-0.5) -- (1+0.5,2+0.5);
\draw[-]  (2+0.5,2-0.5) -- (2+0.5,3+0.5);
\draw[-]  (3+0.5,3-0.5) -- (3+0.5,3+0.5);

\draw[-]  (5,0-0.5) -- (5,1+0.5);
\draw[-]  (6,0-0.5) -- (6,1+0.5);

\draw[-]  (6,2-0.5) -- (6,3+0.5);
\draw[-]  (7,2-0.5) -- (7,3+0.5);



\node (s0) at (0,-2) {};
\draw (s0) node[label=center:$\sigma_0$] {};

\node (s1) at (1,-2) {};
\draw (s1) node[label=center:$\sigma_1$] {};

\node (s2) at (2,-2) {};
\draw (s2) node[label=center:$\sigma_0$] {};

\node (s3) at (3,-2) {};
\draw (s3) node[label=center:$\sigma_1$] {};


\node (r0) at (0,-2.5) {};
\draw (r0) node[label=center:$\rho_0$] {};

\node (r1) at (1,-2.5) {};
\draw (r1) node[label=center:$\rho_0$] {};

\node (r2) at (2,-2.5) {};
\draw (r2) node[label=center:$\rho_1$] {};

\node (r3) at (3,-2.5) {};
\draw (r3) node[label=center:$\rho_1$] {};


\node (p0) at (-2,0) {};
\draw (p0) node[label=center:$\pi_0$] {};

\node (p1) at (-2,1) {};
\draw (p1) node[label=center:$\pi_1$] {};

\node (p2) at (-2,2) {};
\draw (p2) node[label=center:$\pi_0$] {};

\node (p3) at (-2,3) {};
\draw (p3) node[label=center:$\pi_1$] {};


\draw[->, >=stealth, shorten <= 5pt, shorten >= 5pt]  (p0) edge (d0);
\draw[->, >=stealth, shorten <= 5pt, shorten >= 5pt]  (p1) edge (d1);
\draw[->, >=stealth, shorten <= 5pt, shorten >= 5pt]  (p2) edge (d2);
\draw[->, >=stealth, shorten <= 5pt, shorten >= 5pt]  (p3) edge (d3);


\draw[->, >=stealth, shorten <= 5pt, shorten >= 5pt]  (s0) edge (0);
\draw[->, >=stealth, shorten <= 5pt, shorten >= 5pt]  (s1) edge (1);
\draw[->, >=stealth, shorten <= 5pt, shorten >= 5pt]  (s2) edge (2);
\draw[->, >=stealth, shorten <= 5pt, shorten >= 5pt]  (s3) edge (3);

%
\end{tikzpicture}
\caption{Tiling of the $2^{n} \times 2^{n}$ grid, with $n = 1$.}
\label{fig:nopbdp}
\end{figure}
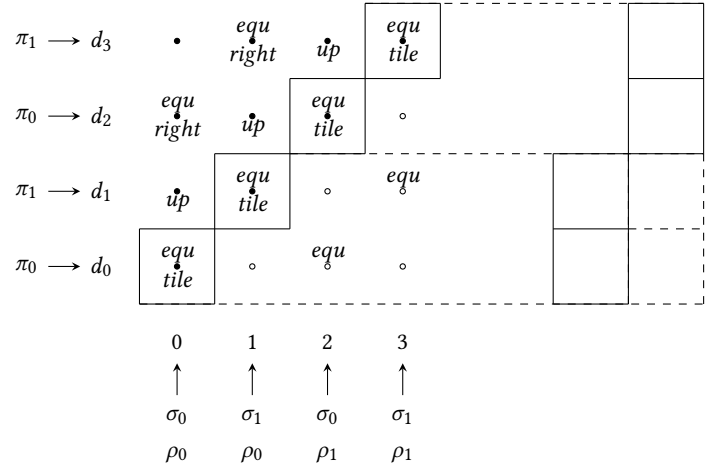

%

For the  upper bound, we resort to a classical 
abstraction of models called \emph{quasimodels}~\cite{GabEtAl}.  
One can show that there is a model with at most $k$ time
points iff there is a quasimodel with a sequence of  states (sets of 
 subformulas with certain constraints) of
length at most $k$.  Then, our upper bound is obtained by guessing an
exponential size sequence of states 
 which serves as a
certificate for the existence of a quasimodel (and therefore a model)
for the input formula.
\begin{restatable}{proposition}{TheorTALCkComplete}
  \label{th:fixe}
  $\QTLfr{\Until}{mo}{\monodic}$ and $\QTLfr{\Until}{2}{\monodic}$
  formula satisfiability on $k$-bounded traces is in $\NExpTime$.
\end{restatable}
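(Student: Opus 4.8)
The plan is to adapt the quasimodel technique for monodic first-order temporal logic of \cite{GabEtAl} (Chapter~11) to the finite, $k$-bounded setting, and then to turn the resulting model characterisation into a certificate of exponential size that can be checked in exponential time. First I would recall the abstraction. Fix an input formula $\p$ from $\QTLfr{\Until}{mo}{\monodic}$ or $\QTLfr{\Until}{2}{\monodic}$ and let $\cl(\p)$ be the set of subformulas of $\p$ closed under single negation; its size is linear in $\len{\p}$. A \emph{type} is a Boolean- and monodicity-consistent maximal subset of $\cl(\p)$ describing the behaviour of a single domain element at a single instant, and a \emph{quasistate} is a set of types satisfying the first-order coherence conditions of \cite{GabEtAl} that make the existential and universal witnesses mutually realisable at one time point (this is where monodicity is used: temporal operators bind at most one free variable). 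There are at most $2^{O(\len{\p})}$ types, so each quasistate can be written down in size $2^{O(\len{\p})}$.

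Next I would introduce the finite-trace notion of quasimodel: a sequence $q_0,\dots,q_{\ell-1}$ of quasistates with $\ell \le k$, whose consecutive members are threaded by \emph{runs}, i.e.\ sequences of types $r(0),\dots,r(\ell-1)$ with $r(i)\in q_i$ and with the $\Until$- and $\ReleaseP$-membership of $r(i)$ justified by $r(i{+}1)$ in the expected way. Two features distinguish this from the infinite case of \cite{GabEtAl}: (i)~there is no periodic ``lasso'', so every eventuality (an $\Until$-obligation appearing in some type of some $q_i$) must be \emph{discharged within the sequence}, strictly before position $\ell$; and (ii)~the terminal quasistate $q_{\ell-1}$ must be consistent with being the last instant, i.e.\ each of its types contains $\last$, falsifies every $\Next\psi$ and satisfies every $\Wnext\psi$. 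I would then prove the characterisation lemma: \emph{$\p$ is satisfiable on a trace with at most $k$ instants iff there is such a quasimodel of length $\ell \le k$ in which $\p$ belongs to some type of $q_0$.} The forward direction reads off a quasimodel from a finite model by collecting, at each instant, the realised types; the backward direction unravels a quasimodel into a genuine finite trace over a constant domain, using the runs to populate the domain and the within-sequence discharge of eventualities to guarantee the $\Until$-semantics at every instant.

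For the complexity bound I would represent the run structure not by individual threads but, layer by layer, by coherence relations $R_i \subseteq q_i \times q_{i+1}$ recording the admissible type transitions, the runs being the maximal paths in this layered graph. Since $k$ is given in binary, the sequence has at most $k$ layers (hence exponentially many in the input size $\len{\p}+\log k$), and each quasistate and each $R_i$ has size $2^{O(\len{\p})}$, so the whole object has size $2^{O(\len{\p}+\log k)}$, i.e.\ exponential in the input. The $\NExpTime$ procedure then guesses $\ell \le k$, the quasistates $q_0,\dots,q_{\ell-1}$, and the relations $R_0,\dots,R_{\ell-2}$, and verifies in exponential time: local coherence of each $R_i$; that $q_0$ contains $\p$; that $q_{\ell-1}$ meets the terminal conditions above; that every type lies on some run; and, by a single propagation along the layers, that every $\Until$-eventuality is discharged before position $\ell$.

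I expect the main obstacle to be the correctness of the characterisation lemma rather than the complexity accounting. The delicate points are the monodic first-order coherence of quasistates (ensuring the abstract existential and universal witnesses can be \emph{simultaneously} realised over a common constant domain, exactly as in \cite{GabEtAl}) and, specifically for the finite case, the bookkeeping of eventualities: on finite traces an $\Until$-obligation can be neither postponed indefinitely nor absorbed by a lasso, so I must show that requiring every eventuality to be fulfilled strictly before the terminal instant, together with the terminal conditions forcing $\last$ and the correct treatment of $\Next$ versus $\Wnext$, is precisely what makes the unravelling satisfy $\p$ at time $0$.
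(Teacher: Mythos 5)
Your overall route is the same as the paper's: characterise satisfiability on $k$-bounded traces by the existence of a quasimodel whose sequence of quasistates has length at most $k$, then guess an exponential-size certificate and verify it. Your representation of the run structure by layer-to-layer relations $R_i$ with local $\Until$-compatibility plus terminal conditions, in place of the paper's explicitly guessed runs (one type sequence per type per position), is a sound and essentially cosmetic variation; the same device appears in the paper's own type-elimination argument for \TALCfk restricted to global CIs.

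There is, however, a genuine gap in the complexity accounting. You build into the definition of quasistate that its types are ``mutually realisable at one time point'', i.e.\ that a single first-order interpretation over a common domain realises exactly that set of types. This is a semantic, first-order satisfiability condition, not a local syntactic one, and your \NExpTime\ checklist (coherence of each $R_i$, membership of $\p$ in $q_0$, terminal conditions, every type on a run, eventuality propagation) never verifies it; without it, the unravelling direction of your characterisation lemma cannot even begin, since the guessed $q_i$ need not be realisable by any interpretation. This check is precisely where the restriction to $\QTLfr{\Until}{mo}{\monodic}$ and $\QTLfr{\Until}{2}{\monodic}$ does real work, and your argument never uses it: the paper reduces realisability of a state candidate \Cmf to satisfiability of the first-order sentence $\real_{\Cmf}$ and invokes the exponential (hence finite) model property of the monadic and two-variable fragments of first-order logic~\cite{BoeEtAl97} to obtain an \NExpTime\ realisability test (Lemma~\ref{lem:aux}, following~\cite{GabEtAl}). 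A sanity check that something is missing: as written, your proof would apply verbatim to monodic formulas over unrestricted first-order logic, whose satisfiability is undecidable. The repair is direct: alongside each $q_i$, also guess a finite interpretation of size at most exponential in $|\p|$ (which exists whenever $q_i$ is realisable, by the exponential model property) and verify directly that it realises $q_i$; since there are at most $k$ quasistates, the certificate and its verification stay exponential in $|\p|$ and $|k|$, so the procedure remains in \NExpTime. (A smaller omission of the same kind: runs must keep constants rigid, i.e.\ your relations $R_i$ may only connect named types for the same individual; this is a local condition and easy to add.)
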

\begin{proof}
We use standard definitions for
\emph{quasimodels}~\cite{HodEtAl2,GabEtAl}, presented here for
convenience of the reader.
In the following, with an abuse of notation, $\psi(x)$ denotes a
formula $\psi$ with \emph{at most} $x$ as free variable. Let $\varphi$
be a $\QTLfr{\Until}{}{\monodic}$ sentence\footnote{The restriction to
  sentences is without loss of generality, since a monodic formula
  $\p( x_1, \ldots, x_n )$ is equisatisfiable with the monodic
  sentence $\exists x_1 \ldots \exists x_n \p( x_1, \ldots, x_n )$.},
let $\individuals{\varphi}$ be the set of individuals occurring
in~$\varphi$, and let ${\sf sub}(\varphi)$ be the set of subformulas of
$\varphi$.
For every formula $\psi(y)$ of the form $\psi_1\Umc\psi_2 $ with one
free variable $y$, we fix a \emph{surrogate} $R_\psi(y)$; and for
every sentence $\psi$ of the form $\psi_1\Umc\psi_2 $, we fix a
surrogate $p_\psi$, where $R_\psi$ and $p_\psi$ are symbols not
occurring in $\varphi$.
Given a $\QTLfr{\Until}{}{\monodic}$ formula $\varphi$, we denote by
$\overline{\varphi}$ the result of replacing in $\varphi$ all
subformulas of the form $\psi_1\Umc\psi_2$ which are not in the scope
of any other occurrence of \Umc by their surrogates. Thus,
$\overline{\varphi}$ does not contain occurrences of temporal
operators.
%
%
Let ${\sf sub_0}(\varphi)$ be the set of all sentences in
${\sf sub}(\varphi)$.  Let $x$ be a variable not occurring in
$\varphi$, and ${\sf sub}_x(\varphi)$ be the closure under (single)
negation of all formulas 
$\psi\{x/y\}$ with $\psi(y)\in {\sf sub}(\varphi).$
A \emph{type} for $\varphi$ is a subset $t$ 
of $\{\overline{\psi}\mid \psi\in {\sf sub}_x(\varphi)\}\cup 
\individuals{\varphi}$ 
such that:
\begin{itemize}
\item $\overline{\psi_1}\wedge \overline{\psi_2}\in t$ iff
  $\overline{\psi_1}\in t$ and $\overline{\psi_2}\in t$, for every
  $\psi_1\wedge \psi_2 \in {\sf sub}_x(\varphi)$;
\item $\neg\overline{\psi}\in t$ iff $\overline{\psi}\notin t$, for
  every $\neg \psi \in {\sf sub}_x(\varphi)$; and
\item $t$ contains at most one element of $\individuals{\varphi}$.
\end{itemize}
We omit `for $\varphi$' when there is no risk of confusion.  
We say that the types $t,t'$ \emph{agree on ${\sf sub_0}(\varphi)$} if
$t\cap {\sf sub_0}(\overline{\varphi})=t'\cap {\sf
  sub_0}(\overline{\varphi})$.
Denote with $\types{\varphi}$ the set of all types for $\varphi$.  If
$a\in t\cap\individuals{\varphi}$, then $t$ ``describes'' a named
element.  We write $t^a$ to indicate this and call it a \emph{named
  type}.  A \emph{state candidate} is a subset $\Cmf$ of
$\types{\varphi}$ with only types that agree on
${\sf sub_0}(\varphi)$, containing exactly one $t^a$ for each
$a\in \individuals{\varphi}$, and such that
$\{t\setminus{\{a\}}\mid t^a\in \Cmf\}\subseteq\Cmf$.
%
Given a classical first-order interpretation $I$, let
$t^I(d)=\{\overline{\psi}\mid \psi\in {\sf sub}_x(\varphi),I\models
\overline{\psi}[d]\}$,  with $d$ in the
domain $\Delta$, and
let
$t^{I}(a)=\{\overline{\psi}\mid \psi\in {\sf sub}_x(\varphi),I\models
\overline{\psi}[a^I]\}\cup \{a\}$, with $a\in\individuals{\varphi}$.
Clearly, each $t^I(d)$ is a type for $\varphi$ while each $t^{I}(a)$
is a named type.
%
An interpretation $I$ \emph{realizes} a state canditate $\Cmf$ if
$\Cmf=\{t^I(d)\mid d \in\Delta \}\cup \{t^{I}(a)\mid
a\in\individuals{\varphi}\}$.  Vice versa, we say that $\Cmf$ is
(\emph{finitely})
\emph{realisable} if there is a
(\emph{finite})
interpretation realizing it.

Given a state candidate $\Cmf$, we define the sentence
$$\real_\Cmf=\bigwedge_{t\in \Cmf} \exists x \bigwedge_{\psi(x)\in t} 
\psi(x)\wedge\forall x \bigvee_{t\in \Cmf}\bigwedge_{\psi(x)\in t}\psi(x)\wedge\bigwedge_{t^a\in \Cmf}\bigwedge_{\psi(x)\in t^a}\psi\{a/x\}$$
%
We have that
a state candidate $\Cmf$ is 
(finitely)
realisable iff the sentence ${\sf real}_\Cmf$
is true in some
(finite)
first-order interpretation~\cite[Lemma 11.6]{GabEtAl}.
Moreover,
for
$\QTLfr{\Until}{mo}{\monodic}$ and $\QTLfr{\Until}{2}{\monodic}$,
it is known that realisability of state candidates coincides with finite realisability,
since the monadic and the $2$-variable fragments of first-order logic enjoy the
\emph{exponential} (and thus \emph{finite})
\emph{model property}~\cite[Proposition 6.2.7, Corollary 8.1.5]{BoeEtAl97}.
Finally, the following holds (for details, see the proof in~\cite[Theorem 11.31]{GabEtAl}).
%
\begin{lemma}\label{lem:aux}
  Given a state candidate $\Cmf$ for a $\QTLfr{\Until}{mo}{\monodic}$
  or a $\QTLfr{\Until}{2}{\monodic}$ sentence $\varphi$, there is an $\NExpTime$ algorithm that checks whether
    ${\sf real}_\Cmf$ is satisfiable, and thus whether \Cmf is
    realisable.
\end{lemma}
Now, consider a language
$\Lmc \in \{\QTLfr{\Until}{mo}{\monodic}, \QTLfr{\Until}{2}{\monodic}
\}$.
A \emph{quasimodel} for an $\Lmc$
sentence $\varphi$ is a pair $(S,\Rmf)$ where $S$ is a finite sequence
$(S(0),\ldots,S(n))$
of 
realizable state candidates $S(i)$, and \Rmf is a set of functions $r$
from $\{0,\ldots,n\}$ to
    $\bigcup_{0 \leq i \leq n} S(i)$,
    called \emph{runs}, mapping
each $i\in\{0,\ldots,n\}$ to a type in $S(i)$, and satisfying the
following conditions:
\begin{enumerate}
\item for every $\psi_1\Umc\psi_2\in {\sf sub}_x(\varphi)$ and every
  $i\in [0,n]$, we have $\overline{\psi_1\Umc\psi_2} \in r(i)$ iff
  there is $j \in (i,n]$ such that $\overline{\psi_2}\in r(j)$ and
  $\overline{\psi_1}\in r(l)$ for all $l\in (i,j)$;
\item for every $a\in \individuals{\varphi}$, every $r\in\Rmf$ and
  every $i,j\in [0,n]$, we have $a\in r(i)$ iff $a\in r(j)$;
\item $\overline{\varphi}\in t$ for
  some $t\in S(0)$; 
  and
  %
\item for every $i \in [0,n]$ and every $t\in S(i)$ there is a run 
$r\in \Rmf$ such that $r(i)=t$. 
\end{enumerate}


Every quasimodel for~$\varphi$ describes an interpretation
satisfying~$\varphi$ and, conversely, every such interpretation can be
abstracted into a quasimodel for~$\varphi$. We formalise this
well-known fact in the next lemma, that follows from an adaptation
of~\cite[Lemma 11.22]{GabEtAl} to the case of $k$-bounded traces.
\begin{restatable}{lemma}{Lemmaquasimodel}\label{lem:qm}
  Let $\p$ be an
  $\Lmc \in \{ \QTLfr{\Until}{mo}{\monodic},
  \QTLfr{\Until}{2}{\monodic} \}$
  sentence.
  There is a
  $k$-bounded trace satisfying $\varphi$
  iff there is a quasimodel for $\varphi$ with a
  sequence of quasistates of length at most $k$.
\end{restatable}
Let us now turn to our main result for $\Lmc$ formulas interpreted on
$k$-bounded traces. Since
the length of a trace
is bounded from the input, the complexity differs from the
satisfiability checking procedures on infinite or finite traces, which
are instead in $\ExpSpace$.  In particular, in the following we devise a
non-deterministic exponential time algorithm to check satisfiability
on $k$-bounded traces of an $\Lmc$ sentence $\varphi$.
%
It follows from the definition of types that the number of distinct
types for $\varphi$ is exponential in $|\varphi|$.
Thus, we first guess:
a sequence $(S(0),\ldots,S(n))$ of sets of types for $\varphi$ of
length
$n + 1 \leq k$;
and, for each type at position $i$ in this sequence, a sequence of
types of length
$n + 1$.
Denote by $\Rmf$ the set of such sequences of types.  Then, we check:
$(a)$ whether each $S(i)$ 
is a realisable state candidate, which, by Lemma~\ref{lem:aux}, can be
done in $\NExpTime$ in the size of $\varphi$; $(b)$ whether each
sequence in \Rmf satisfies conditions~(1)
and~(2); 
and $(c)$ whether $\varphi$ is in a type in $S(0)$, i.e., whether
condition~(3) holds. Condition~(4) is satisfied by definition of \Rmf.
All these conditions can be checked in non-deterministic exponential
time with respect to $|\varphi|$ and the binary size of $k$,
$|k|$. The algorithm returns `satisfiable' iff all conditions are
satisfied, thus implying
that $(S,\Rmf)$ is a quasimodel for $\varphi$.
By Lemma~\ref{lem:qm}, given an
$\Lmc \in \{ \QTLfr{\Until}{mo}{\monodic}, \QTLfr{\Until}{2}{\monodic}
\}$ formula $\varphi$, there is a finite trace satisfying $\p$ with at
most $k$ time points iff there is a quasimodel for $\p$ with at most
$k$ quasistates.  Thus, we showed Proposition~\ref{th:fixe}
illustrating a \NExpTime\ algorithm for checking the satisfiability of
formulas in
$\{\QTLfr{\Until}{mo}{\monodic}, \QTLfr{\Until}{2}{\monodic} \}$.
\qedhere

\end{proof}

We
can now state the main result of this section.
Thanks
to the lower and upper bounds shown in
Propositions~\ref{thm:nexptimelowerbound} and~\ref{th:fixe}, since
$\QTLfr{\Until}{1,mo}{\not c}$ is contained both in
$\QTLfr{\Until}{mo}{\monodic}$ and $\QTLfr{\Until}{1}{}$, and since
$\QTLfr{\Until}{1}{}$ is contained in $\QTLfr{\Until}{2}{\monodic}$,
we obtain the following complexity result.%
\begin{restatable}{theorem}{TheorTQLComplete}
\label{th:tqlkboundnexptimecomplete}
$\QTLfr{\Until}{1,mo}{\not c}$, and $\QTLfr{\Until}{1}{}$,
$\QTLfr{\Until}{mo}{\monodic}$ and $\QTLfr{\Until}{2}{\monodic}$
formula satisfiability on $k$-bounded traces are $\NExpTime$-complete
problems.
\end{restatable}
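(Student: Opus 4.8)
The plan is to obtain the result as an immediate corollary of the two bounds already established, namely the $\NExpTime$-hardness of $\QTLfr{\Until}{1,mo}{\not c}$ on $k$-bounded traces (Proposition~\ref{thm:nexptimelowerbound}) and the $\NExpTime$-membership of $\QTLfr{\Until}{mo}{\monodic}$ and $\QTLfr{\Until}{2}{\monodic}$ on $k$-bounded traces (Proposition~\ref{th:fixe}). The only additional ingredient needed is a chain of purely syntactic inclusions between these fragments, which lets hardness propagate upward to larger fragments and membership propagate downward to smaller ones, pinning all four languages at $\NExpTime$-complete.

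First I would record the relevant containments. Since the constant-free one-variable monadic fragment uses a single variable and only unary predicates without constants, it is trivially a sublanguage of $\QTLfr{\Until}{1}{}$ (drop the monadicity and constant-freeness restrictions) and also of $\QTLfr{\Until}{mo}{\monodic}$ (a single-variable formula is automatically monodic, since any subformula $\psi \Until \chi$ can have at most one free variable). I would then observe the key point that $\QTLfr{\Until}{1}{} \subseteq \QTLfr{\Until}{2}{\monodic}$: every one-variable formula uses at most two variables, and again monodicity is automatic because no subformula of a one-variable formula can have more than one free variable. This gives the containment chain $\QTLfr{\Until}{1,mo}{\not c} \subseteq \QTLfr{\Until}{1}{} \subseteq \QTLfr{\Until}{2}{\monodic}$, together with $\QTLfr{\Until}{1,mo}{\not c} \subseteq \QTLfr{\Until}{mo}{\monodic}$.

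With the inclusions in place, the conclusion is a two-line sandwich. For hardness, since $\QTLfr{\Until}{1,mo}{\not c}$ is $\NExpTime$-hard and is contained in each of $\QTLfr{\Until}{1}{}$, $\QTLfr{\Until}{mo}{\monodic}$, and $\QTLfr{\Until}{2}{\monodic}$, each of these fragments inherits $\NExpTime$-hardness (a hard instance for the subfragment is already an instance of the larger one). For membership, $\QTLfr{\Until}{mo}{\monodic}$ and $\QTLfr{\Until}{2}{\monodic}$ are in $\NExpTime$ by Proposition~\ref{th:fixe}, and since both $\QTLfr{\Until}{1,mo}{\not c}$ and $\QTLfr{\Until}{1}{}$ are contained in $\QTLfr{\Until}{2}{\monodic}$ (and the former also in $\QTLfr{\Until}{mo}{\monodic}$), their $k$-bounded satisfiability problems reduce to instances of a problem already in $\NExpTime$, hence they too lie in $\NExpTime$. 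Combining the matching lower and upper bounds yields $\NExpTime$-completeness for all four fragments.

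There is no genuine technical obstacle here; the statement is a bookkeeping consequence of the preceding propositions. The only point requiring a moment's care is the verification that the one-variable fragments are \emph{automatically} monodic, so that membership obtained for the explicitly monodic fragments actually transfers down to $\QTLfr{\Until}{1}{}$ and $\QTLfr{\Until}{1,mo}{\not c}$; I would state this explicitly rather than leave it implicit, since it is what closes the loop between the lower-bound fragment and the upper-bound fragments.
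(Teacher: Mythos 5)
Your proposal is correct and matches the paper's own argument exactly: the paper derives the theorem as an immediate consequence of the $\NExpTime$-hardness of $\QTLfr{\Until}{1,mo}{\not c}$ (Proposition~\ref{thm:nexptimelowerbound}) and the $\NExpTime$ membership of $\QTLfr{\Until}{mo}{\monodic}$ and $\QTLfr{\Until}{2}{\monodic}$ (Proposition~\ref{th:fixe}), using the same syntactic inclusions $\QTLfr{\Until}{1,mo}{\not c} \subseteq \QTLfr{\Until}{mo}{\monodic}$, $\QTLfr{\Until}{1,mo}{\not c} \subseteq \QTLfr{\Until}{1}{} \subseteq \QTLfr{\Until}{2}{\monodic}$. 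Your explicit remark that one-variable formulas are automatically monodic is a worthwhile clarification that the paper leaves implicit, but it does not change the argument.
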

%


\subsection{Bounded Trace and Domain Properties}
\label{sec:boundprop}

In this section, we prove that
$\Lmc \in \{ \QTLfr{\Until}{mo}{\monodic}, \QTLfr{\Until}{2}{\monodic}
\}$ on finite traces enjoys two kinds of bounded model properties, 
one which bounds the domain of elements and one which bounds
the number of time points in a trace.

First, we show that an $\Lmc$ formula has the \emph{bounded trace
  property}: if it is satisfiable on finite traces, then there is a
$k$-bounded trace satisfying it,
where $k$ is
at most 
double exponential in $|\p|$.
%
\begin{theorem}\label{thm:boundtrace}
  Satisfiability of an
  $\Lmc \in \{ \QTLfr{\Until}{mo}{\monodic},
  \QTLfr{\Until}{2}{\monodic} \}$ formula $\varphi$ on finite traces
  implies
  satisfiability of $\varphi$ on $k$-bounded traces,
  with $k \leq 2^{2^{|\p|}}$.
\end{theorem}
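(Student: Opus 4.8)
The plan is to argue entirely at the level of quasimodels, invoking Lemma~\ref{lem:qm}, and to show that any finite quasimodel for $\varphi$ can be shortened to one whose sequence of quasistates has length at most $2^{2^{|\varphi|}}$. First I would recall that, by Lemma~\ref{lem:qm}, if $\varphi$ is satisfiable on finite traces then there is a quasimodel $(S,\Rmf)$ for $\varphi$ with $S=(S(0),\dots,S(n))$ (taking $k$ to be the length of the witnessing trace). Counting the combinatorial objects involved, the set ${\sf sub}_x(\varphi)$ has size linear in $|\varphi|$, so the number of types in $\types{\varphi}$ is at most single exponential in $|\varphi|$; consequently the number of distinct state candidates, which are subsets of $\types{\varphi}$ subject to the agreement and naming constraints, is at most $2^{2^{|\varphi|}}$. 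Hence, once $n+1>2^{2^{|\varphi|}}$, the pigeonhole principle yields positions $i<j$ with $S(i)=S(j)$.

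The core of the argument is then an excision (pumping) step: I would remove the block of quasistates at positions $i+1,\dots,j$ and reindex, obtaining the shorter sequence $S'$ with $S'(p)=S(p)$ for $p\le i$ and $S'(i+s)=S(j+s)$ for $1\le s\le n-j$. The delicate point is the reconstruction of the runs so that all four quasimodel conditions survive. I would build the new run set $\Rmf'$ by \emph{gluing}: given any old run $r_1$, let $t=r_1(i)$; since $t\in S(i)=S(j)$, condition~(4) supplies an old run $r_2$ with $r_2(j)=t$, and I glue $r_1$ on $[0,i]$ to the shift of $r_2$ on $[j,n]$, the two agreeing at the cut because $r_1(i)=t=r_2(j)$. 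Conditions~(2) and~(3) are then immediate: constants are fixed along each glued run since their membership is determined by the common type $t$, and $S(0)$ is retained. Condition~(4) follows because every type occurring in a retained quasistate lies on some retained old run, which can be glued on the missing side.

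The main obstacle, and the heart of the proof, is condition~(1), i.e.\ preservation of $\Until$-eventualities across the cut. Here the key observation is that condition~(1) makes ``$\overline{\psi_1\U\psi_2}\in r(p)$'' equivalent to its being fulfilled strictly later along $r$; in particular, whether such a formula is \emph{pending} at a position is already encoded in the type at that position. Reconnecting at the common type $t$ therefore transfers exactly the right obligations: any eventuality still pending at position $i$ belongs to $t=r_2(j)$ and is discharged along the $r_2$-suffix by condition~(1) for $r_2$, while any eventuality requested at an earlier position $p\le i$ is either already fulfilled within $[p,i]$, hence retained, or still lies in $t$ and is again discharged by the $r_2$-suffix. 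A short case analysis, checking both directions of the biconditional in condition~(1) and using that the $\overline{\psi_1}$-segment is maintained across the glue, shows that the glued runs satisfy~(1). Thus $(S',\Rmf')$ is again a quasimodel for $\varphi$, strictly shorter than $(S,\Rmf)$.

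Iterating the excision until no repeated quasistate remains produces a quasimodel whose sequence of quasistates has length at most $2^{2^{|\varphi|}}$; by the converse direction of Lemma~\ref{lem:qm} this yields a $k$-bounded trace satisfying $\varphi$ with $k\le 2^{2^{|\varphi|}}$, as required. I expect the only genuinely fiddly part to be the bookkeeping in the case analysis for condition~(1) — in particular, ruling out that a formerly unfulfilled eventuality becomes spuriously fulfilled after reconnection — but this is controlled precisely by the fact that being pending coincides with type-membership at the cut.
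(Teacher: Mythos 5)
Your proposal is correct and follows essentially the same route as the paper: obtain a quasimodel from finite-trace satisfiability (Lemma~\ref{lem:qm}), bound the number of distinct quasistates by $2^{|\types{\varphi}|} \leq 2^{2^{|\varphi|}}$, find a repetition by pigeonhole, excise the block between repeated quasistates while re-gluing runs at the common types, and iterate until all quasistates are distinct (the paper's Lemma~\ref{lem:shortqm} via Claim~\ref{cla:shortqm}). The only difference is presentational: the paper imports the excision step as a rephrasing of a lemma of Gabbay et al., whereas you prove the gluing directly, including the case analysis showing that pending $\Until$-eventualities are exactly those recorded in the type at the cut and are therefore discharged along the reattached suffix.
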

\begin{proof}
  In order to prove the statement, we require some preliminary lemmas.
  First, we recall the following result~\cite[Lemma 11.22]{GabEtAl},
  applied to the case of finite traces.
\begin{lemma}
\label{lem:finqm}
An $\Lmc$ sentence $\p$ is satisfiable on finite traces iff there is a
quasimodel for $\p$.
\end{lemma}
Moreover, we adapt~\cite[Lemma 11.28]{GabEtAl} to the case of finite
traces, formalising it as follows.
\begin{lemma}
\label{lem:shortqm}
For every quasimodel $(S, \Rmf)$ for $\p$, there is a quasimodel
$(S', \Rmf')$ for $\p$ such that $|S'| \leq 2^{|\types{\varphi}|}$.
\end{lemma}
\begin{proof}
We first introduce the following notation.
Given a sequence $\Sigma = (\sigma_0, \sigma_1, \sigma_2, \ldots)$ and
$n \in \mathbb{N}$, we denote by $\Sigma_{n}$ and $\Sigma^{n}$ the
prefix ending at $n$ and the suffix starting at $n$ of $\Sigma$,
respectively.  Moreover, given sequences $\Sigma, \Sigma'$,
we denote by $\Sigma \cdot \Sigma'$ the concatenation of $\Sigma$ with
$\Sigma'$.  Then, we require the following claim, obtained by
rephrasing~\cite[Lemma 11.27]{GabEtAl} to our terminology.
\begin{claim}
\label{cla:shortqm}
Given a quasimodel $(S, \Rmf)$ for $\p$ such that $S(n) = S(m)$, for
some $n < m$, we have that
$(S_{n} \cdot S^{m + 1}, \Rmf_{n} \cdot \Rmf^{m + 1})$, with
$\Rmf_{n} \cdot \Rmf^{m + 1} = \{ r_{n} \cdot r'^{m + 1} \mid r, r'
\in \Rmf, r(n) = r'(m) \}$, is a quasimodel for $\p$.
\end{claim}
%
%
Now, let $(S, \Rmf)$ be a quasimodel for $\p$, with $S = (S(0), \ldots, S(l) )$.
We
notice that the number of different quasistates is
$\leq 2^{|\types{\varphi}|}$.
If, for every $n, m \in [0, l]$, we have $S(n) \neq S(m)$ (meaning
that all quasistates in $S$ are distinct), then
$|S| \leq 2^{|\types{\varphi}|}$, and thus $(S, \Rmf)$ is as required.
Otherwise, suppose that there are $n, m \in [0, l]$ such that
$S(n) \neq S(m)$.  Without loss of generality, we can assume that
$n < m$.
Since $S$ is finite, by (repeatedly) applying Claim~\ref{cla:shortqm}
to $(S, \Rmf)$, we obtain a quasimodel $(S', \Rmf')$ such that $S'$
contains only distinct quasistates, hence
$| S' | \leq 2^{|\types{\varphi}|}$.
\end{proof}
The theorem now follows from
Lemmas~\ref{lem:finqm},~\ref{lem:shortqm}, and~\ref{lem:qm}, since
$|\types{\varphi}| \leq 2^{|\p|}$.
\end{proof}

%

We now establish that an $\Lmc$ formula interpreted on
$k$-bounded traces
has the \emph{bounded domain property}, i.e., if it is satisfiable on
$k$-bounded traces,
then
it is satisfied on a trace having
finite
domain
with bounded cardinality.

 \begin{restatable}{theorem}{TheorTALCkBounded}\label{thm:bounddomain}
Satisfiability of an $\Lmc \in  \{ \QTLfr{\Until}{mo}{\monodic}, \QTLfr{\Until}{2}{\monodic} \}$
formula
$\varphi$
on
$k$-bounded traces
implies satisfiability
of $\p$
on traces having finite domain
of cardinality
at most
$|\types{\p}|^{k} \cdot 2^{|\types{\p}|}$.
\end{restatable}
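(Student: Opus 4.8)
The plan is to combine the quasimodel characterisation of Lemma~\ref{lem:qm} with the finite (exponential) model property of the underlying first-order fragments. First I would invoke Lemma~\ref{lem:qm}: since $\varphi$ is satisfiable on a $k$-bounded trace, there is a quasimodel $(S, \Rmf)$ for $\varphi$ whose sequence of quasistates $S = (S(0), \ldots, S(n))$ has length $n + 1 \leq k$. Two finiteness observations drive the bound. On the one hand, each run $r \in \Rmf$ assigns to every $i \in [0,n]$ a type $r(i) \in S(i)$, so a run is determined by its sequence of types; hence $|\Rmf| \leq \prod_{i=0}^{n} |S(i)| \leq |\types{\varphi}|^{\,n+1} \leq |\types{\varphi}|^{\,k}$. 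On the other hand, each state candidate $S(i)$ is realisable, and since the monadic and two-variable fragments enjoy the exponential model property, $\real_{S(i)}$ has a \emph{finite} model $\Mmc_{i}$ with domain $D_{i}$ of size $|D_{i}| \leq 2^{|\types{\varphi}|}$, in which every type of $S(i)$ is realised and all predicates (in particular the binary ones, for $\QTLfr{\Until}{2}{\monodic}$) are fixed.

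Next I would assemble these per-instant finite realisations into a single constant-domain trace. Take as domain $\Delta' = \Rmf \times \{1, \ldots, 2^{|\types{\varphi}|}\}$, so that $|\Delta'| \leq |\types{\varphi}|^{k} \cdot 2^{|\types{\varphi}|}$, and intend the element $(r, j)$ to realise the type $r(i)$ at instant $i$. For each $i$ and each type $t \in S(i)$, the set $\{(r,j) \in \Delta' : r(i) = t\}$ has at least $2^{|\types{\varphi}|}$ elements (by condition (4) of quasimodels there is a run through $t$ at $i$), hence at least $|\{d \in D_{i} : t^{\Mmc_{i}}(d) = t\}|$ elements; so there is a surjection $h_{i} \colon \Delta' \to D_{i}$ sending every $(r,j)$ to an element of $D_{i}$ of type $r(i)$. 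I would then define the interpretation $\Fmc'_{i}$ by pulling back $\Mmc_{i}$ along $h_{i}$, i.e.\ $\bar z \in P^{\Fmc'_{i}}$ iff $h_{i}(\bar z) \in P^{\Mmc_{i}}$ for every predicate $P$, and fix the rigid interpretation of each individual $a \in \individuals{\varphi}$ to be a copy of the $D_{i}$-witness of its named type (consistently in $i$, using condition (2)).

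The crux, and the step I expect to be the main obstacle, is to show that every element realises its intended type, namely $t^{\Fmc'_{i}}(r,j) = r(i)$, and consequently that $\Fmc'_{i}$ realises $S(i)$ (so $\Fmc'_{i} \models \real_{S(i)}$). Since the language is equality-free and $\Fmc'_{i}$ arises from $\Mmc_{i}$ by duplicating each domain element into several copies with all relations lifted along $h_{i}$, the map $h_{i}$ preserves and reflects the relations; a routine Ehrenfeucht--Fra{\"\i}ss\'{e} argument then shows that $(r,j)$ and $h_{i}(r,j)$ satisfy the same non-temporal (surrogate-replaced) formulas $\overline{\psi}$ with $\psi \in {\sf sub}_{x}(\varphi)$, which is exactly the required type equality. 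Once this is established, each $\Fmc'_{i}$ realises $S(i)$ and, because the type-trajectory $i \mapsto r(i)$ of every element is a run --- so that condition (1) on $\Until$-subformulas matches the actual semantics of $\Until$ --- the standard quasimodel-to-model argument underlying Lemma~\ref{lem:qm} yields that the trace $\Fmf' = (\Delta', (\Fmc'_{i})_{i \in [0,n]})$ satisfies $\varphi$. As $|\Delta'| \leq |\types{\varphi}|^{k} \cdot 2^{|\types{\varphi}|}$, this proves the claim.
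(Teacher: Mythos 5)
Your proposal is correct and takes essentially the same route as the paper: both pass from Lemma~\ref{lem:qm} to a quasimodel with at most $k$ quasistates, bound $|\Rmf|$ by $|\types{\p}|^{k}$ since runs are functions from at most $k$ instants to types, use the exponential model property of the monadic and two-variable fragments to realise each quasistate over a domain of size at most $2^{|\types{\p}|}$, and assemble the final trace over a domain of runs paired with exponentially many copies. The only difference is presentational: where the paper routes through finitary quasimodels (Lemma~\ref{lem:finkboundqm}) and defers the assembly step to an adjustment of~\cite[Lemma 11.41]{GabEtAl}, you carry that construction out explicitly via the surjections $h_{i}$ and the equality-free pullback argument, which is precisely the content of the cited construction.
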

\begin{proof}
We require the following preliminary definitions and result. A quasimodel $(S, \Rmf)$ for $\p$, with $S = (S(0), \ldots, S(l))$, is said to be \emph{finitary} if $S(i)$ is finitely realisable, for every $i \in [0, l]$, and $\Rmf$ is finite.
The next lemma is an adaptation of~\cite[Lemma 11.41]{GabEtAl} to the case of $k$-bounded traces.
\begin{lemma}
\label{lem:finkboundqm}
An $\Lmc$ sentence $\p$ is satisfiable on $k$-bounded traces having
finite domain iff there is a finitary quasimodel for $\p$ with a
sequence of quasistates of length at most $k$.
\end{lemma}

Now, suppose that $\p$ is satisfied on a $k$-bounded trace.  By
Lemma~\ref{lem:qm}, there is a quasimodel $(S, \Rmf)$ for $\p$, with
$S = (S(0), \ldots, S(l))$ and $l < k$.  It is known that, for
$\Lmc \in \{ \QTLfr{\Until}{mo}{\monodic}, \QTLfr{\Until}{2}{\monodic}
\}$ formulas, a state candidate is realisable iff it is finitely
realisable,
since monadic and $2$-variable first-order formulas enjoy the
exponential (hence, finite)
model property~\cite[Proposition 6.2.7, Corollary 8.1.5]{BoeEtAl97}.
Thus, we have that every $S(i)$, for $i \in [0, l]$, is finitely realisable.
Moreover, because $S$ is finite, we have that $\Rmf$, which is a set of functions from $\{ 0, \ldots, l \}$ to $\bigcup_{0 \leq i \leq l} S(i)$, is finite as well.
Therefore, $(S, \Rmf)$ is finitary and, by Lemma~\ref{lem:finkboundqm}, $\p$ is satisfiable on 
$k$-bounded traces
having finite domain.
Finally,
having recalled
that monadic and $2$-variable first-order formulas enjoy the exponential model property, we can assume without loss of generality that a first-order interpretation realising a quasistate in $S$ has domain of cardinality at most
$2^{|\types{\p}|}$.
Thus, one can adjust the construction in~\cite[Lemma 11.41]{GabEtAl} to obtain, from a finitary quasimodel $(S, \Rmf)$ for $\p$, with $|S| \leq k$, a $k$-bounded trace that satisfies $\p$ with domain
$\Delta = \{ (r, i) \mid r \in \Rmf, 0 \leq i < m_{\p} \}$,
for some $m_{\p} \leq 2^{|\types{\p}|}$.
Since $|\Rmf| \leq |\types{\p}|^{k}$, we have that $|\Delta| \leq |\types{\p}|^{k} \cdot 2^{|\types{\p}|}$.
\end{proof}

\noindent
 Since $\QTLfr{\Until}{2}{\monodic}$ satisfiability on finite traces implies satisfiability on $k$-bounded traces, for some $k>0$, 
  the $\QTLfr{\Until}{2}{\monodic}$  formula
%
\begin{equation*}
\label{eq:inf}
\vartheta_{i} = 
P(a) \land
\Box^{+} \forall x \Big(P(x){\rightarrow}  
\Next 
\Box^{+} \big(\neg P(x) \wedge \exists y ( R(x,y)\wedge P(y) ) \big) \Big),
\end{equation*}
which only admits models with an infinite domain~\cite{LutEtAl},
is unsatisfiable over finite traces.
%


\subsection{Temporal Description Logics}
\label{sec:TALC}

We conclude this section investigating the complexity of the
satisfiability problem in temporal DLs.
We consider the temporal language \TALC~\cite{GabEtAl} as a temporal
extension of the DL $\ALC$~\cite{handbookDL}.
Let $\NC, \NR \subseteq \textsf{N}_{\textsf{P}}$ be, respectively,
countably infinite and disjoint
sets of unary and binary predicates called \emph{concept} and
\emph{role names}.
%
%
A \emph{\TALC concept} is an expression of the form:
\[
  C, D ::= A \mid \lnot C \mid C \sqcap D \mid \exists R.C \mid C \U
  D,
\]
where $A \in \NC$ and $R \in \NR$.  
%
%
A \emph{\TALC axiom} is either a \emph{concept inclusion (CI)} of the
form $C\sqsubseteq D$, or an \emph{assertion}, $\alpha$, of the form
$A(a)$ or $R(a,b)$, where $C,D$ are \TALC concepts, $A\in\NC$,
$R\in\NR$, and $a,b\in\NI$.
%
\emph{\TALC formulas} have the form:
\[
  \p, \psi ::= \alpha \mid C\sqsubseteq D \mid \neg \p \mid \p \land
  \psi \mid \p \U \psi.
\]
%

The semantics of $\TALC$
is given again (with a small abuse of notation) over \emph{finite traces}
$\Fmf= (\Delta, (\Fmc_n)_{n \in [0, l]})$,
where
$l\in \mathbb{N}$, $\Delta$ is a non-empty domain, and,
for every $n\in [0,l]$,
${\Fmc_n}$
is an $\ALC$ interpretation with domain $\Delta$,
mapping
each concept name $A\in \NC$ to a subset $A^{\Fmc_n}$ of $\Delta$, each
role name $R\in\NR$ to a binary relation $R^{\Fmc_n}$ on $\Delta$, and
each individual name $a\in\NI$ to a domain element $a^{\Fmc_n}$ in
such a way that $a^{\Fmc_i} = a^{\Fmc_j}$, for all $i,j\in[0,l]$ (thus,
we just use the notation $a^\Fmc$).
  The interpretation is extended to
concepts as usual:
\begin{align*}
  (\lnot C)^{\Fmc_n} &= \Delta\setminus C^{\Fmc_n},\\
  (C\sqcap D)^{\Fmc_n} &= C^{\Fmc_n} \cap D^{\Fmc_n},\\
  (\exists R.C) ^{\Fmc_n} &= 
  							\{d\in\Delta\mid  \text{there is } e\in C ^{\Fmc_n}
                            \text{ such that } (d,e)\in R^{\Fmc_n}\},\\
  (C \U D)^{\Fmc_n} &= \{d\in\Delta\mid
    					\text{there is } m\in (n,l] \text{
                      such that } d\in D^{\Fmc_n} \text{ and } 
                      d\in C ^{\Fmc_i}, \text{ for all } i \in (n,m)\}.
\end{align*}
Given a \TALC formula $\p$, the \emph{satisfaction} of $\p$ in $\Fmf$ at time
point $n\in [0,l]$, written $\Fmf, n\models \p$, is
inductively defined as:
\[
\begin{array}{lcl}
  \Fmf, n\models C \sqsubseteq D & \text{ iff } & C^{\Fmc_n} \subseteq D^{\Fmc_n},\\
  \Fmf, n\models A(a)                    & \text{ iff } & a^{\Fmc}\in A^{\Fmc_n},  \\
  \Fmf, n\models R(a,b)                 & \text{ iff } & (a^{\Fmc},b^{\Fmc})\in R^{\Fmc_n},\\
  \Fmf, n\models \lnot \p              & \text{ iff } & \Fmf, n\not\models \p,\\
  \Fmf, n\models \p \land \psi      & \text{ iff } & \Fmf, n\models \p
                                                       \text{ and } \Fmf, n\models \psi,\\
  \Fmf, n\models \p \U \psi          & \text{ iff } &
    													\text{there is }  m \in (n,l]
                                                       \text{ such that } \Fmf, m\models\psi \text{ and } 
                                                       \Fmf,i\models\p, \text{ for all } i\in (n,m).
\end{array}
\]
We say that a \TALC formula $\p$ is \emph{satisfiable on finite traces}
if there exists a finite trace
$\Fmf$ such that
$\Fmf,0\models\p$.
If $\p$ is satisfiable on finite traces with at most $k$ instants, with $k$ given in binary, we say that $\p$ is \emph{satisfiable on $k$-bounded traces}.

Since a $\TALC$ formula can be mapped into an equisatisfiable
$\QTLfr{\Until}{2}{\monodic}$ formula~\cite{GabEtAl} we can transfer
the upper bounds of Propositions~\ref{thm:upper} and~\ref{th:fixe} to
$\TALCf$ on finite and $k$-bounded traces, respectively.  The lower
bounds can be obtained
from Propositions~\ref{thm:ltlfxs5} and~\ref{thm:nexptimelowerbound},
since $\QTLfr{\Until}{1,mo}{\not c}$ can be seen as a fragment of
$\TALC$ without role names~\cite{GabEtAl}.  Thus, the following holds.
\begin{restatable}{theorem}{TheorTALCkgComplete}\label{th:talc-fixed}
  $\TALCf$ satisfiability is \ExpSpace-complete on finite traces, and
  \NExpTime-complete on $k$-bounded traces.
\end{restatable}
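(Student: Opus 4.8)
The plan is to establish the two completeness results by combining reductions already available in the paper: the standard translation of $\TALC$ into the two-variable monodic fragment $\QTLfr{\Until}{2}{\monodic}$ for the upper bounds, and the embedding of the constant-free one-variable monadic fragment $\QTLfr{\Until}{1,mo}{\not c}$ into $\TALC$ for the lower bounds. Since each of the two claims (finite traces, $k$-bounded traces) splits into a matching upper and lower bound, four transfers are needed, each reusing one of Propositions~\ref{thm:upper},~\ref{th:fixe},~\ref{thm:ltlfxs5},~\ref{thm:nexptimelowerbound}.

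For the upper bounds, I would recall the usual concept-to-formula translation $(\cdot)^{*}$ sending a \TALC concept $C$ to a $\QTLfr{\Until}{2}{\monodic}$ formula $C^{*}(x)$ with a single free variable, with the key clauses $A^{*}=A(x)$, $(\exists R.C)^{*}=\exists y\,(R(x,y)\wedge C^{*}\{y/x\})$, and $(C\U D)^{*}=C^{*}\U D^{*}$, extended to axioms by $(C\sqsubseteq D)^{*}=\forall x\,(C^{*}(x)\to D^{*}(x))$ and to formulas homomorphically. The resulting sentence $\p^{*}$ is monodic (every until is applied to a formula with at most the free variable $x$), uses at most two variables, and has size polynomial in $|\p|$. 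The crucial observation is that this translation is defined point-wise over the temporal dimension and leaves the underlying trace unchanged: a finite trace $\Fmf=(\Delta,(\Fmc_n)_{n\in[0,l]})$, read as a first-order temporal interpretation, satisfies $C^{*}(x)$ at $n$ under $[x\mapsto d]$ exactly when $d\in C^{\Fmc_n}$, and hence $\Fmf,0\models\p$ iff $\Fmf\models\p^{*}$ over the \emph{same} $l+1$ time points. Thus finite-trace (respectively $k$-bounded) satisfiability of $\p$ coincides with that of $\p^{*}$, and Propositions~\ref{thm:upper} and~\ref{th:fixe} yield the $\ExpSpace$ and $\NExpTime$ upper bounds, respectively.

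For the lower bounds, I would use that $\QTLfr{\Until}{1,mo}{\not c}$ is a notational fragment of $\TALC$ restricted to no role names: a unary predicate $P$ is a concept name, the single free variable is the implicit object described by a concept, a one-variable subformula $\psi(x)$ maps to a concept $C_{\psi}$ (with $\U$ translating to the concept-level $\U$), and $\forall x\,\psi(x)$ maps to the axiom $\top\sqsubseteq C_{\psi}$. As the formulas $\p_{n,\mathbb{T}}(x)$ of Propositions~\ref{thm:ltlfxs5} and~\ref{thm:nexptimelowerbound} have a free variable, I would express their satisfiability by the \TALC formula $\lnot(C_{\p_{n,\mathbb{T}}}\sqsubseteq\bot)$, which holds at time $0$ of a trace iff $C_{\p_{n,\mathbb{T}}}$ is nonempty there, i.e.\ iff $\p_{n,\mathbb{T}}(x)$ is satisfied, where $\bot$ abbreviates $A\sqcap\lnot A$. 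This reduction again preserves the trace and its length, so $\ExpSpace$-hardness on finite traces (Proposition~\ref{thm:ltlfxs5}) and $\NExpTime$-hardness on $k$-bounded traces (Proposition~\ref{thm:nexptimelowerbound}) transfer verbatim to $\TALCf$.

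The only point requiring care — rather than a genuine obstacle — is verifying that the two translations, ordinarily stated for the natural-numbers semantics, behave identically under the finite and $k$-bounded semantics and, in particular, neither change the number of time points nor introduce auxiliary instants. This holds because both the $\TALC$ and the $\QTLfr{\Until}{2}{\monodic}$ finite-trace clauses for $\U$ quantify over the same bounded window $m\in(n,l]$, so the identity map on traces witnesses equisatisfiability in both directions, length included. Assembling the four transfers then gives $\ExpSpace$-completeness on finite traces and $\NExpTime$-completeness on $k$-bounded traces.
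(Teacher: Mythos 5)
Your proposal takes essentially the same route as the paper: the paper's proof consists precisely of transferring the upper bounds via the (cited) equisatisfiable translation of $\TALCf$ formulas into $\QTLfr{\Until}{2}{\monodic}$ together with Propositions~\ref{thm:upper} and~\ref{th:fixe}, and the lower bounds by viewing $\QTLfr{\Until}{1,mo}{\not c}$ as a role-free fragment of $\TALC$, so that Propositions~\ref{thm:ltlfxs5} and~\ref{thm:nexptimelowerbound} apply. Your write-up merely makes explicit the translation clauses, the preservation of trace length, and the handling of the free variable via $\lnot(C_{\p_{n,\mathbb{T}}}\sqsubseteq\bot)$, details the paper delegates to the cited literature.
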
 
Moreover, from Theorems~\ref{thm:boundtrace}
and~\ref{thm:bounddomain}, we obtain immediately that $\TALCf$ on
finite traces has both the bounded trace and domain properties.

\noindent

We also consider the satisfiability problem on \emph{$k$-bounded}
traces of $\TALCfk$ \emph{restricted to global CIs}~\cite{LutEtAl,ArtEtAl02},
defined as the fragment of \TALCfk in which formulas can only be of
the form $\B^{+}(\Tmc) \wedge \psi$,
where $\Tmc$ is a conjunction of CIs and $\psi$ does not contain CIs.
The $\ExpTime$ upper bound we provide has a rather challenging proof
that uses a form of \emph{type
  elimination}~\cite{GabEtAl,LutEtAl,DBLP:conf/ecai/Gutierrez-Basulto16},
but in a setting where the number of time points is bounded by a
natural number $k > 0$. 
\footnote{The main challenge in solving this problem when the number of time
points is arbitrarily large, but finite, is
mostly
due to the presence
of \emph{last} sub-formulas (i.e., formulas of the form
$\Box \bot$) that can hold just in the last instant of the model.}
The
complexity is tight since satisfiability in $\ALC$ is already
$\ExpTime$-hard~\cite{handbookDL}.
%
%
%
%
%

To show the following theorem, we rely again on
quasimodels~\cite{GabEtAl}, which have been used to prove the
satisfiability of various temporal DLs.
Our definitions here are similar to those in
Section~\ref{sec:compkbound},
now adapted to temporal $\ALC$.
\begin{restatable}{theorem}{TheorTALCkgComplete}\label{th:fixed}
  \TALCfk satisfiability on $k$-bounded traces restricted to global
  CIs is \ExpTime-complete.
\end{restatable}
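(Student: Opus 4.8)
The plan is to establish the two matching bounds. \ExpTime-hardness is immediate: on a single-instant ($1$-bounded) trace every until-concept $C \U D$ is false and $\B^{+}$ collapses to ``holds at the only instant'', so $\TALCfk$ satisfiability on $k$-bounded traces subsumes satisfiability of Boolean combinations of $\ALC$ concept inclusions and assertions, which is already \ExpTime-hard~\cite{handbookDL}. The real work is the \ExpTime\ upper bound, for which I would refine the quasimodel characterisation of Lemma~\ref{lem:qm} so as to avoid the non-deterministic guessing of an (exponentially long) sequence of state candidates that yields only \NExpTime, replacing it by a deterministic \emph{type elimination} over the exponentially many types of $\varphi$.

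First I would fix the closure of subconcepts and subformulas of $\varphi = \B^{+}(\Tmc) \wedge \psi$, introducing surrogates for until-concepts exactly as in Section~\ref{sec:compkbound}, and define \emph{types} as the maximal propositionally consistent subsets; there are $2^{O(|\varphi|)}$ of them. On these types I would use two successor relations: the \emph{role} relation, linking a type to witnesses for its $\exists R.C$ demands \emph{at the same instant} (since, by the semantics, $\exists R.C$ is evaluated in $\Fmc_{n}$), and the \emph{temporal} relation, which links a type $t$ to a next-instant type $t'$ by the one-step unfolding of until ($C \U D \in t$ iff $D \in t'$, or $C \in t'$ and $C \U D \in t'$). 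Crucially, because the TBox is \emph{global}, the admissibility constraint on role successors is the same at every instant, so realisability of a state candidate reduces to a single, time-independent $\ALC$ type-elimination. I would then run a joint fixpoint elimination that discards a type whenever (i) one of its $\exists R.C$ demands has no surviving same-instant witness compatible with $\Tmc$, or (ii) it has no surviving temporal path to a \emph{final type}, i.e.\ a type in which every until-concept is false and $\Box\bot$ (namely \last) may hold. Each round is polynomial in the number of types, so the whole procedure runs in \ExpTime. The key feature of the finite-trace setting is that eventualities are discharged \emph{automatically}: since final types carry no pending until, any temporal path reaching a final type must already fulfil every until-obligation along the way, so backward reachability to final types replaces the lasso/fulfilment analysis used on infinite traces.

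It remains to decide, from the surviving types, whether a quasimodel of length at most $k$ exists. Since all runs share the single last instant, I need a common length $l+1 \le k$ for which the time-$0$ requirement imposed by $\psi$ (including the runs of the named individuals) is met by a state candidate each of whose types admits a temporal path of length \emph{exactly} $l$ to a final type, staying inside the surviving types throughout (state-candidate realisability at each instant being preserved, as the surviving types are closed under their role demands and the TBox is global). For each surviving type the set of achievable distances to a final type is ultimately periodic and computable in \ExpTime\ by a shortest-path analysis on the (exponential) type graph together with the periods of its reachable cycles; I would intersect these sets over the required types and test whether the intersection meets $\{1,\dots,k-1\}$. Because $k$ is given in binary, this test must be performed symbolically rather than by unfolding $k$ instants: if the least common achievable length already exceeds $k$ the formula is unsatisfiable on $k$-bounded traces, whereas if it is at most $k$ a reachable cycle lets one pad to an admissible length $\le k$.

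The main obstacle is precisely this last step, flagged in the footnote: coordinating all runs --- including those of the named individuals constrained by $\psi$ --- to terminate \emph{simultaneously} at one last instant carrying the $\Box\bot$/\last\ sub-formulas, while every pending until is resolved strictly before that instant and the common length respects the binary bound $k$. The delicate point is to show that the interaction of simultaneous termination, eventuality fulfilment, and the length constraint is fully captured by the finitely many (ultimately periodic) achievable-length sets, so that no explicit enumeration of the up-to-$k$ instants is needed and the overall algorithm stays within \ExpTime.
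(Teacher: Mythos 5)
Your hardness argument is fine, and your general instinct (replace the \NExpTime\ guessing of a quasistate sequence by deterministic type elimination) matches the paper's strategy. But there is a genuine gap in the step you yourself flag as the main obstacle, and it stems from a misconception about what \ExpTime\ permits. You insist that, because $k$ is given in binary, the length test ``must be performed symbolically rather than by unfolding $k$ instants.'' This is backwards: $k \leq 2^{|k|}$, so explicitly iterating a candidate length $l$ over $[1,k-1]$ and maintaining $l+1$ \emph{position-indexed} sets of types $S_0,\ldots,S_l$ (each initialised to all types) costs only time exponential in $|k|+|\varphi|$, which is exactly what an \ExpTime\ bound allows. This is precisely what the paper does: for each $l$, it eliminates from each $S_i$ the types that lack a same-instant role witness in $S_i$ (\ref{te1}), lack an $\Until$-compatible neighbour in $S_{i-1}$ or $S_{i+1}$ (\ref{te2}, \ref{te3}), carry an until-concept at $i=l$ (\ref{te4}), or violate the global TBox (\ref{te5}), with an analogous elimination and selection procedure for formula and named types; Lemma~\ref{cla:typelim} then shows that the surviving position-indexed sets themselves form the quasistate sequence of a quasimodel. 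The position index is what makes soundness straightforward.

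Your symbolic alternative, by contrast, is not sound as sketched. After a position-independent elimination, you propose to decide length-feasibility by computing, for each surviving type, the ultimately periodic set of achievable distances (in the temporal type graph) to a ``final'' type, and intersecting these sets over the required types. But a quasimodel needs more than one temporal path per type: at every instant $i$ the set of types present must be role-saturated \emph{within itself}, so the $\exists R.C$ witnesses of a type sitting at instant $i$ must themselves sit at instant $i$, hence must also admit a continuation of length exactly $l-i$ to a final type, and their witnesses likewise, recursively; the same synchronization is needed between the named types and the formula type at each instant. Per-type achievable-length sets do not record this joint constraint, and making them do so would mean tracking achievable lengths of \emph{sets} of types (quasistates), of which there are doubly exponentially many --- which blows the \ExpTime\ budget. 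You acknowledge this interaction as ``the delicate point'' but give no argument that it is captured by your per-type sets, and in fact it is not. The repair is not a finer periodicity analysis but the realisation that no symbolic treatment of $k$ is needed at all.
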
 
\begin{proof}
  It is enough to show that satisfiability in $\TALCfk$ restricted to global CIs on $k$-bounded traces is in $\ExpTime$.
  Let $\varphi$ be a
  \TALCfk formula restricted to global CIs.  Assume without loss of
  generality that $\varphi$ does not contain abbreviations (i.e., it
  only contains the logical connectives $\neg$, $\sqcap$, $\wedge$,
  the existential quantifier $\exists$, and the temporal operator
  $\Until$), that 
  \Tmc has the equivalent form
  $\top\sqsubseteq C_\Tmc$ (where $C_\Tmc$ is of polynomial size with
  respect to the size of \Tmc), and that $\varphi$ contains at least
  one individual name (we can always add the assertion $A_a(a)$, for a
  fresh constant $a$ and a fresh concept name $A_a$).  Let
  $\individuals{\varphi}$ be the set of individuals occurring
  in~$\varphi$.  Following the notation provided
  by~\citeauthor{DBLP:conf/frocos/BaaderBKOT17}~\citeyear{DBLP:conf/frocos/BaaderBKOT17},
  denote by $\clf(\varphi)$ the closure under single negation of the
  set of all formulas occurring in~$\varphi$.  Similarly, we denote by
  $\clc(\varphi)$ the closure under single negation of the set of all
  concepts union the concepts $A_{a}, \exists R.A_{a}$, for any
  $a\in\individuals{\varphi}$ and $R$ a role occurring in $\varphi$,
  where $A_a$ is fresh.
%
  A \emph{concept type for $\varphi$} is any subset~$t$ of
  $\clc(\varphi)\cup\individuals{\varphi}$ such that:
\begin{enumerate}[label=$\mathbf{T\arabic*}$,leftmargin=*]
\item\label{ct:neg} $\neg C\in t$ iff $C\not\in t$, for all
  $\neg C\in\clc(\varphi)$;
\item\label{ct:con} $C\sqcap D\in t$ iff $C,D\in t$, for all
  $C\sqcap D\in\clc(\varphi)$; and
\item\label{ct:ind} $t$ contains at most one individual name
  in~$\individuals{\varphi}$.
\end{enumerate}

\noindent
Similarly, we define \emph{formula types} $t\subseteq\clf(\varphi)$
for $\varphi$ with the conditions:
\begin{enumerate}[label=$\mathbf{T\arabic*'}$,leftmargin=*]
\item\label{ft:neg} $\neg\chi\in t$ iff $\chi\not\in t$, for all
  $\neg\chi\in\clf(\varphi)$; and
\item\label{ft:fcon} $\chi\wedge\psi\in t$ iff $\chi,\psi\in t$, for
  all $\phi\wedge\psi\in\clf(\varphi)$.
\end{enumerate}
We omit `for $\varphi$' when there is no risk of confusion.  A concept
type describes one domain element at a single time point, while a
formula type expresses assertions or constraints on all domain elements.
%
If $a\in t\cap\individuals{\varphi}$, then $t$ describes a named
element and is denoted as $t^a$.  We denote with ${\sf tp}(\varphi)$
the set of all concept and formula types.

The next notion captures how sets of types need to be constrained so
that the DL dimension is respected.
We say that a pair of concept types $(t,t')$ is \emph{$R$-compatible}
if $\{\neg F\mid \neg \exists R.F\in t\}\subseteq t'$.  A
\emph{quasistate} for~$\varphi$ is a set
$S\subseteq {\sf tp}(\varphi)$
such that:
\begin{enumerate}[label=\textbf{Q\arabic*},leftmargin=*]
\item\label{q:fseg} $S$ contains exactly one formula type~$t_S$;
\item\label{q:ind} $S$ contains exactly one named type $t^a$ for each
  $a\in\individuals{\varphi}$;
\item\label{q:gci} 
  $C_\Tmc\in t$, for all concept types $t \in S$; 
\item\label{q:cass} for all $A(a)\in\clf(\varphi)$, we have
  $A(a)\in t_S$ iff $A\in t^a$;
\item \label{q:quasi} $t \in S$ and $\exists R.D \in t$ implies that there
  is $t' \in S$ such that $D\in t'$ and $(t,t')$ is $R$-compatible;
\item \label{q:last} for all $R(a,b)\in\clf(\varphi)$, we have
  $R(a,b)\in t_S$ iff $(t^a,t^b)$ is $R$-compatible.
\end{enumerate}
We 
notice that the critical `iff' condition
in~\ref{q:last} can be realised using the extra concepts
$A_a, \exists R.A_a$, introduced for all $a\in
\individuals{\varphi}$. In particular, for any quasistate $S$, with
$t_S\in S$, and any $R(a, b)\in \clf(\varphi)$, if
$\neg R(a, b)\in t_S$ then the pair of named types $(t^a,t^b)$ can be
made not $R$-compatible by including $\neg \exists R.A_b$ in $t^a$ and
$A_b$ in $t_b$. This trick will be used in the proof of
Lemma~\ref{cla:globalquasimod}.

A \emph{(concept/formula) run segment} for~$\varphi$ is a finite
sequence $\sigma= (\sigma(0),\dots,\sigma(n))$ composed exclusively of
concept or formula types, respectively, such that:
\begin{enumerate}[label=\textbf{R\arabic*},leftmargin=*,series=run]
\item\label{enu:run2} for all $a\in\individuals{\varphi}$ and all
  $i\in(0,n]$, we have $a\in\sigma(0)$ iff $a\in\sigma(i)$;
\item\label{enu:run3} for all $\alpha\Until\beta\in\cls(\varphi)$ and
  all $i\in [0,n]$, we have ${\alpha\Until\beta} \in\sigma(i)$
  iff 
  there is $j \in (i,n]$ such that $\beta\in\sigma(j)$ and
  $\alpha\in\sigma(m)$ for all $m\in (i,j)$,
\end{enumerate}
where \cls is either \clc or \clf (as appropriate), and~\ref{enu:run2}
does not apply to formula run segments.  
also notice
that condition~\ref{enu:run3} disallows any run to contain until
concepts/formulas in the last instant $n$\footnote{A concept/formula
  of the form $\neg(\alpha\Until\beta)$ is not to be considered an
  until concept/formula. In particular, $\Box\bot$ is allowed in the
  last time point.}.
Intuitively,
  a concept run segment describes the
temporal dimension of a single domain element, whereas a
  formula run segment describes constraints on the whole DL
  interpretation.
 
Finally, a \emph{quasimodel}
for~$\varphi = \B^{+}(\top \sqsubseteq C_\T) \wedge \psi$ is a pair
$(S,\Rmf)$, with $S$ a finite sequence of quasistates
$(S(0), \ldots, S(n))$ and $\Rmf$ a non-empty set of run segments such
that:
\begin{enumerate}[label=\textbf{\upshape{M\arabic*}},leftmargin=*]
\item\label{m:formula} $\psi\in t_{S_0}$
where
  $t_{S_0}$ is the formula type in $S(0)$;
\item\label{m:runs} for every $\sigma\in\Rmf$ and every $i\in [0,n]$,
  $\sigma(i)\in S(i)$; and, conversely, for every $t\in S(i) $, there
  is $\sigma\in\Rmf$ with $\sigma(i)=t$.
\end{enumerate}
By~\ref{m:runs} and the definition of a quasistate for $\varphi$, \Rmf
always contains exactly one formula run segment and one named run
segment for each $a\in\individuals{\varphi}$.

Every quasimodel for~$\varphi$ describes an interpretation
satisfying~$\varphi$ and, conversely, every such interpretation can be
abstracted into a quasimodel for~$\varphi$. We formalise this notion
for finite traces with the following lemma.

 
\begin{lemma}
  \label{cla:globalquasimod}
  There is a finite trace satisfying $\varphi$ with at most $k$ time
  points iff there is a quasimodel for $\varphi$ with a sequence of
  quasistates of length at most $k$.
\end{lemma}
\begin{proof}
  $(\Rightarrow)$ Assume there is a
  finite trace $\Fmf= (\Delta, (\Fmc_n)_{n \in [0, l]})$, with
  $l < k$, that satisfies $\p$, i.e., $\Fmf, 0 \mdl \p$.  Without loss
  of generality, assume that, for all $a\in\individuals{\varphi}$, we
  have $A^{\Fmc_n}_a = \{a^{\Fmc}\}$ for all $n\in [0, l]$, where
  $A_a$ are those fresh concept names we used to extend
  $\clc(\varphi)$.
  We define $(S,\Rmf)$ in the following way. First,
  for all $n \in [0, l]$, $d\in\Delta$ and $a \in\individuals{\varphi}$, we set:
  \begin{align*}
      \s_{d}(n) &= \{ C \in \clc(\varphi) \mid d \in C^{\Fmc_{n}} \}, \\
      \sigma_{a}(n) &= \{ C \in \clc(\varphi) \mid a^{\Fmc} \in C^{\Fmc_{n}} \} \cup \{ a \},    
  \end{align*}
  and let
    $\s_{\ast} = (\s_{\ast}(0), \ldots, \s_{\ast}(n))$, with $\ast \in \{ d, a \}$.
  Moreover, we set
  $\s_{\Fmf}(n) = \{ \psi \in \clf(\varphi) \mid \Fmf, n \mdl \psi
  \}$, and $\s_{\Fmf} = (\s_{\Fmf}(0), \ldots, \s_{\Fmf}(n))$.
  Finally, define
  $S(n) = \{ \s_{d}(n) \mid d \in \Delta \} \cup \{ \sigma_{a}(n) \mid a \in \individuals{\varphi} \} \cup \{ \s_{\Fmf}(n) \}$,
  and
   $\Rmf = \{ \s_{d} \mid d \in \Delta \} \cup \{ \s_{a} \mid a \in \individuals{\varphi} \} \cup \{ \s_{\Fmf} \}$.
  We
  now show that $(S,\Rmf)$ is a quasimodel for $\p$. The only critical
  point is the ($\Leftarrow$) direction of
  condition~\ref{q:last}. Here we use the concepts
  $A_a, \exists R.A_a$ introduced for each
  $a\in\individuals{\varphi}$. We show the contrapositive, i.e., if
  $\neg R(a,b)\in \s_{\Fmf}(n)$, for some $n\in[0,l]$, then,
  $(\s_a(n),\s_b(n))$ is not $R$-compatible. If
  $\neg R(a,b)\in \s_{\Fmf}(n)$, then, $\Fmf,n\models \neg
  R(a,b)$. But then, $\neg \exists R.A_b\in\s_a(n)$, and, since
  $A_b\in\s_b(n)$, $(\s_a(n),\s_b(n))$ is not $R$-compatible.
  

  $(\Leftarrow)$ Suppose there is a quasimodel $(S,\Rmf)$ for $\p$,
  with $S = (S(0), \ldots, S(l))$ and $l < k$.  Define a finite trace
  $\Fmf= (\Delta, (\Fmc_n)_{n \in [0, l]})$ as follows:
\begin{align*}
  \Delta & = \{ d_{\s} \mid \s \in \Rmf, \s \text{ concept run segment} \}; \\
  A^{\Fmc_{n}} & = \{ d_{\s} \mid A \in \s(n), \s \in \Rmf \}; \\
  R^{\Fmc_{n}} & = \{ (d_{\s}, d_{\s'}) \mid
                 (\s(n),\s'(n))~\text{are $R$-compatible} \}; \\
  a^{\Fmc} & = d_{\s}, \text{ for the unique } \s \in \Rmf \text{ with } a \in \s(0).
\end{align*}
By~\ref{q:ind},~\ref{enu:run2}, and~\ref{m:runs}, $a^{\Fmc}$ is
well-defined.  In order to show that $\Fmf$ satisfies $\p$, we first
show the following claim.
%
\begin{claim}
\label{cla:concqm}
\text{For all $\s \in \Rmf$, $C \in \clc(\varphi)$ and $n \in [0, l]$,
$C \in \s(n) \text{ iff } d_{\s} \in C^{\Fmc_{n}}$.}
\end{claim}
\begin{proof}[Proof of Claim~\ref{cla:concqm}]
%
The proof is
by induction on the structure of $C$. For
$C = A$, we have that $A \in \s(n)$ iff (by definition of $\Fmf$)
$d_{\s} \in A^{\Fmc_{n}}$. The cases $C = \lnot C_{1}$ and
$C = C_{1} \sqcap C_{2}$ are straightforward, by using \ref{ct:neg}
and \ref{ct:con}, respectively. It remains to show the following
cases.
\begin{itemize}
\item $C = \exists R.C_{1}$. $(\Rightarrow)$ If
  $\exists R.C_{1} \in \s(n)$, by~\ref{m:runs} and~\ref{q:quasi}, we
  have that there is $t' \in S(n)$ such that $C_1\in t'$ and
  $(\s(n), t')$ is $R$-compatible.  Thus, again by \ref{m:runs}, there
  is $\s' \in \Rmf$ such that $C_1\in \s'(n)$ and $(\s(n), \s'(n))$ is
  $R$-compatible.
%
  By i.h.  and definition of $\Fmf$, we have
  $d_{\s'} \in C_{1}^{\Fmc_{n}}$ and
  $(d_{\s}, d_{\s'}) \in R^{\Fmc_{n}}$, i.e.,
  $d_{\s} \in (\exists R.C_{1})^{\Fmc_{n}}$.
  $(\Leftarrow)$ Suppose that
  $d_{\s} \in (\exists R.C_{1})^{\Fmc_{n}}$.  Then $d_{\s}$ has an
  $R$-successor $d_{\s'}\in C^{\Fmc_{n}}_1$ and, by i.h.,
  $C_1\in \s'(n)$.  By definition of $R^{\Fmc_{n}}$,
  $(\s(n),\s'(n))~\text{are}~R\mbox{-}compatible$. By absurd, assume
  that $\neg\exists R.C_{1} \in \s(n)$, then by the $R$-compatibility,
  $\neg C_1\in \s'(n)$, which is a contradiction.
 %
\item $C = C_{1} \U C_{2}$. $C_{1} \U C_{2} \in \s(n)$ iff (by
  \ref{enu:run3}) there is $m \in (n, l]$ such that $C_{2} \in \s(m)$
  and for all $i \in (n, m), C_{1} \in \s(i)$. By i.h., there is
  $m \in (n, l]$ such that $d_{\s} \in C_{2}^{\Fmc_{m}}$ and
  $d_{\s} \in C_{1}^{\Fmc_{i}}$ for all $i \in (n, m)$, i.e.,
  $d_{\s} \in (C_{1} \U C_{2})^{\Fmc_{n}}$.
  \qedhere
\end{itemize}
\end{proof}
We now show that $\Fmf,0\models \p$, with
$\p = \Box^+(\top \sqsubseteq C_\T) \land \psi$. By~\ref{q:gci},
\ref{m:runs} and Claim~\eqref{cla:concqm},
$\Fmf,0\models \Box^+(\top \sqsubseteq C_\T)$. It remains to show that
$\Fmf,0\models\psi$, where $\psi$ is a (possibly temporal) Boolean
combination of assertions, denoted in the following as \emph{assertion
  formula}. Let $\s_{\Fmf}\in \Rmf$ be a formula run segment, which
by~\ref{q:fseg} and~\ref{m:runs} exists and is unique.  It is enough
to show
the following claim.
%
\begin{claim}
\label{cla:formqm}
For all assertion formulas $\psi \in \clf(\varphi)$, and
  for all $n \in [0, l]$, $\psi \in \s_{\Fmf}(n)$ iff $\Fmf, n \mdl \psi$.
\end{claim}
\begin{proof}[Proof of Claim~\ref{cla:formqm}]
The proof is by induction on $\psi$.
\begin{itemize}
\item $\psi = A(a)$. By~\ref{q:cass} and~\ref{m:runs}, $A(a)\in
  \s_{\Fmf}(n)$ iff $A\in \s_a(n)$ iff, by~\eqref{cla:concqm},
  $a^{\Fmc}\in A^{\Fmc_n}$, i.e., $\Fmf, n \mdl A(a)$.
\item $\psi = R(a,b)$. By~\ref{q:last} and~\ref{m:runs}, $R(a,b)\in
  \s_{\Fmf}(n)$ iff $(\s_a(n),\s_b(n))$ is $R$-compatible. By $\Fmf$
  construction, $\Fmf,n\models R(a,b)$.
  \item The cases $\neg \chi,~\chi\land \zeta,~\chi \U \zeta$ are
    similar to Claim~\ref{cla:concqm} by using~\ref{ft:neg},
    \ref{ft:fcon}, and~\ref{enu:run3}.
    \qedhere
\end{itemize}
\end{proof}
Therefore, by Claim~\ref{cla:formqm} and~\ref{m:formula}, we can conclude
that $\Fmf, 0 \mdl \psi$. This finishes the proof of Lemma~\ref{cla:globalquasimod}.
\qedhere
%
%
\end{proof}
%
%
  %

%
%
%

Before 
presenting our algorithm we need the following
definition.  We say that a pair $(t,t')$ of (concept/formula) types is
$\U$-\emph{compatible} if: 
\begin{itemize} 
  \item $\alpha \U \beta \in t$ iff either  $\beta\in t'$  
   or $\{\alpha,\alpha \U \beta\} \subseteq t'$, for all $\alpha\U\beta\in
   \cls(\varphi)$,
 \end{itemize}
where \cls is either \clc or \clf (as appropriate).

Our type elimination algorithm iterates over the values in $[1,k-1]$
to determine in exponential time in $|k|$, with $k$ given in binary,
the length of the sequence of quasistates of a quasimodel for
$\varphi$, if one exists.  We 
assume that $\varphi$ has
the form $\Box^+(\top \sqsubseteq C_\T)\land \psi$. For each
$l\in [1,k-1]$, the $l$-th iteration starts with
sets: $$S_0,\ldots,S_{l-1},S_{l}$$
and each $S_i$ is initially set to ${\sf
tp}(\varphi)$.  We start by exhaustively eliminating concept   types $t$ from
some $S_i$, with $i \in [0, l]$, if 
  $t$ violates one of the following
conditions:

\begin{enumerate}[label=\bf E\arabic*,leftmargin=*]
  \item \label{te1} for all $\exists R.D \in t$, there is $t'\in S_i$
    such that $D\in t'$ and $(t,t')$ is $R$-compatible;
  \item \label{te2} if $i>0$, then there is $t' \in S_{i-1}$ such that 
  $(t',t)$ is $\U$-compatible;
  \item \label{te3} if $i<l$, then there is $t' \in S_{i+1}$ such that 
  $(t,t')$ is $\U$-compatible;
  \item \label{te4} if $i=l$, then there is no $C\U D \in t$;
  \item \label{te5} $C_\T \in t$. 
\end{enumerate}
For each $a\in\individuals{\varphi}$, if $t$ is a named type $t^a$
then, in~\ref{te2} and~\ref{te3}, we further require that the
mentioned types in a $\Until$-compatible pair contain $a$.  This phase
of the algorithm stops when no further concept types can be
eliminated.  Next, for each formula type $t$, we say that a function
$f_t$, mapping each $a\in\individuals{\varphi}$ to a named type
containing $a$, is \emph{consistent with $t$} if:
(i) for all $A(a)\in \clf(\varphi)$, $A(a)\in t$ iff $A\in f_t(a)$;
and
(ii) for all $R(a,b) \in \clf(\varphi)$, $R(a,b) \in t$ iff
$(f_t(a),f_t(b))$ is $R$-compatible.  We are going to use these
functions to construct our quasimodel as follows.  We first add to
each $S_i$ all $f_t$ consistent with each formula type $t\in S_i$ such
that the image of $f_t$ is contained in $S_i$.  We then exhaustively
eliminate such functions $f_t$ 
from some $S_i$, with $i \in [0, l]$, if $f_t$ violates one of the
following conditions:
\begin{enumerate}[label=$\mathbf{E\arabic*'}$,leftmargin=*]
 %
%
\item \label{te22} if $i<l$, then there is $f_{t'} \in S_{i+1}$ such that
  $(t,t')$ is $\U$-compatible and, for all
  $a\in\individuals{\varphi}$, $(f_t(a),f_{t'}(a))$ is
  $\U$-compatible;
\item \label{te33} if $i=l$, then there is no $\alpha\U \beta \in
  t$. 
 %
    %
\end{enumerate}
  
%
It remains to ensure that each $S_i$ contains exactly one formula type
$t_{i}$ and one named type $t^a$ for each $a\in \individuals{\varphi}$
(and no functions $f_t$).  For this choose any formula type function
$f_{t_{0}}$ in $S_0$ such that $\psi \in t_{0}$ (if one exists) and
remove formula types $t'_{0}\neq t_{0}$ from $S_0$.  Then, for each
$i\in [1,l]$, select a formula type function $f_{t_{i}}\in S_i$ such
that $({t_{{i-1}}},{t_{i}})$ 
is $\Until$-compatible and for all $a\in\individuals{\varphi}$,
$(f_{t_{{i-1}}}(a),f_{t_{i}}(a))$ is $\U$-compatible, removing formula
types $t'_{i}\neq t_{i}$ from $S_i$, where $f_{t_{i}}$ is the selected
function.  The existence of such $f_{t_{{i}}}$
is ensured by~\ref{te22}. 
For each selected function $f_{t_{i}}$ and each
$a\in\individuals{\varphi}$, with $i\in [1,l]$,
we remove from $S_i$ all named types $t^a$ 
such that $t^a\neq f_{t_{i}}(a)$.
We now have that each $S_i$ contains exactly one formula type $t_{i}$
and one named type $t^a$ for each $a\in \individuals{\varphi}$.  Finally,
we proceed removing all functions $f_t$.  We have thus constructed a
sequence of quasistates.  Until concepts/formulas $\alpha\U\beta$ are
satisfied thanks to the \U-compatibility conditions and the fact that
there are no expressions of the form $\alpha\U\beta$ in
concept/formula types in the last quasistate.
 
This last step does not affect conditions~\ref{te1}-\ref{te5} (in
particular~\ref{te1}) for the remaining concept types since for each
named type there is an unnamed (concept) type which is the result of
removing the individual name from it, and if the named type was not
removed during type elimination then the corresponding unnamed type
was also not removed.
If the algorithm succeeds on these steps with a surviving concept type
$t \in S_0$ and a formula type $t_{S_0}$ in $S_0$ such that
$\psi \in t_{S_0}$ then it returns `satisfiable'.  Otherwise, it
increments $l$ or returns `unsatisfiable' if $l=k-1$ (i.e., there are
no further iterations).
\begin{lemma}
\label{cla:typelim}
The type elimination algorithm returns `satisfiable' iff there is a
quasimodel for $\varphi$.
\end{lemma}
%
\begin{proof}
  For $(\Rightarrow)$, let $S^*=S_0^*,\ldots,S_l^*$ be the result of
  the type elimination procedure.  Define $(S^*,\Rmf)$ with \Rmf as
  the set of sequences $\sigma$ of (concept/formula) types such that,
  for all $i \in [0,l]$:
\begin{enumerate}
\item $\sigma(i) \in S^*_i$, and for every $t\in S^*_i$, there is
  $\sigma\in\Rmf$ with $\sigma(i)=t$;
\item for all $a\in\individuals{\varphi}$, we have $a\in\sigma(0)$ iff
  $a\in\sigma(i)$;
\item for all $\alpha\Until\beta\in\cls(\varphi)$, we have
  $\alpha\Until\beta\in\sigma(i)$ iff there is $j \in (i,l]$ such that
  $\beta\in\sigma(j)$ and $\alpha\in\sigma(n)$ for all $n\in (i,j)$.
\end{enumerate}
where \cls is either \clc or \clf (as appropriate).
We now argue that $(S^*,\Rmf)$ is a quasimodel for $\varphi$.  We
first argue that $S^*$ is a sequence of quasistates for $\varphi$.
\ref{te1} ensures Condition~\ref{q:quasi}, while Condition~\ref{q:gci}
is 
guaranteed by condition~\ref{te5}.
For Conditions~\ref{q:cass} and~\ref{q:last}, we have the fact that
named types are taken from functions consistent with the formula
types.  The last step of our algorithm consists in eliminating formula
and named types so that we satisfy Conditions~\ref{q:fseg}
and~\ref{q:ind}.  Thus, $S^*$ is a sequence of quasistates for
$\varphi$.
Concerning the construction of $\Rmf$, Point~(2) can be enforced
thanks to our selection procedure for named types, which  
enforces $\U$-compatibility, while Point~(3) is a consequence of
\begin{itemize}
\item Conditions~\ref{te2}, \ref{te3} and \ref{te4}, for concept
  types; and
\item Conditions~\ref{te22} and~\ref{te33}, for formula types,
  together with the selection procedure.
\end{itemize}
Points~(2)-(3) coincide with conditions~\ref{enu:run2}
and~\ref{enu:run3}, so \Rmf is a set of run segments for $\varphi$.
Finally, Point~(1) ensures that condition~\ref{m:runs} holds, and
thus, when the algorithm returns `satisfiable', also
condition~\ref{m:formula} holds.  Thus, 
$(S^*,\Rmf)$ is a quasimodel for $\varphi$.

For 
the other direction
$(\Leftarrow)$, assume there is a quasimodel $(S,\Rmf)$ for
$\varphi$.
Assume $S$ is of the form $S_0 \ldots S_{l-1} S_l$, for some
$l\in [1,k-1]$.  Let $S_0^*,\ldots,S_l^*$ be the result of the type
elimination at the $l$-th iteration. Since $(S,\Rmf)$ is a
quasimodel, each concept type satisfies~\ref{te1}.  Moreover,
conditions~\ref{te2}-\ref{te5} are consequences of the existence of
run segments through each type (by~\ref{m:runs}).
Then, for all unnamed (concept) types $t$, if $t\in S_i$ then
$t\in S_i^*$, $i \in [0,l]$.  If $t$ is a formula type or a named type
then $t\in S_i$ does not necessarily imply that $t\in S_i^*$,
$i \in [0,l]$.  However, the existence of such types implies that the
algorithm should find a sequence of functions $f_{t_i}$, for
$i\in[0,l]$, satisfying~\ref{te22} and~\ref{te33} which is then used
to select formula and named types satisfying the quasimodel
conditions.  In particular, due to~\ref{m:formula}, the selection
procedure will select a function $f_{t_0}$ associated with a formula
type $t_0\in S_0^*$ containing $\psi$.  So there is a surviving
formula type in $S_0^*$ containing $\varphi$ and the algorithm returns
`satisfiable'.

%
 
This finishes the proof of Lemma~\ref{cla:typelim}.
\end{proof}

\medskip

We now argue that our type elimination algorithm runs in exponential
time.  Since there are polynomially many individuals (with respect to
the size of $\varphi$) occurring in $\varphi$, the number of functions
$f_t$ consistent with a formula type is exponential.  As the number of
(concept/formula) types is exponential the total number of functions
and types to consider is exponential.
In every step some concept type or function is eliminated
(by~\ref{te1}-\ref{te5} or by~\ref{te22}-\ref{te33},
respectively). Conditions~\ref{te1}-\ref{te5}
and~\ref{te22}-\ref{te33} can clearly be checked in exponential
time. Also, the selection procedure of functions for each $S_i$, which
determine the formula and named types in the result of the algorithm,
can also be checked in exponential time, since we can pick any
function in $S_{i+1}$ satisfying the $\Until$-compatibility relation,
which is a local condition.
As this can also be implemented in exponential time, this concludes
the proof of Theorem~\ref{th:fixed}.
\end{proof}

We leave the complexity of the satisfiability problem on \emph{finite
traces} for $\TALCf$ restricted to global CIs
as an open problem.  It is known that the complexity of the
satisfiability problem in this fragment over \emph{infinite} traces is
\ExpTime-complete~\cite{LutEtAl,DBLP:conf/frocos/BaaderBKOT17}.
However, the end of time formula $\psi_{f}$ is not expressible in this fragment. 
Thus, we cannot use the same strategy of defining 
a translation for the semantics based on infinite traces, as we 
did in Section~\ref{sec:redinf}.
%
Moreover, the upper bound in~\cite{LutEtAl} 
is based on type elimination. 
The main difficulty in devising a type elimination procedure in the case of arbitrary finite traces 
is that the number of time points 
is not fixed and the argument in~\cite{LutEtAl},
showing that there is a quasimodel iff there 
is a quasimodel $(S,\Rmf)$ such that 
 $S(i+1)\subseteq S(i)$, for all $i\geq 0$, is not applicable 
 to finite traces.
 A type with a concept equivalent to
 $\last$
 can only be in the last quasistate of the quasimodel.
 Therefore, it is not clear 
 whether one can show that if there is a quasimodel, then there is a quasimodel 
 with an exponential sequence of quasistates, as done in Theorem~\ref{th:fixed}.


\section{Applications}
\label{sec:appl}


Understanding the connections between finite and infinite traces is of interest to several applications.
In the following, we focus on planning and verification.
First, we lift to the first-order temporal logic setting the $\LTL$
notion of insensitivity to infiniteness~\cite{DegEtAl}, introduced in
the planning domain.  Then, we discuss how, in $\LTL$, the concepts of
safety, as well as impartiality and anticipation~\cite{BauEtAl}, can
be related to the semantic properties of Section~\ref{sec:fininf} for
bridging finite and infinite traces.

\subsection{Planning}
%
In automated planning, the sequence of states generated by actions is
usually finite~\cite{CerMay,BauHas,DegVar1,DegEtAl}.  To reuse
temporal logics based on infinite traces for specifying plan
constraints, one approach, developed
by~\citeauthor{DegEtAl}~\citeyear{DegEtAl} for $\LTLf$ on finite
traces, is based on the notion of \emph{insensitivity to
  infiniteness}.
%
%
%
This property is meant to capture those formulas that can be
equivalently interpreted on infinite traces, provided that, from a
certain instant, these traces satisfy an end event forever and falsify
all other atomic propositions.  The motivation for this comes from the
fact that
propositional letters represents atomic tasks/actions that cannot be
performed anymore after the end of a process.

In order to lift this notion of insensitivity to our first-order
temporal setting, and to provide a characterisation analogous to the
propositional one, we introduce the following definitions.
Let $\Fmf = (\Delta^{\Fmf}, (\Fmc_n)_{n \in [0, l]})$ be a finite
trace, and let
$\Emf = (\Delta^{\Emf}, (\Emc_{n})_{n \in [0, \infty)})$ be the
infinite trace such that $\Delta^{\Emf} = \Delta^{\Fmf}$ (we write
just $\Delta$), $a^{\Emc} = a^{\Fmc}$ for all $a \in
\NI$, 
and for all $P \in \textsf{N}_{\textsf{P}} \setminus \{ E \}$,
$P^{\Emc_{n}} = \eset$, while $E^{\Emc_{n}} = \Delta$, for any
$n \in [0, \infty)$.
The end extension (cf.~Section~\ref{sec:redinf}) of $\Fmf$ with $\Emf$,
$\Fmf\cdot_{E}\Emf$, will be called the \emph{insensitive extension
  of} $\Fmf$.  A $\QTLfr{\Until}{}{}$ formula $\p$ is \emph{insensitive to
  infiniteness} (or simply \emph{insensitive}) if, for every finite
trace $\Fmf$ and all assignments $\assign$, $\Fmf \mdl^{\assign} \p$
iff $\FEmfC \mdl^{\assign} \p$.  Clearly, all insensitive
$\QTLfr{\Until}{}{}$ formulas are also $\finit{\Rightarrow}{\exists}$.
%
Moreover, let $\Sigma$ be a finite subset of $\textsf{N}_{\textsf{P}}$ such that $E \in \Sigma$.
Assume without loss of generality that the $\QTLfr{\Until}{}{}$ formulas we mention in this
subsection have predicates in $\Sigma$.
%
Given an infinite trace
$\Imf$, the \emph{$\Sigma$-reduct of} $\Imf$
is the infinite trace $\reduct$ coinciding with $\Imf$ on $\Sigma$ and
such that  $X^{\Imc_{{n}\mid_{\Sigma}}} = \eset$, for $X \not \in \Sigma$ and 
$n \in [0, \infty)$.
Finally, recalling the definition of $\psif$, we
define $\vartheta_{f} = \psif \land \chi_{f}$, with
\begin{align*}
  \chi_{f} & = \B \forall x \forall \bar{y} (E(x) \to \hspace{-0.2cm} \bigwedge_{P \in \Sigma \setminus \{ E \}} \hspace{-0.2cm}  \lnot P(x,\bar{y}) ).
\end{align*}
%
Before
we proceed with a formal characterisation of
insensitive formulas, we require the following preliminary lemmas.

\begin{restatable}{lemma}{LemmaEndChar}\label{lemma:endchar}
  For every infinite trace $\Imf$, $\Imf \mdl \vartheta_{f}$ iff
  $\reduct = \FEmfC$, for some finite trace $\Fmf$.
\end{restatable}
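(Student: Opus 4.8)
The plan is to reduce the biconditional to a semantic unfolding of $\vartheta_{f} = \psif \land \chi_{f}$ and then to match the resulting shape of the trace against the definition of $\FEmfC$. The first move I would make is to record that $\vartheta_{f}$ mentions only predicates from $\Sigma$: the subformula $\psif$ speaks only about $E\in\Sigma$, while $\chi_{f} = \B \forall x \forall \bar{y}\,(E(x) \to \bigwedge_{P \in \Sigma\setminus\{E\}} \lnot P(x,\bar{y}))$ speaks about $E$ and the $P\in\Sigma\setminus\{E\}$. Since an infinite trace $\Imf$ and its $\Sigma$-reduct $\reduct$ interpret every predicate of $\Sigma$ identically, the satisfaction of $\vartheta_{f}$ cannot distinguish them, so $\Imf \mdl \vartheta_{f}$ iff $\reduct \mdl \vartheta_{f}$. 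This lets me argue entirely about $\reduct$, in which every predicate outside $\Sigma$ is, by definition, interpreted as $\eset$ at every instant.

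Next I would unfold $\vartheta_{f}$ semantically on $\reduct$. By Lemma~\ref{lemma:finchar}, $\reduct \mdl \psif$ holds exactly when there is some $k > 0$ with $E$ interpreted as $\eset$ on $[0,k)$ and as $\Delta$ on $[k,\infty)$. I then conjoin the effect of $\chi_{f}$. Because $\chi_{f}$ uses the strict box $\B$, evaluated at instant $0$ it constrains precisely the instants $m>0$; at those $m \geq k$ where $E = \Delta$, the quantifier $\forall x$ ranges over every domain element (all of which satisfy $E$), and every tuple has a first component, so $\chi_{f}$ forces each $P\in\Sigma\setminus\{E\}$ to be empty throughout $[k,\infty)$. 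On $[0,k)$ the antecedent $E(x)$ is never satisfied, and the single instant $m=0$ is covered for free since $\psif$ guarantees $k>0$ (so $E$ is already empty at $0$); hence $\chi_{f}$ imposes nothing on the prefix. Collecting this, $\reduct \mdl \vartheta_{f}$ iff for some $k > 0$: $E$ is empty on $[0,k)$, $E = \Delta$ on $[k,\infty)$, and every $P\in\Sigma\setminus\{E\}$ is empty on $[k,\infty)$.

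Finally I would match this shape with $\FEmfC$, in both directions. For the right-to-left direction, if $\reduct = \FEmfC$ for a finite trace $\Fmf$ on $[0,l]$, then the definition of the end extension immediately yields $E$ empty on $[0,l]$ and $E = \Delta$ on $[l+1,\infty)$, while on the $\Emf$-part all non-$E$ predicates are empty; taking $k = l+1$ recovers the three conditions, whence $\reduct \mdl \vartheta_{f}$ and so $\Imf \mdl \vartheta_{f}$. For left-to-right, given such a $k$ I set $l = k-1$ and define $\Fmf = (\Delta, (\Fmc_n)_{n\in[0,l]})$ with $a^{\Fmc} = a^{\Imc}$, with each $P\neq E$ interpreted on $[0,l]$ as $P$ is interpreted in $\reduct$, and with $E^{\Fmc_n} = \eset$. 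A direct per-predicate, per-instant check then shows $\FEmfC$ agrees with $\reduct$ both on $[0,l]$ (the non-$E$ predicates copy $\Fmf$, and $E$ is forced empty) and on $[l+1,\infty)$ (all non-$E$ predicates empty, $E = \Delta$), using that in $\reduct$ every $\Sigma\setminus\{E\}$-predicate is already empty on the tail and every non-$\Sigma$ predicate is empty everywhere.

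I expect the only genuinely delicate point to be the observation that $E = \Delta$ collapses the \emph{localized} constraint of $\chi_{f}$ (``$E(x)$ at $x$ forbids $P$ with first argument $x$'') into \emph{global} emptiness of the other $\Sigma$-predicates on the tail; the rest is the bookkeeping of the construction of $\Fmf$, plus the two easy-to-miss details, namely the strict reading of $\B$ and the role of $k>0$ in covering the instant $0$, which I would state explicitly.
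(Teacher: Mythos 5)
Your proof is correct and takes essentially the same route as the paper's: both rest on Lemma~\ref{lemma:finchar} to pin down the $E$-shape forced by $\psif$, on a direct semantic reading of $\chi_{f}$ to obtain emptiness of the remaining $\Sigma$-predicates on the tail, and on matching the resulting shape against the definition of $\FEmfC$. Your version is in fact slightly tidier on two bookkeeping points that the paper glosses over: the explicit reduction from $\Imf$ to $\reduct$ (valid since $\vartheta_{f}$ mentions only $\Sigma$-predicates), and the construction of $\Fmf$ as a prefix of $\reduct$ rather than of $\Imf$ itself, which is what makes the equality $\reduct = \FEmfC$ hold literally on the predicates outside $\Sigma$.
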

\begin{proof}
($\Leftarrow$) 
If $\reduct = \FEmfC$, 
by Lemma~\ref{lemma:finchar},
$\FEmf \mdl \psif$.
Moreover, where $l$ is the last time point of $\Fmf$, we have by
definition: for all $n \in [0, l]$, $E^{\FEmc_{n}} = \emptyset$; for
all $n \in [l + 1, \infty)$, $E^{\FEmc_{n}} = \Delta$, and
$P^{\FEmc_{n}} = \eset$, for every $P \in \NPr \setminus \{E\}$.
Thus, for all $n \in (0, \infty)$, for all objects $d$ and all tuples
of objects $\bar{d}$ in $\Delta$, we have: if $d \in E^{\FEmc_{n}}$,
then $(d, \bar{d}) \notin P$, for all
$P \in \Sigma \setminus \{ E \}$.  Therefore, $\FEmf \mdl \chi_{f}$,
and hence $\FEmf \mdl \vartheta_{f}$.

($\Rightarrow$) Suppose $\Imf \mdl \vartheta_{f}$. Since in particular
it satisfies $\psi_{f}$, by Lemma~\ref{lemma:finchar}, we have that
$\Imf = \FEImf'$, for some finite trace
$\Fmf = (\Delta, (\Fmc_{n})_{n \in [0, l]})$ and some infinite trace
$\Imf'$. Thus:
\[
  E^{\FImc'_{n}} =
  \begin{cases}
    \emptyset, & \text{for all $n \in [0, l]$} \\
    \Delta, & \text{for all } n \in [l + 1, \infty) \\
  \end{cases}
\]
Since $\Imf \mdl \chi_{f}$, for all $n \in [l + 1, \infty)$, for all
objects $d$ and all tuples of objects $\bar{d}$ in $\Delta$, we have
that: if
$d \in E^{\Fmc\cdot_{E}\Imc'_{n}} = \Delta$,
then
$(d, \bar{d}) \notin P^{\Fmc\cdot_{E}\Imc'_{n}}$, for
all $P \in \Sigma \setminus \{ E \}$.  This is equivalent to
$P^{\I_{n}} = \eset$, for all $P \in \Sigma \setminus \{ E \}$ and
$n \in [l + 1, \infty)$.  Therefore, we have that $\reduct = \FEmf$.
\end{proof}

\begin{restatable}{lemma}{LemmaEndExtChar}\label{lemma:endextchar}
  Let $\p$ be a $\QTLfr{\Until}{}{}$ formula, $\Fmf$ a finite trace, and $\assign$ an assignment. We have that $\Fmf \mdl^{\assign} \p \text{ iff \,} \FEmfC \mdl^{\assign} \p\tr$.
\end{restatable}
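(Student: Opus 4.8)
The plan is to recognise that the \emph{insensitive extension} $\FEmfC = \Fmf \cdot_{E} \Emf$ is nothing more than a specific instance of an \emph{end extension} in the sense of Section~\ref{sec:redinf}, and then to invoke Lemma~\ref{lemma:finextchar} directly. Recall that the end extension $\FEImf = \Fmf \cdot_{E} \Imf$ is obtained by concatenating $\Fmf$ with an \emph{arbitrary} infinite trace $\Imf$ (subject only to $\Delta^{\Fmf} = \Delta^{\Imf}$ and $a^{\Fmc} = a^{\Imc}$ for all $a \in \NI$) and then overwriting the interpretation of the end predicate $E$ so that $E^{\FEImc_{n}} = \eset$ for $n \in [0, l]$ and $E^{\FEImc_{n}} = \Delta$ for $n \in [l+1, \infty)$. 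Crucially, the definition places no constraint on the interpretation of the non-$E$ predicates in the infinite tail.

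The first step is therefore to check that the trace $\Emf$ introduced in this subsection satisfies the side conditions needed for $\cdot_{E}$ to be well defined: by construction $\Delta^{\Emf} = \Delta^{\Fmf}$ and $a^{\Emc} = a^{\Fmc}$ for all $a \in \NI$, so $\Fmf \cdot_{E} \Emf$ is a legitimate end extension of $\Fmf$. It simply happens to be the end extension in which, additionally, every $P \in \NPr \setminus \{ E \}$ is falsified throughout the infinite tail (i.e.\ $P^{\Emc_{n}} = \eset$ for $n \in [0, \infty)$). This extra condition on the tail is irrelevant to whether $\FEmfC$ qualifies as an end extension.

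Consequently, the statement is an immediate special case of Lemma~\ref{lemma:finextchar}: applying that lemma to the end extension $\FEImf$ with the concrete choice $\Imf := \Emf$ yields, for every $\QTLfr{\Until}{}{}$ formula $\p$ and every assignment $\assign$, that $\Fmf \mdl^{\assign} \p$ iff $\FEmfC \mdl^{\assign} \p\tr$, which is exactly the claim.

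I expect no real obstacle here, since the whole content has already been packaged into Lemma~\ref{lemma:finextchar}; the only thing to verify is the well-definedness of $\cdot_{E}$ noted above. If instead one wanted a self-contained argument, one would redo the structural induction of Lemma~\ref{lemma:finextchar}. The only interesting case is $\p = \psi \Until \chi$, where the translation appends the guard $\psi_{f}^{1} = \forall x \lnot E(x)$ to $\chi\tr$: since $E^{\FEmfC_{n}} = \eset$ for exactly the instants $n \in [0, l]$, this guard is satisfied precisely at the finite-trace positions, so an until witness for $\p\tr$ in $\FEmfC$ must lie in $[0, l]$ and the semantics of $\Until$ over $\FEmfC$ collapses to that over $\Fmf$. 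The base and remaining Boolean/quantifier cases are routine. Nevertheless, deriving the lemma as a corollary of Lemma~\ref{lemma:finextchar} is the cleaner route.
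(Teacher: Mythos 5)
Your proposal is correct and is essentially the paper's own proof: the paper also disposes of this lemma by noting that $\FEmfC$ is, by definition, an end extension of $\Fmf$ (with $\Imf := \Emf$) and then invoking Lemma~\ref{lemma:finextchar}. Your additional verification of the well-definedness side conditions and the sketched self-contained induction are fine but not needed beyond that observation.
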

\begin{proof}
By definition of $\FEmf$ and as a consequence of Lemma~\ref{lemma:finextchar}.
\end{proof}

We can now state the following characterisation result for insensitive formulas, which extends~\cite[Theorem 4]{DegEtAl}
to the first-order language
$\QTLfr{\Until}{}{}$.
\begin{restatable}{theorem}{TheorInsDegEtAlChar}\label{theor:inschar}
  A $\QTLfr{\Until}{}{}$ formula $\p$ is insensitive to infiniteness iff
  $\vartheta_{f} \models_{i} \p \leftrightarrow \p\tr$.
\end{restatable}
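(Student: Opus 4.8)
The plan is to prove both implications by chaining Lemma~\ref{lemma:endchar} and Lemma~\ref{lemma:endextchar}, together with the elementary observation that, since every formula under consideration has predicates in $\Sigma$, satisfaction of $\p$ and of $\p\tr$ in an infinite trace $\Imf$ coincides with their satisfaction in the $\Sigma$-reduct $\reduct$. Note that $\vartheta_{f}=\psif\land\chi_{f}$ is a sentence, and that $\p\tr$ mentions exactly the predicates of $\p$ together with $E$ (introduced by the clause for $\Until$ in the translation $\cdot\tr$); since $E\in\Sigma$, the formula $\p\tr$ also has predicates in $\Sigma$, so the locality observation applies to it as well.

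For the direction from insensitivity to $\vartheta_{f}\modelsinf\p\leftrightarrow\p\tr$, I would fix an infinite trace $\Imf$ and an assignment $\assign$ with $\Imf\mdl^{\assign}\vartheta_{f}$. As $\vartheta_{f}$ is a sentence, $\Imf\mdl\vartheta_{f}$, so by Lemma~\ref{lemma:endchar} there is a finite trace $\Fmf$ with $\reduct=\FEmfC$; in particular $\reduct$ and $\FEmfC$ denote the same trace. By locality of satisfaction with respect to $\Sigma$ we have $\Imf\mdl^{\assign}\p$ iff $\reduct\mdl^{\assign}\p$, and $\Imf\mdl^{\assign}\p\tr$ iff $\reduct\mdl^{\assign}\p\tr$, so it suffices to show $\FEmfC\mdl^{\assign}\p$ iff $\FEmfC\mdl^{\assign}\p\tr$. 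Now Lemma~\ref{lemma:endextchar} gives $\FEmfC\mdl^{\assign}\p\tr$ iff $\Fmf\mdl^{\assign}\p$, while insensitivity of $\p$ gives $\Fmf\mdl^{\assign}\p$ iff $\FEmfC\mdl^{\assign}\p$. Composing these equivalences yields the biconditional, and since $\Imf,\assign$ were an arbitrary model of $\vartheta_{f}$, we conclude $\vartheta_{f}\modelsinf\p\leftrightarrow\p\tr$.

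For the converse, I would fix a finite trace $\Fmf$ and an assignment $\assign$ and consider its insensitive extension $\FEmfC$. The first step is to verify $\FEmfC\mdl\vartheta_{f}$: since $\FEmfC$ is an end extension of $\Fmf$, it satisfies $\psif$ by Lemma~\ref{lemma:finchar}, and it satisfies $\chi_{f}$ because $E$ is empty on $[0,l]$ (so the antecedent $E(x)$ is false there) and every predicate other than $E$ is empty on $[l+1,\infty)$ (so the consequent holds there). Because $\vartheta_{f}$ is a sentence, $\FEmfC\mdl^{\assign}\vartheta_{f}$ for every $\assign$, so the hypothesis $\vartheta_{f}\modelsinf\p\leftrightarrow\p\tr$ gives $\FEmfC\mdl^{\assign}\p$ iff $\FEmfC\mdl^{\assign}\p\tr$. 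Finally, Lemma~\ref{lemma:endextchar} rewrites the right-hand side as $\Fmf\mdl^{\assign}\p$, so $\FEmfC\mdl^{\assign}\p$ iff $\Fmf\mdl^{\assign}\p$, which is exactly the definition of insensitivity.

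The proof is short once the two preliminary lemmas are in place, so the only genuinely delicate point is the bookkeeping with the $\Sigma$-reduct in the first direction: Lemma~\ref{lemma:endchar} yields $\reduct=\FEmfC$ rather than $\Imf=\FEmfC$, and one must be careful that $\Imf$ and $\reduct$ cannot be distinguished by $\p$ or $\p\tr$. This is precisely where the standing assumption that all formulas have predicates in $\Sigma$ (and that $E\in\Sigma$, keeping the translation within $\Sigma$) is essential; without it the biconditional could fail on predicates that $\vartheta_{f}$ does not constrain.
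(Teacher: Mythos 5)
Your proof is correct, and your forward direction is exactly the paper's: chain Lemma~\ref{lemma:endchar}, locality of satisfaction with respect to $\Sigma$ (noting, as you do, that $\p\tr$ only adds the predicate $E \in \Sigma$), Lemma~\ref{lemma:endextchar}, and insensitivity, to conclude $\Imf \mdl^{\assign} \p \leftrightarrow \p\tr$ whenever $\Imf \mdl^{\assign} \vartheta_{f}$.

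Your converse direction is organized differently from the paper's, and the difference is worth recording. The paper quantifies over infinite models of $\vartheta_{f}$: it takes an arbitrary $\Imf$ with $\Imf \mdl^{\assign} \vartheta_{f}$, applies the ($\Rightarrow$) direction of Lemma~\ref{lemma:endchar} to write $\reduct = \FEmfC$, and derives the insensitivity biconditional for that particular $\Fmf$. Read literally, this establishes the biconditional only for those finite traces that arise, via $\Sigma$-reducts, from infinite models of $\vartheta_{f}$; that \emph{every} finite trace is covered in this way is left implicit. You instead quantify over finite traces directly: for an arbitrary $\Fmf$ you verify $\FEmfC \mdl \vartheta_{f}$ by hand (Lemma~\ref{lemma:finchar} for $\psif$, and the emptiness pattern of $E$ on $[0,l]$ and of the remaining predicates on $[l+1,\infty)$ for $\chi_{f}$), then apply the hypothesis to the infinite trace $\FEmfC$ itself, and finish with Lemma~\ref{lemma:endextchar}. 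This buys two things: the universal quantification over finite traces demanded by the definition of insensitivity is explicit rather than implicit, and no $\Sigma$-reduct bookkeeping is needed in this direction, since the hypothesis is applied to $\FEmfC$ rather than to a reduct. Your check of $\chi_{f}$ also goes through regardless of how $\Fmf$ interprets predicates outside $\Sigma$, which is precisely what makes the reduct-free argument sound.
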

\begin{proof}
  $(\Rightarrow)$ Assume that $\p$ is insensitive. We want to prove
  that, for every infinite trace $\Imf$ and all assignments $\assign$,
  if $\Imf \mdl^{\assign} \vartheta_{f}$, then
  $\Imf \mdl^{\assign} \p \leftrightarrow \p\tr$. Suppose
  $\Imf \mdl^{\assign} \vartheta_{f}$.  By Lemma~\ref{lemma:endchar},
  $\reduct = \FEmf$, for a finite trace $\Fmf$.  Moreover, thanks to
  Lemma~\ref{lemma:endextchar}, $\Fmf \mdl^{\assign} \p$ iff
  $\FEmf \mdl^{\assign} \p\tr$.  Since $\p$ is by hypothesis
  insensitive, for every finite trace $\Fmf$ and all assignments
  $\assign$, $\Fmf \mdl^{\assign} \p$ iff $\FEmf \mdl^{\assign} \p$.
  Thus, $\FEmf \mdl^{\assign} \p$ if and only if
  $\FEmf \mdl^{\assign} \p\tr$.  That is,
  $\FEmf \mdl^{\assign} \p \leftrightarrow \p\tr$, and therefore
  $\Imf \mdl^{\assign} \p \leftrightarrow \p\tr$ (since all the
  predicates occurring in $\p,\p\tr$ are in $\Sigma$).

  $(\Leftarrow)$ Assume that
  $\vartheta_{f} \mdl \p \leftrightarrow \p\tr$. By
  Lemma~\ref{lemma:endchar}, for every infinite trace $\Imf$ and every
  assignment $\assign$, $\Imf \mdl^{\assign} \vartheta_{f}$ means that
  $\reduct = \FEmf$, for a finite trace $\Fmf$. Given our assumption,
  this implies $\FEmf \mdl^{\assign} \p \leftrightarrow \p\tr$, that
  is $\FEmf \mdl^{\assign} \p$ iff
  $\FEmf \mdl^{\assign} \p\tr$, for all assignments ${\assign}$. By
  Lemma~\ref{lemma:endextchar}, $\FEmf \mdl^{\assign} \p\tr$ if and
  only if $\Fmf \mdl^{\assign} \p$. In conclusion, we obtain that, for
  all assignments ${\assign}$, $\FEmf \mdl^{\assign} \p$ iff $\Fmf \mdl^{\assign} \p$, meaning that $\p$ is insensitive.
\end{proof}
%

We
now analyse syntactic features of insensitive
formulas.  Firstly, non-temporal $\QTLfr{\Until}{}{}$ formulas are
insensitive.  Moreover, this property is preserved under non-temporal
operators.  We generalise~\cite[Theorem 5]{DegEtAl}
in our setting
as follows.
\begin{restatable}{theorem}{TheorInsDegEtAlPres}\label{theor:boolins}
  Let $\p, \psi$ be insensitive
  formulas.  Then
  $\lnot \p$, $\exists x \p$, and $\p \land \psi$ are insensitive.
\end{restatable}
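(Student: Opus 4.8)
The plan is to establish each of the three closure properties separately, in every case reducing to the insensitivity of the immediate subformula(s) by unfolding the satisfaction relation at the initial time point $0$. Throughout, I would rely on two features of the insensitive extension $\FEmfC$: that it depends only on the finite trace $\Fmf$ and not on the chosen assignment, and that it shares the very same domain $\Delta$ as $\Fmf$ (which follows from the definition of the end extension, where $\Delta^{\FEmfC} = \Delta = \Delta^{\Fmf}$). Since insensitivity is a biconditional quantified over all finite traces and all assignments, I can instantiate it freely, in particular with modified assignments.

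First I would treat negation and conjunction, which are entirely mechanical. For $\lnot \p$, fixing $\Fmf$ and $\assign$, I observe that $\Fmf \mdl^{\assign} \lnot \p$ holds iff $\Fmf \mdl^{\assign} \p$ fails, which by the insensitivity of $\p$ is equivalent to the failure of $\FEmfC \mdl^{\assign} \p$, i.e.\ to $\FEmfC \mdl^{\assign} \lnot \p$; thus the semantics of negation commutes directly with the defining biconditional. For $\p \land \psi$, I note that $\Fmf \mdl^{\assign} \p \land \psi$ iff $\Fmf \mdl^{\assign} \p$ and $\Fmf \mdl^{\assign} \psi$, and then apply the insensitivity of $\p$ and of $\psi$ to the \emph{same} $\Fmf$ and $\assign$ (hence the same extension $\FEmfC$), obtaining $\FEmfC \mdl^{\assign} \p$ and $\FEmfC \mdl^{\assign} \psi$, i.e.\ $\FEmfC \mdl^{\assign} \p \land \psi$.

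The existential case is the only one requiring a little care. Here $\Fmf \mdl^{\assign} \exists x \p$ holds iff there is $d \in \Delta$ with $\Fmf \mdl^{\assign[x \mapsto d]} \p$. Because the insensitivity hypothesis ranges over \emph{all} assignments, I may apply it with $\assign[x \mapsto d]$ to get $\Fmf \mdl^{\assign[x \mapsto d]} \p$ iff $\FEmfC \mdl^{\assign[x \mapsto d]} \p$. The two observations recorded above are exactly what makes this go through: since $\FEmfC$ does not depend on the assignment, the same infinite trace serves as witness-extension for every $d$; and since $\Delta^{\FEmfC} = \Delta$, the existential quantifier ranges over identical sets of witnesses on both sides. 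Combining these yields $\FEmfC \mdl^{\assign} \exists x \p$, closing the case.

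The main (and only) obstacle is to make the domain-equality and assignment-independence of $\FEmfC$ explicit in the existential step, so that the witness $d$ and the extension can be decoupled; once these are stated, all three cases follow by a routine unfolding of the semantics. I expect no difficulty beyond this bookkeeping.
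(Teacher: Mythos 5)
Your proof is correct and follows essentially the same route as the paper's: unfold the semantics of $\lnot$, $\land$, and $\exists$, and apply the insensitivity of the subformula(s), using a modified assignment $\assign[x \mapsto d]$ in the existential case. Your explicit remarks that $\FEmfC$ is determined by $\Fmf$ alone and shares its domain $\Delta$ are exactly the (implicit) facts the paper's argument relies on, so this is just a slightly more careful write-up of the same proof.
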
%
\begin{proof}
  Let $\Fmf$ be a finite trace and $\assign$ be an assignment.  For
  $\lnot \p$, we have that $\Fmf \mdl^{\assign} \lnot \p$ iff
  $\Fmf \not \mdl^{\assign} \p$.  Since $\p$ is insensitive by
  hypothesis, this means that $\FEmf \not \mdl^{\assign} \p$.
  Therefore, $\lnot \p$ is insensitive as well.  For $\exists x \p$,
  we have that $\Fmf \mdl^{\assign} \exists x \p$ iff
  $\Fmf \mdl^{\assign[x \mapsto d]} \p$, for some $d \in \Delta$.
  Given that $\p$ is insensitive, this is equivalent to
  $\FEmf \mdl^{\assign[x \mapsto d]} \p$, for some $d \in \Delta$.
  That is, $\FEmf \mdl^{\assign} \exists x \p$, and so $\exists x \p$
  is insensitive.  For $\p \land \psi$, we have that
  $\Fmf \mdl^{\assign} \p \land \psi$ is equivalent to
  $\Fmf \mdl^{\assign} \p$ and $\Fmf \mdl^{\assign} \psi$.  Since both
  $\p$ and $\psi$ are assumed to be insensitive, the previous step is
  equivalent to: $\FEmf \mdl^{\assign} \p$ and
  $\FEmf \mdl^{\assign} \psi$, i.e.,
  $\FEmf \mdl^{\assign} \p \land \psi$.
\end{proof}
%

Concerning temporal operators, in~\cite{DegEtAl} it is shown how
several standard temporal patterns derived from the declarative
process modelling language~\textsc{declare}~\cite{AalPes} are
insensitive.  On the other hand, negation affects the insensitivity of
temporal formulas.  For instance, given a non temporal $\QTLfr{\Until}{}{}$
formula
$\psi$, we have that
$\D^{+} \psi$ is insensitive while $\D^{+} \lnot \psi$ is not.
Dually, $\B^{+} \lnot \psi$ is insensitive, while $\B^{+} \psi$ is
not.  Therefore, if a $\QTLfr{\Until}{}{}$ formula $\varphi$ is insensitive, it
cannot be concluded that formulas of the form $\D^{+} \varphi$ or
$\B^{+} \varphi$ are insensitive.

Finally,
we have that insensitivity is sufficient to ensure that if formulas
are equivalent on infinite traces, then they are equivalent on finite
traces.
\begin{restatable}{theorem}{Theoreminsonlyif}\label{thm:equivinsif}
For all insensitive
formulas $\p, \psi$, $\equivinf{\p}{\psi}$ implies $\equivfin{\p}{\psi}$.  	
\end{restatable}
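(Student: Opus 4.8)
The plan is to reduce finite-trace equivalence to infinite-trace equivalence by passing through the insensitive extension. First I would fix an arbitrary finite trace $\Fmf$ and an arbitrary assignment $\assign$, which reduces the goal $\equivfin{\p}{\psi}$ to establishing the single biconditional $\Fmf \mdl^{\assign} \p \Leftrightarrow \Fmf \mdl^{\assign} \psi$.

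The crucial observation is that the insensitive extension $\FEmfC$ of $\Fmf$ is itself an infinite trace, and that insensitivity supplies, for each of the two formulas, a biconditional linking satisfaction on $\Fmf$ with satisfaction on $\FEmfC$. Concretely, since $\p$ is insensitive, $\Fmf \mdl^{\assign} \p \Leftrightarrow \FEmfC \mdl^{\assign} \p$; and since $\psi$ is insensitive, $\Fmf \mdl^{\assign} \psi \Leftrightarrow \FEmfC \mdl^{\assign} \psi$.

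Next, because $\FEmfC$ is an infinite trace and $\equivinf{\p}{\psi}$ holds by hypothesis, we obtain $\FEmfC \mdl^{\assign} \p \Leftrightarrow \FEmfC \mdl^{\assign} \psi$. Chaining these three biconditionals then yields $\Fmf \mdl^{\assign} \p \Leftrightarrow \Fmf \mdl^{\assign} \psi$, as required; since $\Fmf$ and $\assign$ were arbitrary, this establishes $\equivfin{\p}{\psi}$.

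I do not expect a genuine obstacle here: essentially all the content is already packaged into the definition of insensitivity, which provides the \emph{full} biconditional between a finite trace and its insensitive extension (in contrast to the merely one-directional $\finit{\Rightarrow}{\exists}$ property that insensitive formulas also enjoy, and which by itself would not suffice). The only points requiring care are that one must invoke both directions of insensitivity, for both $\p$ and $\psi$, and that the transfer of equivalence is legitimate precisely because $\FEmfC$ lies within the class of infinite traces over which $\equivinf$ quantifies.
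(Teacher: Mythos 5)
Your proof is correct and follows essentially the same route as the paper's: both pass through the insensitive extension $\FEmfC$, using insensitivity of $\p$ and $\psi$ to move between $\Fmf$ and $\FEmfC$, and the hypothesis $\equivinf{\p}{\psi}$ on the infinite trace $\FEmfC$ to transfer satisfaction. The only cosmetic difference is that you chain three biconditionals at once, whereas the paper argues one implication and obtains the converse by swapping $\p$ and $\psi$.
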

\begin{proof}
  Given a finite trace $\Fmf$ and an assignment $\assign$, if
  $\Fmf \models^{\assign} \p$ then, as $\p$ is insensitive,
  $\FEmf \models^{\assign} \p$.  By assumption, $\equivinf{\p}{\psi}$,
  so $\FEmf \models^{\assign} \psi$.  As $\psi$ is insensitive,
  $\FEmf \models^{\assign} \psi$ implies
  $\Fmf \models^{\assign} \psi$.  The converse direction is obtained
  by swapping $\p$ and $\psi$.
\end{proof}

Since $\bot$ is insensitive, we obtain the following immediate
corollary of the previous result.
%
\begin{corollary}
  All insensitive formulas satisfiable on finite traces are
  satisfiable on infinite traces.
\end{corollary}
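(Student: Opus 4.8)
The plan is to derive this as a contrapositive consequence of Theorem~\ref{thm:equivinsif}, exploiting the fact that the non-temporal formula $\bot$ is insensitive. First I would record the elementary correspondence between unsatisfiability and $\bot$-equivalence: a $\QTLfr{\Until}{}{}$ formula $\varphi$ is unsatisfiable on infinite traces exactly when $\equivinf{\varphi}{\bot}$, and unsatisfiable on finite traces exactly when $\equivfin{\varphi}{\bot}$. This is immediate from the semantics, since $\bot$ is satisfied in no trace under any assignment, so $\bot \mdl \varphi$ holds vacuously in both trace classes, while $\varphi \modelsinf \bot$ (respectively $\varphi \modelsfin \bot$) says precisely that $\varphi$ is falsified in every infinite (respectively finite) trace under every assignment.

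Next, assuming $\varphi$ is insensitive and unsatisfiable on infinite traces, I would obtain $\equivinf{\varphi}{\bot}$ from the correspondence above. As $\bot$ is non-temporal, it is insensitive; hence both $\varphi$ and $\bot$ lie in the class covered by Theorem~\ref{thm:equivinsif}, and that theorem turns $\equivinf{\varphi}{\bot}$ into $\equivfin{\varphi}{\bot}$. By the correspondence again, $\varphi$ is then unsatisfiable on finite traces. Taking the contrapositive yields exactly the corollary: every insensitive formula satisfiable on finite traces is satisfiable on infinite traces.

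As an alternative, fully self-contained route, I would argue straight from the definition of insensitivity: if $\Fmf \mdl^{\assign} \varphi$ for some finite trace $\Fmf$ and assignment $\assign$, then insensitivity gives $\FEmfC \mdl^{\assign} \varphi$, and $\FEmfC$ is by construction an infinite trace, so $\varphi$ is satisfiable on infinite traces. I do not anticipate any genuine obstacle, since the statement is essentially immediate once Theorem~\ref{thm:equivinsif} is in place; the only point deserving explicit mention is the translation between unsatisfiability and equivalence to $\bot$ in the first route, or, in the direct route, the observation that the insensitive extension $\FEmfC$ is indeed an infinite trace.
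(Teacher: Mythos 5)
Your first route is exactly the paper's proof: the corollary is presented there as an immediate consequence of Theorem~\ref{thm:equivinsif} together with the observation that $\bot$ is insensitive, which is precisely your contrapositive argument translating unsatisfiability into equivalence with $\bot$. Your alternative direct route (insensitivity gives $\FEmfC \mdl^{\assign} \varphi$, and the insensitive extension $\FEmfC$ is an infinite trace) is also sound and corresponds to the paper's earlier remark that every insensitive formula is $\finit{\Rightarrow}{\exists}$.
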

However, the converse directions of the above results do not not hold,
as witnessed, e.g., by formula~$\vartheta_{i}$
(cf.~Section~\ref{sec:boundprop}), which is trivially insensitive, but
satisfiable only on infinite traces.
We can obtain the converse directions by using our
Theorem~\ref{thm:equivconverse}.  For instance,
$\D^{+} P(x) \lor \D^{+} R(x)$ and $\D^{+} (P(x) \lor R(x))$ are
insensitive and $\infinit{}{\exists}$ formulas for which equivalence
on finite and infinite traces coincide.

\subsection{Verification}

In this section we show how our comparison between finite and infinite
traces can be related to the literature on temporal logics for
verification.  In particular, we establish connections between the
finite and infinite trace properties, introduced in
Section~\ref{sec:fininf}, and: $(i)$ the definition of safety in
$\LTL$ on infinite traces~\cite{Sis,BaiKat}; $(ii)$ maxims related to
monitoring procedures in runtime verification~\cite{BauEtAl}.

\subsubsection{Safety}
Recall that a safety property intuitively guarantees that ``bad
things'' never happen during the execution of a program.
In verification, $\LTL$ is often used as a specification language for
such properties, and the notion of safety is defined accordingly on
infinite traces~\cite{Sis,BaiKat}.  In the rest of this section, we
will thus restrict ourselves to $\LTL$.  A typical example of an
$\LTL$ formula used to specify a safety property is represented by
$\Box \lnot P$, where $P$ is an atom standing for an action or task.
Dual to safety properties are \emph{co-safety properties}, expressing
that ``good things'' will eventually happen in the execution of a
program.  The $\LTL$ formula $\Diamond P$ is a standard example of a
formula specifying a co-safety property.

To fix notions that will be used in the rest of this section, we start
by recalling the definitions of \emph{safety} and \emph{co-safety
  fragments} of $\LTL$.
%
The \emph{$\LTL$ safety formulas}~\cite{Sis} are defined as the $\LTL$
formulas obtained from $\bot$, $\top$, and \emph{literals} (i.e.,
\emph{propositional letters} $P$, or negated propositional letters
$\lnot P$), by applying conjunction $\land$, disjunction $\lor$,
strong next $\Next$, and reflexive release $\ReleaseP$ operators.
The \emph{$\LTL$ co-safety formulas}~\cite{KupVar99} are
dually defined
as those $\LTL$ formulas obtained from $\bot$, $\top$, and literals,
by applying conjunction $\land$, disjunction $\lor$, strong next
$\Next$, and reflexive until $\UntilP$ operators.
It is known~\cite{Sis,KupVar99,CamEtAl18a} that every safety formula
$\p$ \emph{expresses a safety property}, i.e.,
for every infinite trace $\Imf$ such that $\Imf \not \models \p$,
there exists $\Fmf \in \Pre(\Imf)$ so that, for all
$\Imf' \in \Ext(\Fmf)$, it holds that $\Imf' \not \models \p$.  We
call such a finite trace $\Fmf$ a \emph{bad prefix for $\p$}, and we
define $\BadPre(\p)$ as the set of bad prefixes for $\p$.  On the
other hand, every co-safety formula $\p$ \emph{expresses a co-safety
  property}, i.e., for every infinite trace $\Imf$ satisfying $\p$,
there is a \emph{good prefix for $\p$}, that is, a finite prefix
$\Fmf$ of $\Imf$ such that every infinite extension $\Imf'$ of $\Fmf$
satisfies $\p$.
%
Given the equivalence, $\Next \lnot \p \equiv_{i} \lnot \Next \p$,
holding on infinite traces, we have that the $\LTL$ formulas that are,
respectively, in the $\Release$- and $\Until$-fragments defined
in~Section~\ref{sec:fininf} express, respectively, safety and
co-safety properties.

In order to establish connections between safety properties and finite
traces semantics, we now require, following~\cite{BaiKat}, further
definitions and notation.
Let $\mathsf{N}^{0}_{\mathsf{P}}$ be the subset of $\NPr$ containing
$0$-ary predicates, i.e., propositional letters.
Given a suborder
$\Tmf$ of $(\mathbb{N}, <)$ of the form $[0, \infty)$ or $[0, l]$,
with $l \in \mathbb{N}$,
a \emph{trace} is now viewed simply as a sequence
$\Mmf = (\Mmc_{n})_{n \in \Tmf}$ with
$\Mmc_{n} \in 2^{\mathsf{N}^{0}_{\mathsf{P}}}$.
The notion of an $\LTL$ formula $\p$ being satisfied in a trace
$\Mmf$, $\Mmf \models \p$, is given similarly as above
(cf. Section~\ref{sec:semantics}), and we let $\Traceinf(\p)$,
respectively, $\Tracefin(\p)$, be the set of infinite, respectively,
finite, traces satisfying $\p$.
The sets of prefixes of a trace $\Mmf$ and of extensions of a finite trace $\Fmf$
are defined as in Section~\ref{sec:semantics}.
Moreover, given a set of infinite traces, $\Cmc$, and of finite
traces, $\Dmc$, we define
\[
\Pre(\Cmc) = \bigcup_{\Imf \in \Cmc} \Pre(\Imf),
\qquad
\Ext(\Dmc) = \bigcup_{\Fmf \in \Dmc} \Ext(\Fmf).
\]
For an $\LTL$ formula $\p$, we may write $\Pre(\p)$ and $\Ext(\p)$ in
place of, respectively, $\Pre(\Traceinf(\p))$ and
$\Ext(\Tracefin(\p))$.
%
%
Moreover, given a set of infinite traces $\Cmc$, we define the
\emph{closure of $\Cmc$} as
$\Clo(\Cmc) = \{ \Imf \mid \Pre(\Imf) \subseteq \Pre(\Cmc) \}$.
Consider, for instance,
the set of infinite traces
$\Cmc = \bigcup_{i \in [0, \infty)} \{ \Imf^{i} \}$, where
$\Imf^{i} = ( \Imc^{i}_{n})_{n \in [0, \infty)}$ is such that
$\Imc^{i}_{n} = \{ P \}$, if $i = n$, and $\Imc^{i}_{n} = \emptyset$,
otherwise.  We have that the infinite trace
$\Imf = (\Imc_{n})_{n \in [0, \infty)}$ defined so that
$\Imc_{n} = \emptyset$, for every $i \in [0, \infty)$, is such that
$\Imf \not \in \Cmc$, whereas $\Imf \in \Clo(\Cmc)$.
The following preliminary lemma adapts to our setting a result
presented in~\cite[Lemma 3.25]{BaiKat}.
\begin{lemma}
\label{lemma:badprefvar}
Let $\p, \chi$ be
$\LTL$
formulas, with
$\p \in \LTL(\finit{\Rightarrow}{\exists}) \cap \LTL(\infinit{\Rightarrow}{\forall})$
and $\chi$ expressing a safety property. It holds that:
\[
	\p \modelsinf \chi \quad \text{iff} \quad \Tracefin(\p) \cap \BadPre(\chi) = \emptyset.
\]
\end{lemma}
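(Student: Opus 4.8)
The plan is to prove the biconditional by establishing each direction separately, each by contradiction, with the left-to-right direction drawing on the $\finit{\Rightarrow}{\exists}$ hypothesis on $\p$ and the right-to-left direction on the $\infinit{\Rightarrow}{\forall}$ hypothesis together with the fact that $\chi$ expresses a safety property. Throughout, since we work in propositional $\LTL$, assignments play no role and can be suppressed.

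For the forward direction, I would assume $\p \modelsinf \chi$ and suppose towards a contradiction that there is a finite trace $\Fmf \in \Tracefin(\p) \cap \BadPre(\chi)$. From $\Fmf \models \p$ and the $\finit{\Rightarrow}{\exists}$ property of $\p$, there is an extension $\Imf \in \Ext(\Fmf)$ with $\Imf \models \p$; applying $\p \modelsinf \chi$ yields $\Imf \models \chi$. But $\Fmf \in \BadPre(\chi)$ means, by definition of a bad prefix, that every infinite extension of $\Fmf$ falsifies $\chi$, so in particular $\Imf \not\models \chi$ --- a contradiction. Hence the intersection is empty.

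For the converse direction, I would assume $\Tracefin(\p) \cap \BadPre(\chi) = \emptyset$ and suppose towards a contradiction that $\p \modelsinf \chi$ fails, i.e.\ there is an infinite trace $\Imf$ with $\Imf \models \p$ but $\Imf \not\models \chi$. Since $\chi$ expresses a safety property, $\Imf \not\models \chi$ guarantees a bad prefix $\Fmf \in \Pre(\Imf)$ for $\chi$, so that $\Fmf \in \BadPre(\chi)$. On the other hand, from $\Imf \models \p$ and the $\infinit{\Rightarrow}{\forall}$ property of $\p$, every prefix of $\Imf$ satisfies $\p$; in particular $\Fmf \models \p$, so $\Fmf \in \Tracefin(\p)$. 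Then $\Fmf$ lies in $\Tracefin(\p) \cap \BadPre(\chi)$, contradicting the emptiness assumption, and therefore $\p \modelsinf \chi$.

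The argument is essentially a matching of quantifiers, and the only real care needed is to invoke exactly the right one-directional property in each half: $\finit{\Rightarrow}{\exists}$ to turn a finite model of $\p$ into an infinite one (so that the infinite-trace implication $\p \modelsinf \chi$ can be applied), and $\infinit{\Rightarrow}{\forall}$ to transfer an infinite model of $\p$ down to all of its prefixes (so that the bad prefix extracted from the safety of $\chi$ is itself a model of $\p$). I expect the main conceptual point --- rather than a genuine obstacle --- to be recognising that the safety assumption on $\chi$ is precisely what supplies a finite witness $\Fmf \in \Pre(\Imf)$ in the converse direction; without it one would only know $\Imf \not\models \chi$ and could not descend to a finite trace lying in $\BadPre(\chi)$.
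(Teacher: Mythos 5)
Your proposal is correct and matches the paper's proof essentially step for step: the forward direction uses $\finit{\Rightarrow}{\exists}$ to extend a finite model of $\p$ to an infinite one that a bad prefix then forces to falsify $\chi$, and the converse uses safety of $\chi$ to extract a bad prefix and $\infinit{\Rightarrow}{\forall}$ to show that prefix satisfies $\p$. The only difference is presentational — you argue by contradiction where the paper argues by contraposition — which is logically the same argument.
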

\begin{proof}
$(\Rightarrow)$
By contraposition, assume there exists $\Fmf \in \Tracefin(\p) \cap \BadPre(\chi)$.
We have that $\Fmf \models \p$, and
since
$\p$ is $\finit{\Rightarrow}{\exists}$, there exists $\Imf \in \Ext(\Fmf)$ such that $\Imf \models \p$.
Given that $\Fmf \in \BadPre(\chi)$ and that $\Fmf \in \Pre(\Imf)$, we have in particular that $\Imf \not \models \chi$.
Therefore, $\p \not \modelsinf \chi$.

$(\Leftarrow)$
By contraposition, assume that $\p \not \modelsinf \chi$, i.e., there exists an infinite trace $\Imf$ such that $\Imf \models \p$ and $\Imf \not \models \chi$.
Since $\chi$ expresses a safety property, there exists $\hat{\Fmf} \in \Pre(\Imf)$ such that,
for all $\Imf' \in \Ext(\hat{\Fmf})$,
$\Imf' \not \models \chi$.
Thus,
$\hat{\Fmf} \in \BadPre(\chi)$.
Moreover, since $\p$ is $\infinit{\Rightarrow}{\forall}$, we have that every $\Fmf \in \Pre(\Imf)$ is such that $\Fmf \models \p$. Hence, $\hat{\Fmf} \in \Tracefin(\p)$, and so $\Tracefin(\p) \cap \BadPre(\chi) \neq \emptyset$.
\end{proof}

Using the previous result, we can show the following characterisation
of two $\LTL$ formulas being equivalent on finite traces in terms of
safety properties, under the assumption that they satisfy the
properties $\finit{\Rightarrow}{\exists}$ and
$\infinit{\Rightarrow}{\forall}$, as it is the case for $\LTL$
$\ReleaseP$-formulas (cf. Section~\ref{sec:fininf}).  This
characterisation mirrors the one given in~\cite[Corollary
3.29]{BaiKat}, but it is presented here in terms of finite traces and
$\LTL$ formulas, as opposed to the one in~\cite{BaiKat} based on sets
of traces and transition systems.
\begin{proposition}
\label{prop:finequivsafe}
For every
$\p, \psi \in \LTL(\finit{\Rightarrow}{\exists}) \cap \LTL(\infinit{\Rightarrow}{\forall})$,
the following are equivalent:
	\begin{itemize}
		\item[$(i)$] $\equivfin{\p}{\psi}$;
		\item[$(ii)$] for every $\LTL$ formula $\chi$ expressing a safety property,
							$\p \modelsinf \chi$ iff $\psi \modelsinf \chi$.
	\end{itemize}
\end{proposition}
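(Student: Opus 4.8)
The plan is to prove the two implications separately, using Lemma~\ref{lemma:badprefvar} as the bridge between the infinite-trace entailment $\modelsinf$ and the finite-trace semantics via bad prefixes. Since both $\p$ and $\psi$ lie in $\LTL(\finit{\Rightarrow}{\exists}) \cap \LTL(\infinit{\Rightarrow}{\forall})$, the lemma applies to each of them against any $\chi$ expressing a safety property, yielding $\alpha \modelsinf \chi$ iff $\Tracefin(\alpha) \cap \BadPre(\chi) = \emptyset$, for $\alpha \in \{ \p, \psi \}$. This is exactly the reformulation that turns the (symmetric) condition $(ii)$ into a statement about the disjointness of $\Tracefin(\cdot)$ from bad-prefix sets, mirroring~\cite[Corollary 3.29]{BaiKat}.

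For $(i) \Rightarrow (ii)$ I would argue directly. Assuming $\equivfin{\p}{\psi}$, i.e. $\Tracefin(\p) = \Tracefin(\psi)$, the conditions $\Tracefin(\p) \cap \BadPre(\chi) = \emptyset$ and $\Tracefin(\psi) \cap \BadPre(\chi) = \emptyset$ coincide for every safety $\chi$, and by Lemma~\ref{lemma:badprefvar} this is precisely $\p \modelsinf \chi$ iff $\psi \modelsinf \chi$. This direction is routine.

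The substantive direction is $(ii) \Rightarrow (i)$, which I would prove by contraposition. Assuming $\p \not\equiv_{f} \psi$, by symmetry I may fix a finite trace $\Fmf$ with $\Fmf \models \p$ and $\Fmf \not\models \psi$, and then construct a single safety formula witnessing the failure of $(ii)$. The natural candidate is the formula $\chi_{\Fmf}$ asserting that a trace does not start with $\Fmf$: writing $\Fmf = (\Fmc_{0}, \ldots, \Fmc_{l})$ over the propositional letters occurring in $\p, \psi$, I set $\chi_{\Fmf} = \bigvee_{i=0}^{l} \Next^{i} \delta_{i}$, where each $\delta_{i}$ is the disjunction of literals satisfied exactly when the $i$-th state differs from $\Fmc_{i}$. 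One checks that $\chi_{\Fmf}$ expresses a safety property, with $\BadPre(\chi_{\Fmf})$ being precisely the finite traces that have $\Fmf$ as a prefix; in particular $\Fmf \in \BadPre(\chi_{\Fmf})$.

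It then remains to establish two facts: $(a)$ $\p \not\modelsinf \chi_{\Fmf}$, which follows from Lemma~\ref{lemma:badprefvar} since $\Fmf \in \Tracefin(\p) \cap \BadPre(\chi_{\Fmf})$, so the intersection is nonempty; and $(b)$ $\psi \modelsinf \chi_{\Fmf}$, i.e. $\Tracefin(\psi) \cap \BadPre(\chi_{\Fmf}) = \emptyset$. I expect $(b)$ to be the main obstacle, and it is exactly where both semantic properties are needed: given a supposed $\Gmf \in \Tracefin(\psi)$ having $\Fmf$ as a prefix, the property $\finit{\Rightarrow}{\exists}$ yields an infinite $\Imf \in \Ext(\Gmf)$ with $\Imf \models \psi$; since $\Fmf \in \Pre(\Gmf) \subseteq \Pre(\Imf)$, the property $\infinit{\Rightarrow}{\forall}$ forces $\Fmf \models \psi$, contradicting the choice of $\Fmf$. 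Hence $(b)$ holds, so $\chi_{\Fmf}$ separates $\p$ and $\psi$ and $(ii)$ fails, completing the contrapositive. The only care required is to confirm that $\chi_{\Fmf}$ genuinely expresses a safety property and that its bad-prefix set is as claimed; the logical heart of the argument is the two-line interplay of $\finit{\Rightarrow}{\exists}$ and $\infinit{\Rightarrow}{\forall}$ driving $(b)$.
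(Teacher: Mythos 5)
Your proof is correct, and on the substantive direction it takes a genuinely different route from the paper's. The direction $(i) \Rightarrow (ii)$ is essentially the same in both: two applications of Lemma~\ref{lemma:badprefvar}. For $(ii) \Rightarrow (i)$, however, the paper does not construct an explicit witness; it works with the asymmetric inclusions $\Tracefin(\p) \subseteq \Tracefin(\psi)$, passes to the topological closure $\Clo(\Traceinf(\psi))$, invokes the external definability result (Claim~\ref{claim:clo}, citing~\cite[Corollary 14]{MarEtAl14}) to obtain an $\LTL$ formula $\psi'$ with $\Traceinf(\psi') = \Clo(\Traceinf(\psi))$, checks via Claim~\ref{cla:closafe} that $\psi'$ expresses a safety property, applies $(ii)$ to it, and then unwinds a chain of inclusions among $\Tracefin$- and $\Pre$-sets using $\finit{\Rightarrow}{\exists}$ and $\infinit{\Rightarrow}{\forall}$. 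You instead argue by contraposition: from a finite trace $\Fmf$ with $\Fmf \models \p$ and $\Fmf \not\models \psi$ you build the concrete safety formula $\chi_{\Fmf}$ (``the run does not begin with $\Fmf$'') and show that it separates $\p$ from $\psi$ under $\modelsinf$, the two semantic properties entering only in the short argument that no finite model of $\psi$ can lie in $\BadPre(\chi_{\Fmf})$. Your route is more elementary and self-contained: it needs neither the topology nor the appeal to~\cite{MarEtAl14}, which the paper requires precisely because $(ii)$ quantifies over $\LTL$ \emph{formulas} rather than arbitrary safety properties --- your explicit $\chi_{\Fmf}$ sidesteps that definability issue entirely. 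What the paper's argument buys in exchange is the conceptual link between safety and topological closedness, keeping the proposition aligned with the transition-system formulation of~\cite{BaiKat}. One detail to tighten in your write-up: $\chi_{\Fmf}$ can only speak about the finitely many letters $\Sigma$ occurring in $\p$ and $\psi$, so ``$\Gmf$ has $\Fmf$ as a prefix'' and the step ``$\Fmf \in \Pre(\Gmf) \subseteq \Pre(\Imf)$, hence $\infinit{\Rightarrow}{\forall}$ forces $\Fmf \models \psi$'' must be read modulo agreement on $\Sigma$: what $\infinit{\Rightarrow}{\forall}$ literally yields is that the prefix $\Gmf_{l}$ of $\Imf$ satisfies $\psi$, and since $\Gmf_{l}$ agrees with $\Fmf$ on every letter of $\psi$, $\Fmf \models \psi$ follows. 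This relativisation (together with the degenerate case $\Sigma = \emptyset$, where $\chi_{\Fmf} \equiv \bot$ and the argument still goes through) is routine, so it is a presentational fix rather than a gap.
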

\begin{proof}
We are going to show that the following statements are equivalent:
	\begin{itemize}
		\item[$(i')$] $\Tracefin(\p) \subseteq \Tracefin(\psi)$;
		\item[$(ii')$] for every $\LTL$ formula $\chi$ expressing a safety property,
							$\psi \modelsinf \chi$ implies $\p \modelsinf \chi$.
	\end{itemize}
Then, by swapping $\p$ and $\psi$, the required result will follow.

$(i') \Rightarrow (ii')$
Assume that $\Tracefin(\p) \subseteq \Tracefin(\psi)$ and that, for every $\LTL$ formula $\chi$ expressing a safety property, $\psi \modelsinf \chi$. By Lemma~\ref{lemma:badprefvar}, this is equivalent to $\Tracefin(\psi) \cap \BadPre(\chi) = \emptyset$. Since $\Tracefin(\p) \subseteq \Tracefin(\psi)$, we have $\Tracefin(\p) \cap \BadPre(\chi) = \emptyset$, which means, again by Lemma~\ref{lemma:badprefvar}, that $\p \modelsinf \chi$.

$(ii') \Rightarrow (i')$ Assume that, for every $\LTL$ formula $\chi$
expressing a safety property, $\psi \modelsinf \chi$ implies
$\p \modelsinf \chi$ and consider the set $\Clo(\Traceinf(\psi))$.  It
is known~\cite{BaiKat} that the closure $\Clo(\Cmc)$ of a set of
infinite traces $\Cmc$ corresponds to the \emph{topological closure}
of $\Cmc$ in the topological space
$( ( 2^{\mathsf{N}^{0}_{\mathsf{P}}})^\omega, \tau)$,
  where $\tau$ is the topology
induced by the metric
$\delta \colon
(2^{\mathsf{N}^{0}_{\mathsf{P}}})^{\omega}
\to
\mathbb{R}_{0}$ (where $\mathbb{R}_{0}$ is the set of non-negative
real numbers), defined as follows: $\delta(\Imf, \Imf) = 0$;
$\delta(\Imf, \Imf') = 2^{-n}$, with
$n = \min\{ m \in \mathbb{N} \mid \Imc_{m} \neq \Imc'_{m} \}$, for
$\Imf \neq \Imf'$.  Moreover, as shown in~\cite[Corollary
14]{MarEtAl14}, the following holds.

\begin{claim}\label{claim:clo}
For every $\LTL$ formula $\p$, there is an $\LTL$ formula $\p'$ such that $\Clo(\Traceinf(\p)) = \Traceinf(\p')$.
\end{claim}
%
%
We now require the following claim.

\begin{claim}
\label{cla:closafe}
For every $\LTL$ formula $\chi$, $\chi$ expresses a safety property
iff $\Clo(\Traceinf(\chi)) = \Traceinf(\chi)$.
\end{claim}
\begin{proof}[Proof of Claim~\ref{cla:closafe}]
  Following~\cite{BaiKat}, we call \emph{safety property} a set of
  infinite traces $\Smc$ such that, if $\Imf \not \in \Smc$, then
  there is $\Fmf \in \Pre(\Imf)$ such that, for every
  $\Imf' \in \Ext(\Fmf)$, it holds that $\Imf' \not \in \Smc$.
  From the definitions, we have that $\chi$ expresses a safety
  property iff $\Traceinf(\chi)$ is a safety property.  It is known
  that safety properties coincide with closed sets in the topological
  space $((2^{\mathsf{N}^{0}_{\mathsf{P}}})^{\omega}, \tau)$
  defined above~\cite{BaiKat}.
  The claim then follows from the fact that a set is closed in a
  topological space iff it is equal to its topological closure, and
  the fact that the topological closure of a set $\Smc$ in
  $((2^{\mathsf{N}^{0}_{\mathsf{P}}})^{\omega}, \tau)$
  coincides with $\Clo(\Smc)$.
\end{proof}
We can now finish the proof of the $(ii') \Rightarrow (i')$ direction.
From Claim~\ref{claim:clo} we have that there exists an $\LTL$ formula
$\psi'$ such that $\Clo(\Traceinf(\psi)) = \Traceinf(\psi')$.  Since
$\Clo(\cdot)$ is a closure operator, we have that
\[
\Clo(\Traceinf(\psi')) = \Clo(\Clo(\Traceinf(\psi))) = \Clo(\Traceinf(\psi)) = \Traceinf(\psi').
\]
Thus, by Claim~\ref{cla:closafe}, we have that $\psi'$ expresses a
safety property.  Moreover, the fact that $\Clo(\cdot)$ is a closure
operator implies also that
$\Traceinf(\psi) \subseteq \Clo(\Traceinf(\psi)) = \Traceinf(\psi')$,
thus $\psi \modelsinf \psi'$.  From $(ii')$, we obtain that
$\p \modelsinf \psi'$. Thus,
$\Traceinf(\p) \subseteq \Traceinf(\psi')$ and, by
Claim~\ref{claim:clo},
$\Traceinf(\p) \subseteq \Clo(\Traceinf(\psi))$.  Since $\Pre(\cdot)$
is monotonic, i.e., for every set of infinite traces $\Cmc, \Cmc'$
such that $\Cmc \subseteq \Cmc'$, we have
$\Pre(\Cmc) \subseteq \Pre(\Cmc')$, it follows that
$\Pre(\Traceinf(\p)) \subseteq \Pre(\Clo(\Traceinf(\psi)))$.
Moreover, since $\p$ is $\finit{\Rightarrow}{\exists}$, it holds that
$\Tracefin(\p) \subseteq \Pre(\Traceinf(\p))$, and since $\psi$ is
$\infinit{\Rightarrow}{\forall}$, we have
$\Pre(\Traceinf(\psi)) \subseteq \Tracefin(\psi)$.  Finally, it can be
seen that $\Pre(\Clo(\Traceinf(\psi))) = \Pre(\Traceinf(\psi))$.
Hence, we obtain:
\[
  \Tracefin(\p) \subseteq \Pre(\Traceinf(\p)) \subseteq
  \Pre(\Clo(\Traceinf(\psi))) = \Pre(\Traceinf(\psi)) \subseteq
  \Tracefin(\psi).  \qedhere
\]
\end{proof}

\subsubsection{Runtime verification maxims}

We recall that in runtime verification the task is to evaluate a
property with respect to the current history (which is finite at each
given instant) of a dynamic system, and to check whether this property
is satisfied in all its possible future evolutions
~\cite{BauEtAl,BaaLip,DegEtAl2}.
%
%
Here we discuss the relationship between our semantic conditions and
the \emph{maxims} for runtime verification in (variants of) $\LTL$
introduced by~\citeauthor{BauEtAl}~\citeyear{BauEtAl}, which relate
finite trace semantics to the infinite case.  Although the authors
consider also semantics for $\LTL$ that allow for more than two
truth-values, in this section we will restrict our attention to $\LTL$
interpreted on finite traces only.
\citeauthor{BauEtAl} suggest that any $\LTL$ semantics to be used in
runtime verification should satisfy, for every $\LTL$ formula $\p$,
the 
maxims
of \emph{impartiality} and \emph{anticipation},
defined as follows.
\begin{description}
\item[Impartiality] For every finite trace $\Fmf$,
  \begin{center}
    $\Fmf \models \p \Rightarrow \forall \Imf \in \Ext(\Fmf) . \Imf
    \models \p$ \quad and \quad
    $\Fmf \not \models \p \Rightarrow \forall \Imf \in \Ext(\Fmf)
    . \Imf \not \models \p$.
  \end{center}
\item[Anticipation] For every finite trace $\Fmf$,
  \begin{center}
    $\Fmf \models \p \Leftarrow \forall \Imf \in \Ext(\Fmf) . \Imf
    \models \p$ \quad and \quad
    $\Fmf \not \models \p \Leftarrow \forall \Imf \in \Ext(\Fmf)
    . \Imf \not \models \p$.
  \end{center}
\end{description}
%

It can be easily seen that $\LTL$ on finite traces does not satisfy,
for every $\LTL$ formula $\p$, impartiality and anticipation.  An
example of a formula that does not satisfy impartiality is $\B^{+} P$,
whereas $\D \top$ violates anticipation.
However, the properties of impartiality and anticipation can be used
to define the corresponding sets of $\LTL$ formulas that satisfy them.
Indeed, we have that impartiality is captured by
$\LTL(\finit{\Rightarrow}{\forall}) \cap
\LTL(\finit{\Leftarrow}{\exists})$, while anticipation corresponds to
$\LTL(\finit{\Leftarrow}{\forall}) \cap
\LTL(\finit{\Rightarrow}{\exists})$.  Therefore, any set of $\LTL$
formulas satisfying both impartiality and anticipation
is included in the intersection
$\LTL(\finit{}{\forall}) \cap \LTL(\finit{}{\exists})$.
Concerning the possibility to
syntactically characterise these formulas, we have that, due to
Lemmas~\ref{thm:diamond} and~\ref{thm:box}, impartiality and
anticipation are not guaranteed to be preserved for
$\UntilP$-
or
$\ReleaseP$-formulas.



\section{Conclusion}
\label{sec:conc}

We investigated first-order temporal logic
on finite traces, by comparing its semantics with the usual one based
on infinite traces, and by studying the complexity of formula
satisfiability in some of its decidable fragments.

In an effort to systematically clarify the correlations between finite
vs. infinite reasoning we introduced various semantic conditions that
allow to formally specify when it is possible to blur the distinction
between finite and infinite traces.
%
Grammars for $\QTLfr{\Until}{}{}$ formulas satisfying some of these
conditions have been provided as well.  In particular, we have shown
that for $\UntilP$- and $\ReleaseP$-formulas, equivalence over finite
and infinite traces coincide.  Moreover, we have shown that, for
the class of $\UntilP\ReleaseP$-formulas,
satisfiability is preserved from finite to infinite traces.

Concerning the complexity of the satisfiability problem in decidable
fragments on finite traces, we have shown that the constant-free
one-variable monadic fragment $\QTLfr{\Until}{1,mo}{\not c}$, the
one-variable fragment $\QTLfr{\Until}{1}{}$, the monadic monodic
fragment $\QTLfr{\Until}{mo}{\monodic}$, and the the two-variable
monodic fragment $\QTLfr{\Until}{2}{\monodic}$, while being
$\ExpSpace$-complete over arbitrary finite traces, lower down to
$\NExpTime$-complete when
interpreted on traces with at most $k$ time points.  Similar results
have been shown here for $\TALCf$, a temporal extension of the
description logic $\ALC$, interpreted on finite or $k$-bounded
traces. Moreover, we proved that $\TALCfk$ restricted to global CIs is
$\ExpTime$-complete on traces with at most $k$ time points.

Finally, we have lifted results related to the notion of insensitivity
to infiniteness~\cite{DegEtAl}, introduced in the planning context, to
our first-order setting.  Moreover, we have analysed the connections
between notions from the verification literature (in particular,
safety~\cite{Sis}, as well as the runtime verification maxims of
impartiality and anticipation~\cite{BauEtAl}), and our framework of
semantic conditions relating reasoning over finite and infinite
traces.

As future work,
we are interested in strengthening the results obtained in
Section~\ref{sec:dist}, so to obtain semantic and syntactic conditions
that are both necessary and sufficient (as opposed to sufficient only)
to characterise equivalences on finite and infinite traces.  We
conjecture also that for $\UntilP\forall$- and
$\ReleaseP\exists$-formulas, i.e., $\QTLfr{\Until}{}{}$ formulas in
negation normal form involving only one kind of reflexive temporal
operator (either $\UntilP$ or $\ReleaseP$, respectively), the
equivalences on finite traces coincide with the equivalences on
infinite traces.

Moreover, we plan to to study the axiomatisability of
fragments of
first-order temporal logic on finite traces, and to apply the semantic
conditions introduced in this work to the analysis of monitoring
functions for runtime verification~\cite{BauEtAl,BaaLip,DegEtAl2}.  It
would also be interesting to determine the precise complexity of the
satisfiability problem in $\TALCf$ on finite traces with just global
CIs, as well as in those DLs from the temporal $\DLite$ family for
which this problem remains open~\cite{ArtEtAl19b}.


\begin{acks}
Ozaki is supported by the Research Council of Norway, project number 316022.
\end{acks}


\bibliographystyle{ACM-Reference-Format}
\bibliography{tocl-bibliography}

\newpage

\end{document}